\keywords{Games on graphs, positional strategies, continuous payoffs}
\theoremstyle{plain} 
\begin{document}
\newcommand{\act}{\mathsf{Act}}
\newcommand{\Min}{\mathrm{Min}}
\newcommand{\Max}{\mathrm{Max}}
\newcommand{\colour}{\mathsf{color}}
\newcommand{\play}{\mathrm{play}}
\newcommand{\source}{\mathsf{source}}
\newcommand{\target}{\mathsf{target}}
\newcommand{\col}{\mathrm{col}}
\newcommand{\lab}{\mathsf{lab}}
\newcommand{\dist}{\mathrm{Dist}}
\newcommand{\val}{\mathrm{val}}
\newcommand{\valbf}{\mathbf{val}}
\newcommand{\GDP}{\mathrm{GDP}}
\newcommand{\vphi}{\varphi}
\newcommand{\sh}{\mathbf{s}}
\newcommand{\x}{\mathbf{x}}
\newcommand{\y}{\mathbf{y}}
\newcommand{\g}{\mathbf{g}}
\newcommand{\diam}{\mathsf{diam}}
\newcommand{\Cons}{\mathsf{Cons}}
\newcommand{\Val}{\mathbf{Val}}

\title{Continuous Positional Payoffs\rsuper*}
\titlecomment{{\lsuper*}An extended abstract of this paper was presented at the CONCUR 2021 conference.}

\author[A.~Kozachinskiy]{Alexander Kozachinskiy}	
\address{Centro Nacional de Inteligencia Artificial, Chile \and
    Instituto de Ingenier\'ia Matem\'atica y Computacional, Universidad Cat\'olica de Chile \and
    Instituto Milenio Fundamentos de los Datos, Chile}	
\email{alexander.kozachinskyi@cenia.cl}  
\thanks{The author is funded by ANID - Millennium Science Initiative Program - Code ICN17002, and the National Center for Artificial Intelligence CENIA
FB210017, Basal ANID}	

\tikzset{
  gamegraph/.style={
    transition/.style ={
      ->,
      thick,
      shorten <= 1pt,
      shorten >= 1pt,
      overlay,
        every node/.append style={
          outer sep=5pt,
        },
    },
  },
}





\begin{abstract}
  \noindent What payoffs are positionally determined for deterministic two-player antagonistic games on finite directed graphs? In this paper we study this question for payoffs that are continuous.  The main reason why continuous positionally determined payoffs are interesting is that they include the multi-discounted payoffs.

We show that for continuous payoffs, positional determinacy is equivalent to a simple property called prefix-monotonicity. We provide three proofs of it, using  three major techniques of establishing positional determinacy -- inductive technique, fixed point technique and strategy improvement technique. A combination of these approaches provides us with better understanding of the structure of continuous positionally determined payoffs as well as with some algorithmic results.
\end{abstract}

\maketitle


\section{Introduction}
\label{sec:introduction}

We study two-player turn-based games on finite directed graphs. In these games, two players called Max and Min travel over nodes of a given graph along its edges for infinitely many turns. In each turn, one of the players decides where to go next, and which of the two depends on a predetermined partition of the nodes between the players.

The game is of infinite duration.
As a result of the game, we get an infinite path in our graph. Each infinite path is mapped to a real number called its \emph{reward}, according to some \emph{payoff function} (or, for brevity, a payoff). The larger the reward is the more Max is happy; on the contrary, Min wants to minimize the reward.

We consider only payoffs that are defined through \emph{edge labels}. Namely, we first fix some finite set $A$ of labels. Then we label edges of our game graph by elements of $A$. After this, any bounded function $\vphi\colon A^\omega\to\mathbb{R}$ can be viewed as a payoff in our graph. Namely, it takes an infinite path,  considers an infinite word over $A$ ``written''  on this path, and applies $\vphi$ to this infinite word.

Fix a \emph{strategy} of one of the players (that is, an instruction how to play in all possible developments of the game). If this is a strategy of Max, then its \emph{value} is the infimum of the payoff over all infinite paths that can occur in a play with this strategy. Similarly, if this a strategy of Min, then its value is the supremum of the payoff over all infinite paths that can occur in a play with this strategy.
Values are a standard way of measuring the ``worst-case'' quality of a strategy. 

A strategy of Max is called \emph{optimal} if its value is at least as large as the value of any other Max's strategy. Similarly, a strategy of Min is called \emph{optimal} if its value does not exceed the value of any other Min's strategy.

Observe that the value of any Min's strategy is at least as large as the value of any Max's strategy. A pair $(\sigma,\tau)$ of a Max's strategy $\sigma$ and a Min's strategy $\tau$ is called an \emph{equilibrium} if the value of $\sigma$ \emph{equals} the value of $\tau$. Both strategies appearing in an equilibrium must be optimal -- one proves the optimality of the other. In this paper, we only study so-called \emph{determined} payoffs. These are payoffs that have an equilibrium in all game graphs.

Games that we are studying proceed for infinitely many turns, so strategies in them might be rather complicated. An area of \emph{strategy complexity} classifies payoffs according to their ``simplicity''. A ``simple'' payoff always admits an equilibrium of two  ``simple'' strategies. This has been studied in various settings~\cite{bouyer2020games}. We study, perhaps,  the most well-established one, where by ``simple'' strategies we understand \emph{positional} strategies.

 A strategy of one of the players is called positional if for every node there exists a single out-going edge such that this strategy always uses this edge from this node (of course, we only require this for nodes from where the corresponding player is the one to move). Essentially, a positional strategy is a strategy with no memory --  at every location, it completely ignores the previous development of the game. Now, a payoff is called \emph{positionally determined} if every game graph has an equilibrium of two positional strategies w.r.t.~this payoff.

\medskip

A lot of works are devoted to concrete positionally determined payoffs that are of interest in other areas of computer science. Classical examples of such payoffs are parity payoffs, mean payoffs and (multi-)discounted payoffs \cite{ehrenfeucht1979positional,mostowski1991games,mcnaughton1993infinite,shapley1953stochastic}. Their applications range from logic, verification and finite automata theory~\cite{emerson1991tree,gradel2002automata} to decision-making~\cite{puterman2014markov,sutton1998introduction} and algorithm design~\cite{calude2020deciding}.

Along with this specialized research, in~\cite{gimbert2004can,gimbert2005games} Gimbert and Zielonka undertook a thorough study of positionally determined payoffs in general. In \cite{gimbert2004can} they showed that all so-called \emph{fairly mixing} payoffs are positionally determined. They also demonstrated that virtually all classical positionally determined payoffs are fairly mixing. Next, in~\cite{gimbert2005games} they established a property of payoffs which is \emph{equivalent} to positional determinacy. Unfortunately, this property is far more technical than the fairly mixing property, and it is hard to use it in practice. Still, this property has a remarkable feature: if a payoff does not satisfy it, then this payoff is not positionally determined in some \emph{one-player} game graph (i.e., in a game graph where one of the players owns all the nodes). As Gimbert and Zielonka indicate, this means that to establish the positional determinacy of a payoff, it is enough to do so only for one-player game graphs.

 Unfortunately, these results do not quite answer how positionally determined payoffs are arranged in general. The goal of the present paper is to make progress in this direction -- at least for payoffs that satisfy some natural additional properties. One such property studied in the literature is called \emph{prefix-independence}~\cite{colcombet2006positional,gimbert2007pure}. A payoff is prefix-independent if it is invariant under throwing away any finite prefix from the input word. For instance, the parity and the mean payoffs are prefix-independent.

In~\cite{gimbert2004can}, Gimbert and Zielonka briefly mention another interesting additional property, namely, \emph{continuity}. They observe that the multi-discounted payoffs are continuous (they utilize this in showing that the multi-discounted payoffs are fairly mixing). In this paper, we study continuous positionally determined payoffs in more detail.  A payoff is continuous if its range converges to just a single point as more and more initial letters of its input (which an infinite word over the set of labels) are getting fixed. This contrasts with prefix-independent payoffs (such as the parity and the mean payoffs), for which any initial finite segment is irrelevant. Thus, continuity serves as a natural property which separates the multi-discounted payoffs from other classical positionally determined payoffs. This is our main motivation to study continuous positionally determined payoffs in general, besides the general importance of the notion of continuity.

\bigskip

 We show that for continuous payoffs, positional determinacy is equivalent to a simple property which we call \emph{prefix-monotonicity}. A payoff $\vphi$ is prefix-monotone if there are no two infinite words $\alpha$ and $\beta$ and no two finite words $x$ and $y$ such that $\vphi(x\alpha) > \vphi(x\beta)$ and $\vphi(y\alpha) < \vphi(y\beta)$.

 A proof of the fact that any continuous positionally determined payoff is prefix-monotone can be found in Section \ref{sec:results}. We give three different proofs of the opposite direction of our main result, using three major techniques of establishing positional determinacy:
\begin{itemize}
\item \emph{An inductive argument.} Here we use a sufficient condition of Gimbert and Zielonka~\cite{gimbert2004can}, which is proved by induction on the number of edges of a game graph. This type of argument goes back to  a paper of Ehrenfeucht and Mycielski~\cite{ehrenfeucht1979positional}, where they provide an inductive proof of the positional determinacy of Mean Payoff Games.

This argument can be found in Section \ref{sec:ind}.

\item \emph{A fixed point argument.} Then we give a proof which uses a fixed point approach due to Shapley~\cite{shapley1953stochastic}. Shapley's technique is a standard way of establishing positional determinacy of Discounted Games. In this argument, one derives positional determinacy from the existence of a solution to a certain system of equations (sometimes called \emph{Bellman's equations}). In turn, to establish the existence of a solution, one uses Banach's fixed point theorem.

This argument can be found in Section \ref{sec:fixed}.

\item \emph{A strategy improvement argument.}  For Discounted Games, the existence of a solution to Bellman's equations can also be proved by \emph{strategy improvement}. This technique goes back to Howard~\cite{howard1960dynamic}; for its thorough treatment (as well as for its applications to other payoffs) we refer the reader to~\cite{fearnley2010strategy}. We generalize it to arbitrary continuous positionally determined payoffs.

This argument can be found in Section \ref{sec:str}.
\end{itemize}
The simplest way to obtain our main result is via the inductive argument (at the cost of appealing without a proof to the sufficient condition of Gimbert and Zielonka). In turn, two other proofs give the following additional results:
\begin{itemize}
\item Using the fixed point approach, in Section \ref{sec:fixed_app} we give an explicit description of the set of continuous positionally determined payoffs. Namely, it turns out that all continuous positionally determined payoffs are, in a sense, non-affine multi-discounted payoffs. We use this to give an example of a positionally determined payoff which does not reduce to multi-discounted payoffs in an ``algorithmic sense''.

\item Using the strategy improvement approach, in Section \ref{sec:alg} we show that a problem of finding a pair of optimal positional strategies is solvable in randomized subexponential time for any continuous positionally determined payoff.
\end{itemize}
 We also believe that our paper makes a useful addition to these approaches from a technical viewpoint.  For example, the main problem for the fixed point approach is to identify a metric with which one can carry out the same ``contracting argument'' as in the case of multi-discounted payoffs. To solve it, we obtain a result of independent interest about compositions of continuous functions. As for the strategy improvement approach, our main contribution is a generalization of such well-established tools as ``modified costs'' and a ``potential transformation lemma''~\cite[Lemma 3.6]{hansen2013strategy}.

Finally, we study continuous payoffs that are positional in \emph{stochastic games}. Namely, in Section \ref{sec:mdp} we show that any continuous payoff which is positional in Markov Decision Processes is multi-discounted. On the other hand, it is classical multi-discounted games are positional even in two-player stochastic games. Using it, we disprove the following conjecture of Gimbert~\cite{gimbert2007pure}: ``Any payoff function which is positional for the class of non-stochastic one-player
games is positional for the class of Markov decision processes''.

\section{Preliminaries}
\label{sec:prelim}

\subsection{Notation}
\label{subsec:notation}
We denote the function composition by $\circ$.
For two sets $A$ and $B$ by $B^A$ we denote the set of all functions from $A$ to $B$. 
We write $C = A\sqcup B$ for three sets $A, B, C$ if $A$ and $B$ are disjoint and $C = A\cup B$.

Take any set $A$. By $A^*$ we denote the set of all finite words over the alphabet $A$. By $A^+$ we denote the set of all non-empty finite words pver the alphabet $A$. Finally, by $A^\omega$ we denote the set of all infinite infinite words over the alphabet $A$. For $w\in A^*$, we let $|w|$ be the length of $w$. For $\alpha\in A^\omega$ we define $|\alpha| = \infty$.

For $u \in A^*$ and $v\in A^* \cup A^\omega$ we let $uv$ denote the concatenation of $u$ and $v$. We call $u\in A^*$ a prefix of $v\in A^*\cup A^\omega$ if for some $w\in A^*\cup A^\omega$ we have $v = uw$. 
 For $u\in A^*$, by $uA^\omega$ we denote the set $\{u \alpha \mid \alpha \in A^\omega\}$. Alternatively, $uA^\omega$ is the set of all $\beta \in A^\omega$ such that $u$ is a prefix of $\beta$.

For $u\in A^*$ and $k\in\mathbb{N}$ we define
\[u^k = \underbrace{u u \ldots u}_{k \mbox{ times}}.\]
In turn, if $u\in A^+$, we let $u^\omega\in A^\omega$ be a infinite word obtained by repeating $u$ infinitely many times. 
 We call $\alpha \in A^\omega$ ultimately periodic if $\alpha = u v^\omega$ for some $u\in A^*, v\in A^+$.

\subsection{Deterministic infinite duration games on finite directed graphs}
\label{subsec:idg}

\begin{defi}
Let $A$ be a finite set. A tuple  $G = \langle V, V_\Max, V_\Min, E\rangle$ is called an \emph{$A$-labeled game graph} if the following conditions hold:
\begin{itemize}
    \item $V, V_\Max, V_\Min, E$ are four finite sets with $V = V_\Max\sqcup V_\Min$, $E\subseteq V\times A\times V$;
    \item for every $s\in V$ there exist $a\in A$ and $t\in V$ such that $(s, a, t)\in E$.
\end{itemize}
\end{defi}
Elements of $V$ are called nodes of $G$. Nodes from $V_\Max$ (resp., $V_\Min$) are called Max's nodes (resp., Min's nodes). Elements of $E$ are called edges of $E$. For an edge $e = (s, a, t)\in E$ we define $\source(e) = s, \lab(e) = a, \target(e) = t$. We imagine $e$ as an arrow from $\source(s)$ to $\target(e)$ with the label $\lab(a)$.

We will apply the function $\lab$ not only to individual edges, but also to arbitrary finite or infinite sequences of edges. Namely, given a sequence of edges, we first apply $\lab$ to its elements, and then concatenate the resulting letters from $A$ in the same order as in the sequence. We will get a word over $A$ of the same length as the initial sequence of edges.

The out-degree of a node $v\in V$ is the number of $e\in E$ with $\source(e) = u$. The last requirement in the definition of an $A$-labeled game graph means that the out-degree of every node must be positive.

A path in $G$ is a non-empty (finite or infinite) sequence of edges of $G$ with a property that $\target(e) = \source(e^\prime)$ for any two consecutive edges $e$ and $e^\prime$ from the sequence. For a path $p$, we define $\source(p) = \source(e)$, where $e$ is the first edge of $p$. For a finite path $p$, we define $\target(p) = \target(e^\prime)$, where $e^\prime$ is the last edge of $p$. 

For technical convenience, we also consider $0$-length paths. Namely, for every node $s\in V$ we consider a $0$-length path $\lambda_s$, for which we define $\source(\lambda_s) = \target(\lambda_s) = s$. Hence, there are $|V|$ different $0$-length paths.

 If $p$ and $q$ are two paths of positive length and $p$ is finite, then we can consider their concatenation $pq$. This will be a path if and only if $\target(p) = \source(q)$.

Now, if $p = \lambda_s$ is a $0$-length path, then $\lambda_s q$ is a path if and only if $\source(q) = s$. In this case, $\lambda_s q = q$. Similarly, if $q = \lambda_s$ is a $0$-length path, then $p\lambda_s$ is a path if and only if $\target(p) = s$. In this case, $p\lambda_s = p$.

\medskip

Fix a finite set $A$ and an $A$-labeled game graph $G = \langle V, V_\Max, V_\Min, E\rangle$.
Consider the following infinite-duration game (IDG for short) which is played over $G$.
Players are called $\Max$ and $\Min$. Positions of the game are finite paths in $G$ (informally, these are possible finite developments of the game). Possible starting positions are paths of length $0$. Positions from where Max (resp., Min) is the one to move are finite paths with $\target(p)\in V_\Max$ (resp., $\target(p)\in V_\Min$).

The set of moves available at a position $p$ is the set $\{e\in E\mid \source(e) = \target(p)\}$ of edges that come out of the endpoint of $p$. A move $e$ from a position $p$ leads to a position $pe$.

A Max's strategy $\sigma$ in a game graph $G$ is a mapping, assigning to every position $p$ with $\target(p)\in V_\Max$ some move available at $p$. Similarly, a Min's strategy $\tau$ in a game graph $G$ is a mapping, assigning to every position $p$ with $\target(p)\in V_\Min$ some move available at $p$.

Let $\mathcal{P} = e_1 e_2 e_3\ldots$ be an infinite path in $G$. We say that $\mathcal{P}$ is \emph{consistent} with a Max's strategy $\sigma$ if for every finite prefix $p$ of $\mathcal{P}$ with $\target(p)\in V_\Max$ it holds that $\sigma(p)$ is the next edge of $\mathcal{P}$ after $p$. 
For $s\in V$ and for a Max's strategy $\sigma$ we let $\Cons(s, \sigma)$ be a set of all infinite paths in $G$ that start in $s$ and are consistent with $\sigma$. We use a similar terminology and notation for strategies of Min.

Given a Max's strategy $\sigma$, a Min's strategy $\tau$ and $s\in V$, \emph{the play of $\sigma$ and $\tau$ from $s$} is an infinite path $\mathcal{P}^{\sigma,\tau}_s$ which can be obtained as follows. First, set $p = \lambda_s$. Then repeat the following infinitely many times. If $\target(p)\in V_\Max$, extend it by the edge $\sigma(p)$. Similarly, if $\target(p)\in V_\Min$,  extend it by the edge $\tau(p)$. The resulting infinite path will be $\mathcal{P}^{\sigma,\tau}_s$. It is not hard to see that $\mathcal{P}^{\sigma,\tau}_s$ is a unique element of the intersection $\Cons(s, \sigma) \cap \Cons(s,\tau)$.

\medskip A Max's strategy $\sigma$ in an $A$-labeled game graph $G = \langle V, V_\Max,V_\Min, E\rangle$ is called \emph{positional} if $\sigma(p) = \sigma(q)$ for all finite paths $p$ and $q$ in $G$ with $\target(p) = \target(q) \in V_\Max$. For a positional strategy $\sigma$ of Max and for $u\in V_\Max$, we let $\sigma(u)$ be the move of $\sigma$ from any position whose endpoint is $u$. That is, we can view a positional Max's strategy $\sigma$ as a function $\sigma\colon V_\Max\to E$. Obviously, this function satisfies $\source(\sigma(u)) = u$ for all $u\in V_\Max$.
We define Min's positional strategies analogously.

We call an edge $e\in E$ \emph{consistent} with a Max's positional strategy $\sigma$ if $\source(e)\in V_\Max\implies e = \sigma(\source(e))$. We denote the set of edges that are consistent with $\sigma$ by $E^\sigma$. If $\tau$ is a Min's positional strategy, then we say that an edge $e\in E$ is consistent with $\tau$ if  $\source(e)\in V_\Min\implies e = \tau(\source(e))$. The set of edges that are consistent with a Min's positional strategy $\tau$ is denoted by $E_\tau$.

\medskip

Fix a finite set $A$ and an $A$-labeled game graph $G = \langle V, V_\Max, V_\Min, E\rangle$. Take any bounded function $\vphi\colon A^\omega\to \mathbb{R}$ to which we will refer to as a \emph{payoff} (``bounded'' means that $\vphi(A^\omega)\subseteq [-C, C]$ for some $C > 0$).
Given a Max's strategy $\sigma$ in $G$, its \emph{value} in a node $s\in V$ (w.r.t.~$\vphi$) is defined as follows:
\[\Val[\sigma](s) = \inf \vphi \circ\lab\big(\Cons(s, \sigma)\big).\]
That is, we first take all infinite paths from $s$ that are consistent with $\sigma$. Then we consider all infinite words over $A$ that are ``written'' on these paths. The set of these words is $\lab\big(\Cons(s, \sigma)\big)$. Finally, we take the infimum of our payoff over this set.

Similarly, if $\tau$ is a Min's strategy in $G$, then the value of $\tau$ in a node $s\in V$ (w.r.t.~$\vphi$) is the following quantity:
\[\Val[\tau](s) = \sup \vphi \circ\lab\big(\Cons(s, \tau)\big).\]
Observe that for any Max's strategy $\sigma$, for any Min's strategy $\tau$ and for any $s\in V$ we have:
\[\Val[\sigma](s) \le \vphi\circ\lab\big(\mathcal{P}^{\sigma,\tau}_s\big) \le \Val[\tau](s).\]
A Max's strategy $\sigma$ is called \emph{(uniformly) optimal} if $\Val[\sigma](s) \ge \Val[\sigma^\prime](s)$ for any $s\in V$ and for any Max's strategy $\sigma^\prime$. Similarly, a Min's strategy $\tau$ is called \emph{(uniformly) optimal} if $\Val[\tau](s) \le \Val[\tau^\prime](s)$ for any $s\in V$ and for any Min's strategy $\tau^\prime$.
\begin{rem}
``Uniformity'' here refers to the fact that a strategy is optimal irrespectively of the starting node. One, of course, could consider strategies that are optimal for some nodes but not for the other. However, this kind of optimality is out of scope of this paper. Thus, from now on, we write ``optimal strategies'' instead of ``uniformly optimal strategies''.
\end{rem}

A pair $(\sigma,\tau)$ of a Max's strategy $\sigma$ and a Min's strategy $\tau$ is called an \emph{equilibrium} if $\Val[\sigma](s) = \Val[\tau](s)$ for every $a\in V$.  It is easy to see
 that any strategy appearing in an equilibrium is optimal. On the other hand, if at least one equilibrium exists, then the following holds: the Cartesian product of the set of  optimal strategies of Max and  the set of  optimal strategies of Min is the set of equilibria. We say that $\vphi$ is \emph{determined} if in every $A$-labeled game graph there exists an equilibrium with respect to $\vphi$. We say that $\vphi$ is \emph{positionally determined} if every $A$-labeled game graph contains an equilibrium (w.r.t.~$\vphi$) of two positional strategies.

\begin{prop}
\label{conserv_eq}
If $A$ is a finite set, $\vphi\colon A^\omega\to\mathbb{R}$ is a payoff and $g\colon \vphi(A^\omega) \to\mathbb{R}$ is a bounded non-decreasing\footnote{Throughout the paper, we call a function $f\colon S\to\mathbb{R}, S\subseteq \mathbb{R}$ non-decreasing if for all $x, y\in S$ we have $x\le y \implies f(x) \le f(y)$.} function, then any equilibrium w.r.t.~$\vphi$ is also an equilibrium w.r.t.~$g\circ \vphi$.
\end{prop}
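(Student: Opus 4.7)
The plan is to exploit the fact that when $(\sigma,\tau)$ is an equilibrium w.r.t.\ $\vphi$, there is a single real number $v(s) = \Val_\vphi[\sigma](s) = \Val_\vphi[\tau](s)$ that simultaneously serves as a lower bound for $\vphi\circ\lab$ on $\Cons(s,\sigma)$ and as an upper bound for $\vphi\circ\lab$ on $\Cons(s,\tau)$. Concretely, from the definitions of the values as infimum and supremum we get, for every $\pi\in\Cons(s,\sigma)$, $\vphi\circ\lab(\pi)\ge v(s)$, and for every $\pi\in\Cons(s,\tau)$, $\vphi\circ\lab(\pi)\le v(s)$.

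Since $g$ is non-decreasing, applying it preserves both families of inequalities: $g\circ\vphi\circ\lab(\pi)\ge g(v(s))$ for all $\pi\in\Cons(s,\sigma)$ and $g\circ\vphi\circ\lab(\pi)\le g(v(s))$ for all $\pi\in\Cons(s,\tau)$. Taking the infimum over $\Cons(s,\sigma)$ in the first inequality and the supremum over $\Cons(s,\tau)$ in the second yields
\[
\Val_{g\circ\vphi}[\sigma](s)\ge g(v(s)) \quad\text{and}\quad \Val_{g\circ\vphi}[\tau](s)\le g(v(s)).
\]
Combined with the generic ``weak duality'' inequality $\Val_{g\circ\vphi}[\sigma](s)\le (g\circ\vphi)\circ\lab\bigl(\mathcal{P}^{\sigma,\tau}_s\bigr)\le \Val_{g\circ\vphi}[\tau](s)$, which is a direct consequence of the fact that $\mathcal{P}^{\sigma,\tau}_s\in\Cons(s,\sigma)\cap\Cons(s,\tau)$, we obtain the chain
\[
g(v(s))\le \Val_{g\circ\vphi}[\sigma](s)\le \Val_{g\circ\vphi}[\tau](s)\le g(v(s)),
\]
forcing equality throughout and hence $\Val_{g\circ\vphi}[\sigma](s)=\Val_{g\circ\vphi}[\tau](s)$ for every $s\in V$, which is precisely the equilibrium condition w.r.t.\ $g\circ\vphi$.

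There is no real obstacle here, but one should resist the temptation to argue via $\inf g(S)=g(\inf S)$: for a non-decreasing $g$ this identity need not hold (it requires lower-semicontinuity of $g$ at $\inf S$). The point of the argument above is that we never actually need to commute $g$ with $\inf$ or $\sup$; it suffices to push the common threshold $v(s)$ through $g$, and monotonicity alone is enough for that.
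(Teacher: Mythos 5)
Your proof is correct and follows essentially the same route as the paper: both arguments hinge on the common element $\mathcal{P}^{\sigma,\tau}_s$ of $\Cons(s,\sigma)\cap\Cons(s,\tau)$ and on pushing the common value through $g$ by monotonicity alone, never commuting $g$ with $\inf$ or $\sup$. The only point you leave implicit is that $g(v(s))$ is even defined, i.e.\ that $v(s)\in\vphi(A^\omega)$; this holds because $v(s)$, being simultaneously an infimum bounded above by and a supremum bounded below by the common value $\vphi\circ\lab\bigl(\mathcal{P}^{\sigma,\tau}_s\bigr)$, must equal that value --- exactly the observation the paper makes explicit.
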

\begin{proof}
Let $(\sigma, \tau)$ be an equilibrium w.r.t.~$\vphi$, where $\sigma$ is a Max's strategy and $\tau$ is a Min's strategy. Our goal is to show that $(\sigma,\tau)$ is also an equilibrium w.r.t.~$g\circ\vphi$.

By definition, the values of $\sigma$ and $\tau$ w.r.t.~$\vphi$ coincide. That, for every node $s$, we have:
\[\inf \vphi \circ\lab\big(\Cons(s, \sigma)\big) = \sup \vphi \circ\lab\big(\Cons(s, \tau)\big).\]
We have to derive from this that the values of $\sigma$ and $\tau$ w.r.t.~$g\circ \vphi$ also coincide. That is, we have to show that:
\begin{equation}
\label{eq:www}
\inf g\circ\vphi \circ\lab\big(\Cons(s, \sigma)\big) = \sup g\circ\vphi \circ\lab\big(\Cons(s, \tau)\big)
\end{equation}
for every node $s$. The sets   $\vphi \circ\lab\big(\Cons(s, \sigma)\big)$ and $\vphi \circ\lab\big(\Cons(s, \tau)\big)$ have a common element $\vphi\circ\lab\big(\mathcal{P}^{\sigma,\tau}_s\big)$. Since the infimum of the first set equals the supremum of the second set, their common element $ \vphi\circ\lab\big(\mathcal{P}^{\sigma,\tau}_s\big)$ must be the minimum of the first set and the maximum of the second set. Due to the fact that the function $g$ is non-decreasing, we have that $g\big( \vphi\circ\lab\big(\mathcal{P}^{\sigma,\tau}_s\big)\big)$ is the minimum of $g\circ\vphi \circ\lab\big(\Cons(s, \sigma)\big)$ and the maximum $g\circ\vphi \circ\lab\big(\Cons(s, \sigma)\big)$. This implies \eqref{eq:www}.
\end{proof}
\begin{cor}
\label{non_decr_comp}
If $A$ is a finite set, $\vphi\colon A^\omega\to\mathbb{R}$ is a positionally determined payoff and $g\colon \vphi(A^\omega) \to\mathbb{R}$ is a bounded non-decreasing function, then $g\circ \vphi$ is a positionally determined payoff.
\end{cor}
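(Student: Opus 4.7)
The plan is to directly combine Proposition~\ref{conserv_eq} with the definition of positional determinacy. Fix an arbitrary $A$-labeled game graph $G = \langle V, V_\Max, V_\Min, E\rangle$; the goal is to exhibit a pair of positional strategies that form an equilibrium in $G$ with respect to $g \circ \vphi$.

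First, since $\vphi$ is positionally determined, there exists an equilibrium $(\sigma, \tau)$ in $G$ w.r.t.~$\vphi$ where both $\sigma$ and $\tau$ are positional. Now apply Proposition~\ref{conserv_eq} to the function $g\colon \vphi(A^\omega) \to \mathbb{R}$, which is non-decreasing by assumption; the proposition gives that this same pair $(\sigma,\tau)$ is also an equilibrium w.r.t.~$g \circ \vphi$. Since $\sigma$ and $\tau$ remain positional (the strategies themselves did not change, only the payoff we measure against), we have produced a positional equilibrium w.r.t.~$g \circ \vphi$ in $G$.

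Because $G$ was arbitrary, every $A$-labeled game graph admits a positional equilibrium w.r.t.~$g \circ \vphi$, which is exactly what positional determinacy of $g \circ \vphi$ asserts. There is no real obstacle here: all the content lives in Proposition~\ref{conserv_eq}, and the corollary is simply the observation that the positionality of strategies is preserved when we re-use the same pair $(\sigma,\tau)$ under the transformed payoff.
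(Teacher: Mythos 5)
Your proof is correct and is exactly the intended argument: the paper states this as an immediate corollary of Proposition~\ref{conserv_eq} (with no separate proof), and your write-up just spells out the obvious instantiation — take a positional equilibrium for $\vphi$, note it remains an equilibrium for $g\circ\vphi$ by the proposition, and observe that positionality of the strategies is unaffected.
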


\subsection{Continuous payoffs}

For a finite set $A$, we consider the set $A^\omega$ as a topological space. Namely, we take the discrete topology on $A$ and the corresponding product topology on $A^\omega$. In this product topology, open sets are sets of the form
\[\mathcal{S} = \bigcup\limits_{u\in S} u A^\omega,\]
where $S\subseteq A^*$. When we say that a payoff $\vphi\colon A^\omega\to\mathbb{R}$ is \emph{continuous} we always mean continuity with respect to this product topology (and with respect to the standard topology on $\mathbb{R}$). The following proposition gives a convenient way to establish continuity of payoffs.
\begin{prop}
\label{cont_help}
Let $A$ be a finite set.
A payoff $\varphi\colon A^\omega \to \mathbb{R}$ is continuous if and only if for any $\alpha\in A^\omega$ and  for any infinite sequence $\{\beta_n\}_{n = 1}^\infty$ of elements of $A^\omega$ the following holds. If for all $n\ge 1$ it holds that $\alpha$ and $\beta_n$ have the same prefixes of length $n$, then $\lim\limits_{n \to \infty} \varphi(\beta_n)$ exists and equals $\varphi(\alpha)$.
\end{prop}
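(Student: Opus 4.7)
The plan is to use the fact that the product topology on $A^\omega$ has, at every point $\alpha\in A^\omega$, a natural countable neighborhood basis given by the cylinders $uA^\omega$ where $u$ ranges over the prefixes of $\alpha$. Once this is observed, continuity of $\varphi$ at $\alpha$ reduces to sequential continuity, and moreover it suffices to test against the specific ``diagonal'' sequences described in the statement.

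For the forward direction, I would assume $\varphi$ is continuous and fix $\alpha\in A^\omega$ together with a sequence $\{\beta_n\}_{n=1}^\infty$ such that $\beta_n$ shares its length-$n$ prefix with $\alpha$ for every $n$. Given any basic open neighborhood $uA^\omega$ of $\alpha$, the word $u$ is necessarily a prefix of $\alpha$; hence for all $n\ge |u|$ the common length-$n$ prefix of $\alpha$ and $\beta_n$ extends $u$, so $\beta_n\in uA^\omega$. Thus $\beta_n\to\alpha$ in the product topology, and continuity of $\varphi$ yields $\varphi(\beta_n)\to\varphi(\alpha)$.

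For the reverse direction, I would argue by contrapositive. Assume $\varphi$ is not continuous at some $\alpha\in A^\omega$. Then there exists $\varepsilon>0$ such that every open neighborhood of $\alpha$ contains a point mapped by $\varphi$ outside $(\varphi(\alpha)-\varepsilon,\varphi(\alpha)+\varepsilon)$. Applying this to the basic neighborhood $u_n A^\omega$, where $u_n$ is the prefix of $\alpha$ of length $n$, produces for each $n$ some $\beta_n\in u_n A^\omega$ with $|\varphi(\beta_n)-\varphi(\alpha)|\ge\varepsilon$. This sequence satisfies the hypothesis of the proposition yet $\varphi(\beta_n)\not\to\varphi(\alpha)$, contradicting the assumed condition.

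The only technical point requiring brief justification is the claim that the cylinders $uA^\omega$ with $u$ a prefix of $\alpha$ form a neighborhood basis at $\alpha$; but this follows immediately from the description of open sets as unions of cylinders given just before the proposition: an open set containing $\alpha$ contains some cylinder $uA^\omega$ with $\alpha\in uA^\omega$, and $\alpha\in uA^\omega$ is equivalent to $u$ being a prefix of $\alpha$. I do not expect any serious obstacle; the proof is essentially a routine unpacking of the product-topology definition, with continuity at a single point reformulated as sequential continuity along the canonical diagonal sequences.
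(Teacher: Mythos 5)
Your proof is correct and takes essentially the same approach as the paper's: both directions rest on the observation that the cylinders $u_nA^\omega$ over prefixes of $\alpha$ form a neighborhood basis at $\alpha$. The only cosmetic difference is that the paper phrases the converse as openness of preimages of intervals $(x,y)$ (exhibiting them as unions of cylinders), whereas you argue pointwise by contrapositive; the key step producing the bad sequence $\{\beta_n\}$ is identical.
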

\begin{proof}
First, assume that $\vphi$ is continuous. Take any $\varepsilon > 0$. We have to show that for some $n_0$ it holds that $\vphi(\beta_n) \in (\vphi(\alpha)-\varepsilon, \vphi(\alpha) +\varepsilon)$ for all $n\ge n_0$.  The set $\vphi^{-1}((\vphi(\alpha)-\varepsilon, \vphi(\alpha) +\varepsilon))$ must be open. So for some $S\subseteq A^*$ we have:
\[\vphi^{-1}((\vphi(\alpha)-\varepsilon, \vphi(\alpha) +\varepsilon)) = \bigcup\limits_{u\in S} u A^\omega.\]
Since obviously $\alpha \in \vphi^{-1}((\vphi(\alpha)-\varepsilon, \vphi(\alpha) +\varepsilon))$, there exists $u\in S$ such that $\alpha \in u A^\omega$. Hence for $n\ge |u|$ we have $\beta_n \in u A^\omega \subseteq\vphi^{-1}((\vphi(\alpha)-\varepsilon, \vphi(\alpha) +\varepsilon))$, as required.

Let us now establish the other direction of the proposition. It is enough to show that for any $x, y \in\mathbb{R}$ with $x < y$ the set $\vphi^{-1}((x, y))$ is open. Take any $\alpha \in \vphi^{-1}((x, y))$. Let us show that there exists $n(\alpha)$ such that all $\beta\in A^\omega$ that coincide with $\alpha$ in the first $n(\alpha)$ elements belong to $\vphi^{-1}((x, y))$. Indeed, otherwise for any $n$ there exists $\beta_n$, coinciding with $\alpha$ in the first $n$ elements, such that $\beta_n \notin\vphi^{-1}((x, y))$. Now, the limit $\lim_{n\to\infty}\vphi(\beta_n)$ must exist and must be equal to $\vphi(\alpha)$. But $\vphi(\alpha) \in (x, y)$ and all $\vphi(\beta_n)$ are not in this interval, contradiction.

Now, for $\alpha \in \vphi^{-1}((x, y))$ let $u_\alpha \in A^{n(\alpha)}$ be the $n(\alpha)$-length prefix of $\alpha$. Observe that
\[\vphi^{-1}((x, y)) = \bigcup\limits_{\alpha \in \vphi^{-1}((x, y))} u_\alpha A^\omega.\]
So the set $\vphi^{-1}((x, y))$ is open, as required. \qedhere

\end{proof}
 
For a finite set $A$, the space $A^\omega$ is compact by Tychonoff's theorem\footnote{Compactness of $A^\omega$ can also be shown via K\H{o}nig's lemma.}. This has the following consequence which is important for this paper: if $\vphi\colon A^\omega\to\mathbb{R}$ is a continuous payoff, then $\vphi(A^\omega)$ is a compact subset of $\mathbb{R}$.

\subsection{MDPs}

This subsection concerns stochastic games, but we deal with them only in Section \ref{sec:mdp}. So for the rest of our results, one can skip this subsection.

In fact, we will need only one-player stochastic games, also known as Markov Decision Processes (MDPs). We will follow a formalization of Gimbert~\cite{gimbert2007pure}.

We use the following notation.
Let $A$ be a finite set. By $\mathfrak{S}^{bor}_A$ we mean the Borel $\sigma$-algebra on $A^\omega$ (generated by the product topology from the previous subsection). By $\Delta(S)$ we denote the set of all probability distributions over a finite set $S$.
\begin{defi} 
\label{def:mdp}
Let $A$ be a finite set. An \emph{$A$-labeled MDP} is a tuple $\mathcal{M} = \langle S, \act, \lab\rangle$, where 
\begin{itemize}
\item $S$ is a finite set of \emph{states} of $\mathcal{M}$;
\item $\act\subseteq S\times \Delta(S)$ is a finite set of \emph{actions} of $\mathcal{M}$;
\item $\lab\colon \act\times S\to A$ is the \emph{labeling function} of $\mathcal{M}$;
\end{itemize}
such that for every $s\in S$ there exists $P\in \Delta(S)$ such that $(s, P) \in \act$.
\end{defi}

Given an $A$-labeled MDP  $\mathcal{M} = \langle S, \act, \lab\rangle$, we imagine that there is a single player called Max traveling over the states of $\mathcal{M}$. When Max is in a state $s\in S$, he considers the set of all actions of $\mathcal{M}$ whose first coordinate is $s$ (by definition, this set is non-empty for every $s\in S$). He chooses one such action $(s, P)$. Then Max samples his next location according to $P$. This continues for infinitely many turns.

For $e = (s, P)\in \act$, we define $\source(e) = s$ and $\dist[e] = P$.

The set $T = \act\times S$, which is the domain of the function $\lab$, is called the set of \emph{transitions} of $\mathcal{M}$. Informally, transitions describe what happens in one turn. Namely, a transition $(e,  s) \in\act\times S$ means that in the beginning of a turn, Max was in the state $\source(e)$, then he took the action $e$, and this led him to the state $s$.

Consistent sequences of transitions are called \emph{histories}. Namely, a non-empty sequence $h = (e_1, s_1) (e_2, s_2) (e_3, s_3)\ldots \in T^+\cup T^\omega$ is called a history if for every $2\le i \le |h|$, we have $s_{i-1} = \source(e_{i})$. We set $\source(h) = \source(e_1)$ and, if $h$ is finite, $\target(h) = s_{|h|}$. We also map each state $s\in S$ to a $0$-length history $\lambda_s$ with $\source(\lambda_s) = \target(\lambda_s) = s$. These histories correspond to $|S|$ possible starting positions of Max.

A strategy $\sigma$ of Max is a mapping, which to every finite history $h$ assigns an action $\sigma(h)\in\act$ such that $\target(h) = \source(\sigma(h))$. Informally, $\sigma(h)$ is the action which, according to $\sigma$, Max takes after $h$.

Given $s\in S$, a strategy $\sigma$ defines a function $P^\sigma_s\colon T^*\to [0, 1]$. Informally, $P^\sigma_s(h)$ is the probability that we will see a history $h$ if Max starts in $s$ and plays according to $\sigma$. It can be defined inductively.

First, we set $P^\sigma_s(\mbox{empty word}) = 1$. The empty word here corresponds to the initial history $\lambda_s$. Now, given $(e, s_1) \in T$, we set $P^\sigma_s((e, s_1)) = 0$ if $e\neq \sigma(\lambda_s)$ and $P^\sigma_s((e, s_1)) = \dist[e](s_1)$ if $e = \sigma(\lambda_s)$. That is, if $e$ is not the action played by Max according to $\sigma$ in the starting position, then the probability of the transition $(e, s_1)$ is $0$. Otherwise, the probability of $(e, s_1)$ is the probability that the action $e = \sigma(\lambda_s)$ brings us to $s_1$.

More generally, assume that $P^\sigma_s$ is already defined for all $h\in T^1\cup T^2\ldots \cup T^n$. Take any $h = (e_1, s_1) \ldots (e_n, s_n) (e_{n + 1}, s_{n + 1})$. If $h$ is not a history, or if $\source(h) \neq s$, we set $P^\sigma_s(h) = 0$. Similarly, if $e_{n + 1} \neq \sigma((e_1, s_1) \ldots (e_n, s_n))$, that is, if $e_{n+1}$ is not the action played by $\sigma$ after $(e_1, s_1) \ldots (e_n, s_n)$, then we set $P^\sigma_s(h) = 0$. Finally, if $h$ is a history with $\source(h) = s$, and if $e_{n + 1} = \sigma((e_1, s_1) \ldots (e_n, s_n))$, then we set 
\[P^\sigma_s(h) = P^\sigma_s((e_1, s_1) \ldots (e_n, s_n)) \cdot \dist[e_{n+1}](s_{n + 1}).\]
Obviously, we have $P^\sigma_s(h) = \sum_{t\in T} P^\sigma_s(ht)$ for every $h\in T^*$. Hence, by the Caratheodory's extension theorem, there is a unique probability measure $\mathcal{P}^\sigma_s$ on $\mathfrak{S}^{bor}_T$ such that
\[\mathcal{P}^\sigma_s(hT^\omega) = P^\sigma_s(h)\]
for every $h\in T^*$. Intuitively, $\mathcal{P}^\sigma_s$ is the probability distribution over infinite histories, generated by playing $\sigma$ for infinitely many turns, starting from $s$.

\medskip

Any sequence of transitions $h \in T^*\cup T^\omega$ can be mapped to a word $\lab(h)$ over the set of labels by setting:
\[\lab(h) = \lab(h_1) \lab(h_2)\lab(h_3)\ldots\]
Now, fix a payoff function $\vphi\colon A^\omega\to\mathbb{R}$. It maps any infinite history $\mathcal{H} \in T^\omega$ to its reward, defined as $\vphi(\lab(\mathcal{H}))$. Max wants a strategy which maximizes the expected value of the reward. That is, Max wants to attain
\begin{equation}
\label{mdp_max}
\mathbb{E}_{\mathcal{H}\sim\mathcal{P}^\sigma_s} \left[\vphi\circ\lab(\mathcal{H})\right]\to \max
\end{equation}
over his strategies $\sigma$, for all $s\in S$.
This expectation is well defined if $\vphi\circ \lab$ is bounded and measurable with respect to $\mathfrak{S}_T^{bor}$. Since $\lab\colon T^\omega\to A^\omega$ is continuous, it is well-defined if $\vphi\colon A^\omega\to\mathbb{R}$ is bounded and measurable with respect to $\mathfrak{S}_A^{bor}$.

For brevity, we will abbreviate the expectation in \eqref{mdp_max} by
\[\mathbb{E} \left[\varphi \circ\lab\big(\mathcal{P}_s^\sigma\big)\right].\]
\begin{defi}
Let $A$ be a finite set and $\vphi\colon A^\omega\to\mathbb{R}$ be a bounded measurable payoff.

We say that a strategy $\sigma$ in an $A$-labeled MDP $\mathcal{M}$ is \emph{optimal} if for any strategy $\sigma^\prime$ and for any state $s$ of $\mathcal{M}$ we have:
\[\mathbb{E}  \left[\varphi \circ\lab\big(\mathcal{P}_s^\sigma\big)\right]\ge \mathbb{E} \left[ \varphi \circ\lab\big(\mathcal{P}_s^{\sigma^\prime}\big)\right].\]

We say that a strategy $\sigma$ in an $A$-labeled MDP $\mathcal{M}$ is \emph{positional} if for any two finite histories $h_1$ and $h_2$ in $\mathcal{M}$ we have $\target(h_1) = \target(h_2) \implies \sigma(h_1) = \sigma(h_2)$.

We say that $\vphi$ is \emph{positionally determined in MDPs} if every $A$-labeled MDP has an optimal positional strategy w.r.t.~$\vphi$.
\end{defi}
In the paper, we will use this definition only for continuous $\vphi$ -- they all are, of course, bounded (by compactness of $A^\omega$) and measurable.

\section{Statement of the Main Result and its \texorpdfstring{``}{“}Only If\texorpdfstring{''}{”} Part}
\label{sec:results}
Our main result establishes a simple property which is equivalent to positional determinacy for continuous payoffs.

\begin{defi}
\label{def:prefix_monotone}
Let $A$ be a finite set. A payoff $\vphi\colon A^\omega\to\mathbb{R}$ is called \textbf{prefix-monotone} if there are no $u, v\in A^*$, $\beta,\gamma\in A^\omega$ such that
$\vphi(u\beta) > \vphi(u\gamma) \mbox{ and } \vphi(v\beta) < \vphi(v\gamma)$.
\end{defi}

(One can note that prefix-independence trivially implies prefix-monotonicity. On the other hand, no prefix-independent payoff is continuous, unless it takes just 1 value.)

\begin{thm}
\label{thm:char}
Let $A$ be a finite set and $\varphi\colon A^\omega \to \mathbb{R}$ be a  continuous payoff. Then $\varphi$ is positionally determined if and only if $\varphi$ is prefix-monotone.
\end{thm}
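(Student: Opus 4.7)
The plan is to prove only the ``only if'' direction here (the ``if'' direction is the substantive content of the paper and is deferred to Sections \ref{sec:ind}, \ref{sec:fixed}, and \ref{sec:str}). I will argue by contrapositive in a one-player (Max-only) game graph: if $\vphi$ is continuous and not prefix-monotone, then some finite game graph has no equilibrium of positional strategies, so $\vphi$ cannot be positionally determined.

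Suppose $\vphi$ is not prefix-monotone, witnessed by $u, v \in A^*$ and $\beta, \gamma \in A^\omega$ with $\vphi(u\beta) > \vphi(u\gamma)$ and $\vphi(v\beta) < \vphi(v\gamma)$. The first step is to upgrade $\beta, \gamma$ to \emph{ultimately periodic} witnesses so that they can actually be realized as paths in a finite graph. Pick any $a \in A$ and let $\hat\beta_n$ denote the word formed by the length-$n$ prefix of $\beta$ followed by $a^\omega$, and similarly $\hat\gamma_n$ for $\gamma$. Then $u\hat\beta_n$ and $u\hat\gamma_n$ share arbitrarily long prefixes with $u\beta$ and $u\gamma$ as $n$ grows, so by Proposition \ref{cont_help} applied to $\vphi$ we get $\vphi(u\hat\beta_n)\to\vphi(u\beta)$ and $\vphi(u\hat\gamma_n)\to\vphi(u\gamma)$, and likewise with $v$ in place of $u$. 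For all sufficiently large $n$ both strict inequalities therefore persist, and we fix such an $N$ and set $\hat\beta := \hat\beta_N$, $\hat\gamma := \hat\gamma_N$. This step is the only place continuity is used, and it is essential: without continuity the ``bad'' behavior at $\beta,\gamma$ could fail to be reflected by any finite sub-experiment.

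Next, construct a one-player game graph $G$ by taking all nodes in $V_\Max$, with a central node $t$, two input chains of lengths $|u|$ and $|v|$ labeled respectively by the letters of $u$ and $v$ and joining source nodes $s_1$ and $s_2$ to $t$, and two disjoint output gadgets attached to $t$: one realizing $\hat\beta$ (a finite chain spelling its transient part followed by a cycle spelling its period), and one realizing $\hat\gamma$ in the same way. Since all internal nodes outside $t$ have out-degree one, Max's only genuine decision in the whole graph is at $t$, where he chooses which gadget to enter. In a one-player Max game the (vacuous) Min strategy $\tau$ has $\Val[\tau](s) = \sup\vphi\circ\lab(\Cons(s,\tau))$, the sup of the payoff over \emph{all} infinite paths from $s$, while any positional Max strategy $\sigma$ yields a single path from each starting node and $\Val[\sigma](s)$ is just its payoff. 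An equilibrium of positional strategies therefore demands a single positional $\sigma$ that simultaneously attains the sup from every node — forcing $\sigma(t)$ into the $\hat\beta$-gadget (since $\vphi(u\hat\beta)>\vphi(u\hat\gamma)$ makes that the unique sup-attaining path from $s_1$) and at the same time into the $\hat\gamma$-gadget (since $\vphi(v\hat\beta)<\vphi(v\hat\gamma)$ makes that the unique sup-attaining path from $s_2$). This contradiction completes the proof. The main care point is the continuity-based upgrade, together with the mild bookkeeping of making $t$ the unique branching node and giving the two gadgets separate finite realizations.
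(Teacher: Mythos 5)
Your argument for the ``only if'' direction is correct and is essentially the paper's own proof: you use continuity (via Proposition \ref{cont_help}) to replace the witnesses $\beta,\gamma$ by ultimately periodic ones so that the strict inequalities persist, and then build the same one-player Max graph with two labeled chains for $u$ and $v$ feeding a single branching node from which two lassos spell the periodic witnesses, so that no single positional choice at that node is optimal from both starting nodes. Like the paper, you defer the substantive ``if'' direction to the later sections, so there is nothing further to check here.
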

The fact that any continuous positionally determined payoff must be prefix-monotone
is proved below in this section.  Three different proofs of the ``if'' part of Theorem \ref{thm:char} are given in, respectively, Sections \ref{sec:ind}, \ref{sec:fixed} and \ref{sec:str}. As an illustration of our result, we first give a formal definition of multi-discounted payoffs and show that they are continuous and prefix-monotone.
\begin{defi} Let $A$ be a finite set. Then 
a payoff $\vphi\colon A^\omega\to\mathbb{R}$ is \textbf{multi-discounted} if there are functions $\lambda\colon A\to[0, 1)$ and $w\colon A\to\mathbb{R}$ such that
\begin{equation}
\label{multi_disc}
\vphi(a_1 a_2 a_3\ldots) = \sum\limits_{n = 1}^\infty \lambda(a_1) \cdot \ldots \cdot \lambda(a_{n-1}) \cdot w(a_n)
\end{equation}
for all $a_1 a_2 a_3\ldots\in A^\omega$.
\end{defi}

\begin{prop} All multi-discounted payoffs are continuous and prefix-monotone. 
\end{prop}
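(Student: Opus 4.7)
The plan is to derive both properties from a single recursive identity that any multi-discounted payoff satisfies. Specifically, for $u = u_1 u_2 \ldots u_k \in A^*$ and $\alpha \in A^\omega$, define
\[
\Lambda(u) = \prod_{i=1}^{k} \lambda(u_i), \qquad P(u) = \sum_{i=1}^{k} \lambda(u_1) \cdots \lambda(u_{i-1}) \cdot w(u_i).
\]
Plugging $u\alpha$ into \eqref{multi_disc} and splitting the sum at index $k$ gives $\vphi(u\alpha) = P(u) + \Lambda(u) \cdot \vphi(\alpha)$, an identity I would prove first by a direct reindexing.

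First I would verify that $\vphi$ is well defined and bounded: since $A$ is finite, $\lambda_{\max} := \max_{a \in A} \lambda(a) < 1$ and $W := \max_{a \in A} |w(a)| < \infty$, so the series in \eqref{multi_disc} is absolutely convergent and $|\vphi(\alpha)| \le W/(1 - \lambda_{\max})$ for every $\alpha$. For continuity I would invoke Proposition \ref{cont_help}: if $\alpha$ and $\beta_n$ share the same prefix $u_n$ of length $n$, then writing $\alpha = u_n \alpha'$ and $\beta_n = u_n \beta_n'$ and applying the identity yields
\[
|\vphi(\alpha) - \vphi(\beta_n)| = \Lambda(u_n) \cdot |\vphi(\alpha') - \vphi(\beta_n')| \le \lambda_{\max}^n \cdot \frac{2W}{1 - \lambda_{\max}},
\]
which tends to $0$ as $n \to \infty$.

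For prefix-monotonicity I would argue by contradiction. Suppose $u, v \in A^*$ and $\beta, \gamma \in A^\omega$ satisfy $\vphi(u\beta) > \vphi(u\gamma)$ and $\vphi(v\beta) < \vphi(v\gamma)$. By the identity,
\[
\vphi(u\beta) - \vphi(u\gamma) = \Lambda(u)\bigl(\vphi(\beta) - \vphi(\gamma)\bigr), \qquad \vphi(v\beta) - \vphi(v\gamma) = \Lambda(v)\bigl(\vphi(\beta) - \vphi(\gamma)\bigr).
\]
Since $\lambda$ takes nonnegative values, $\Lambda(u), \Lambda(v) \ge 0$. The strict inequality $\vphi(u\beta) > \vphi(u\gamma)$ forces $\Lambda(u) > 0$ and hence $\vphi(\beta) > \vphi(\gamma)$; symmetrically, the second strict inequality forces $\vphi(\beta) < \vphi(\gamma)$, a contradiction.

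The argument is essentially routine; the only subtle point is the need for the constants $\lambda_{\max} < 1$ and $W < \infty$, both of which are guaranteed by the finiteness of the alphabet $A$, and the fact that the discount factors are nonnegative (not merely of modulus $< 1$), which is what makes the sign-preservation in the prefix-monotonicity argument go through.
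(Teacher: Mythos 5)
Your proof is correct and follows essentially the same route as the paper: both rest on the identity $\vphi(u\alpha) - \vphi(u\beta) = \lambda(u_1)\cdots\lambda(u_k)\,(\vphi(\alpha)-\vphi(\beta))$, use the resulting geometric bound for continuity via Proposition \ref{cont_help}, and deduce prefix-monotonicity from the fact that the nonnegative factor $\Lambda(u)$ preserves the sign of $\vphi(\beta)-\vphi(\gamma)$. The only differences are cosmetic choices of constants.
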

\begin{proof}
Let $A$ be a finite set and $\vphi\colon A^\omega\to\mathbb{R}$ be a multi-discounted payoff, defined by $\lambda\colon A\to[0, 1)$ and $w\colon A\to\mathbb{R}$. Take any $W > 0$ such that $\lambda(a) < 1 - \frac{1}{W}$ and $|w(a)| < W$ for every $a\in A$.

Let us first show that $\vphi$ is continuous. Take any $\alpha, \beta\in A^\omega$ that coincide in the first $n$ elements. It is sufficient to bound the difference $|\vphi(\alpha) - \vphi(\beta)|$ by some quantity which depends only on $n$ and tends to $0$ as $n\to\infty$.
 First, observe that the value of $\vphi$ never exceeds $W\cdot \frac{1}{1 - (1 - \frac{1}{W})} = W^2$. Now, let $u = a_1 a_2 \ldots a_n\in  A^n$ be the first $n$ letters of $\alpha$ and $\beta$. Then $\alpha = u\alpha^\prime, \beta = u\beta^\prime$ for some $\alpha^\prime, \beta^\prime\in A^\omega$. It is not hard to derive from \eqref{multi_disc} that:
\begin{equation}
\label{append_disc}
\vphi(\alpha) - \vphi(\beta) = \vphi(u\alpha^\prime) -  \vphi(u\beta^\prime) = \lambda(a_1)\cdot\ldots\cdot\lambda(a_n) \cdot (\vphi(\alpha^\prime) - \vphi(\beta^\prime)).
\end{equation}
This means that the difference $|\vphi(\alpha) - \vphi(\beta)|$ is bounded by $(1 - \frac{1}{W})^n \cdot W^2$. This quantity tends to $0$ as $n\to \infty$. Hence, $\vphi$ is continuous.

Equation \eqref{append_disc} also implies that $\vphi$ is prefix-monotone. Indeed, it gives that for any $u\in A^*$ and $\beta,\gamma\in A^\omega$ there exists $\lambda \ge 0$ such that  $\vphi(u\beta) - \vphi(u\gamma) = \lambda\cdot(\vphi(\beta) - \vphi(\gamma))$. This equality gives us that:
\[\vphi(u\beta) > \vphi(u\gamma) \implies \vphi(\beta) >\vphi(\gamma), \qquad \vphi(u\beta) < \vphi(u\gamma) \implies \vphi(\beta) <\vphi(\gamma).\]
Hence, there are no $u,v\in A^*$ and $\beta,\gamma\in A^\omega$ such that
$\vphi(u\beta) > \vphi(u\gamma)$ and $\vphi(v\beta) < \vphi(v\gamma)$.
\end{proof}

\begin{proof}[Proof of the ``only if'' part of Theorem \ref{thm:char}]
Assume that $\varphi$ is not prefix-monotone. Then for some $u, v\in A^*$  and $\alpha, \beta \in A^\omega$ we have 
\begin{equation}
\label{pm_violate}
\vphi(u\alpha) > \vphi(u\beta) \mbox{ and } \vphi(v\alpha) < \vphi(v\beta).
\end{equation}
 First, notice that by the continuity of $\vphi$ we may assume that $\alpha$ and $\beta$ are ultimately periodic. Indeed, take any $a\in A$ and for every $n\in\mathbb{N}$, define $\alpha^n, \beta^n\in A^\omega$ as follows:
\[\alpha^n = \alpha_1 \alpha_2\ldots \alpha_n a^\omega, \qquad \beta^n = \beta_1 \beta_2\ldots\beta_n a^\omega.\]
By  continuity of $\vphi$, we have:
\[\lim\limits_{n\to \infty} \vphi(u\alpha^n) = \vphi(u\alpha), \qquad \lim\limits_{n\to \infty} \vphi(v\alpha^n) = \vphi(v\alpha),\]
\[\lim\limits_{n\to \infty} \vphi(u\beta^n) = \vphi(u\beta), \qquad \lim\limits_{n\to \infty} \vphi(v\beta^n) = \vphi(v\beta).\]
These equations imply that if $u, v, \alpha, \beta$ violate prefix-monotonicity, then so do $u, v, \alpha^n, \beta^n$ for some $n\in\mathbb{N}$ (and $\alpha^n, \beta^n$ are ultimately periodic for every $n$).

Thus, we assume from now on that $\alpha, \beta$ are ultimately periodic. Then $\alpha = p (q)^\omega$ and $\beta = w(r)^\omega$ for some $p, q, w, r\in A^*$. Consider an $A$-labeled game graph from Figure \ref{gg} (all its nodes are owned by $\Max$).
\begin{figure}[h!]
\centering
\begin{tikzpicture}[gamegraph]
   
\node[draw, circle] (a) {$a$};
\node[draw, circle, below = 2cm of a] (b) {$b$};
\node[draw, circle, below right = 1cm  and 2 cm of a] (c) {$c$};
\node[draw, circle, right = 4cm of b] (d) {\textcolor{white}{$a$}};
\node[draw, circle, right = 4cm of a] (e) {\textcolor{white}{$a$}};
%

\draw[transition] (a) -- (c) node[midway, above] {$x$};
\draw[transition] (b) -- (c) node[midway, below] {$y$};
\draw[transition] (c) -- (d) node[midway, below] {$u$};
\draw[transition] (c) -- (e) node[midway, above] {$p$};


\path[transition,overlay]
  (e) edge [in=330,out=30,out distance=3cm,in distance=3cm] node[right] {$q$} (e);

\path[transition,overlay]
  (d) edge [in=330,out=30,out distance=3cm,in distance=3cm] node[right] {$v$} (d);


\end{tikzpicture}
  \caption{A game graph where $\vphi$ is not positionally determined.}
\label{gg}
\end{figure}
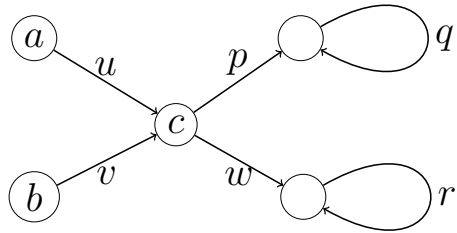

There are two positional strategies of Max in this game graph, one which goes along $p$ from $c$, and the other which goes along $w$ from $c$. The first one is not optimal when the game starts in $b$, and the second one is not optimal when the game starts in $a$ (because of \eqref{pm_violate}). So $\vphi$ is not positionally determined in this game graph.
\end{proof}
\begin{rem}
In this argument, it is crucial that our definition of positional determinacy is ``uniform''. That is, we require that some positional strategy is optimal for all the nodes. Allowing  each starting node to have its own optimal positional strategy gives us a weaker, ``non-uniform'' version of positional determinacy. It is not clear whether non-uniform positional determinacy implies prefix-monotonicity for continuous payoffs. At the same time, we are not even aware of a payoff which is positional in the non-uniform sense, but not in the uniform sense.
\end{rem}

\section{Inductive Argument}
\label{sec:ind}

In this section, we show that any continuous prefix-monotone payoff is positionally determined, using the following sufficient condition due to Gimbert and Zielonka~\cite[Theorem 1]{gimbert2004can}:
\begin{prop}
\label{fairly_mixing}
Let $A$ be a finite set. Any payoff $\vphi\colon A^\omega\to\mathbb{R}$, satisfying the following three conditions:
\begin{itemize}
\item \textbf{(a)} for all $u\in A^*$ and $\alpha,\beta\in A^\omega$ we have that
$\vphi(\alpha) \le \vphi(\beta) \implies \vphi(u\alpha) \le \vphi(u\beta)$;
\item \textbf{(b)} for all $u\in A^+$ and $\alpha \in A^\omega$ we have that
\[\min\{\vphi(u^\omega), \vphi(\alpha)\} \le \vphi(u\alpha) \le \max\{\vphi(u^\omega), \vphi(\alpha)\};\]
\item \textbf{(c)} for any infinite sequence  $\{x_n\in A^+\}_{n = 0}^\infty$ it holds that:
\begin{align*}
&\min\{\vphi(x_0 x_2 x_4\ldots), \vphi(x_1 x_3 x_5\ldots), \inf_{n\ge 0}\vphi(x_n^\omega)\}\le\vphi(x_0 x_1 x_2\ldots) \\
&\le\max\{\vphi(x_0 x_2 x_4\ldots), \vphi(x_1 x_3 x_5\ldots), \sup_{n\ge 0}\vphi(x_n^\omega)\}\\
\end{align*}
\end{itemize}
is positionally determined.
\end{prop}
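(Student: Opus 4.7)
The plan is induction on the number of edges $|E|$ of the game graph $G$. The base case is when every node of $G$ has out-degree exactly $1$: from each starting node there is then a unique infinite play, so the forced pair of positional strategies is automatically an equilibrium. For the inductive step, pick a node $v \in V$ of out-degree at least $2$; by duality we may assume $v \in V_\Max$. Partition its outgoing edges into two non-empty sets $E_1 \sqcup E_2$, and let $G_1, G_2$ be the subgraphs of $G$ obtained by keeping only the $E_i$-edges out of $v$. Each has strictly fewer edges than $G$, so by the induction hypothesis they admit positional equilibria $(\sigma_1,\tau_1)$ and $(\sigma_2,\tau_2)$; write $\val_i \colon V \to \mathbb{R}$ for the common value of this equilibrium in $G_i$.

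Next I would compare $\val_1(v)$ and $\val_2(v)$. Without loss of generality assume $\val_1(v) \ge \val_2(v)$, and propose $(\sigma_1,\tau_1)$ itself as the positional equilibrium in $G$. Because $\sigma_1$ only uses $E_1$-edges out of $v$, any play consistent with $\sigma_1$ in $G$ is already a play in $G_1$, so the value of $\sigma_1$ in $G$ equals $\val_1$ pointwise. The delicate direction is to show that $\tau_1$, viewed now as a positional strategy of Min in $G$, still has value at most $\val_1$: one must bound from above the payoff of any play in which Max exploits his extra freedom and takes some $E_2$-edge at $v$.

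This bounding step is where conditions \textbf{(a)}, \textbf{(b)}, \textbf{(c)} enter, and it is the main obstacle. I would decompose any play against $\tau_1$ into successive finite excursions between consecutive visits to $v$, possibly followed by an infinite tail that never returns to $v$. Excursions whose first edge lies in $E_1$ behave as $G_1$-excursions of $\tau_1$ and hence carry values bounded in terms of $\val_1(v)$; excursions whose first edge lies in $E_2$ live in $G_2$, and after transporting Min's response from $\tau_1$ to $\tau_2$ via \textbf{(a)} (which says that the $\vphi$-order on two suffixes is preserved when prepending a common prefix) they are bounded by $\val_2(v) \le \val_1(v)$. Condition \textbf{(b)} then handles plays that loop finitely many times through $v$ and then escape into a single infinite suffix, since $\vphi(u\alpha)$ is sandwiched between $\vphi(u^\omega)$ and $\vphi(\alpha)$. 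Condition \textbf{(c)} assembles the infinitely-many-excursions case: the value of the interleaved play $x_0 x_1 x_2 \ldots$ is bounded above by the maximum of $\vphi(x_0 x_2 \ldots)$, $\vphi(x_1 x_3 \ldots)$, and $\sup_n \vphi(x_n^\omega)$, each of which I would argue is at most $\val_1(v)$ using the inductive bounds on the excursion values.

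The technically fiddly point I expect to wrestle with is the ``tail'' case of a play that eventually stops returning to $v$: there one has to exhibit a $G_1$- or $G_2$-consistent play with the same infinite suffix and invoke \textbf{(a)} to transport the value bound across the common finite prefix. The symmetric case $\val_1(v) < \val_2(v)$, as well as the dual case $v \in V_\Min$, is handled by swapping the roles of $E_1$ and $E_2$ (resp.\ of Max and Min), using that conditions \textbf{(a)}--\textbf{(c)} are symmetric under reversal of the order on $\mathbb{R}$.
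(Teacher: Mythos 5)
The paper itself does not prove this proposition: it is imported verbatim from Gimbert and Zielonka \cite[Theorem 1]{gimbert2004can}, with only the remark that the proof goes by induction on the number of edges. Your plan follows the same general route as that cited proof (edge induction, splitting the out-edges of a node $v$ of out-degree $\ge 2$ into $E_1\sqcup E_2$, comparing the values of the two subgames at $v$), so the skeleton is right. The problem is in the one step you yourself flag as delicate, and the fix you propose for it does not work.

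Having assumed $\val_1(v)\ge\val_2(v)$ and proposed $(\sigma_1,\tau_1)$ as the equilibrium of $G$, you must bound every play consistent with $\tau_1$ in $G$, and in particular every excursion from $v$ that begins with an $E_2$-edge. Such an excursion $g$ is consistent with $\tau_1$, not with $\tau_2$, and $\tau_1$ carries no optimality guarantee in $G_2$; the quantity $\vphi(\lab(g)^\omega)$ is the payoff of a $\tau_1$-consistent play in $G_2$ and need not be bounded by $\val_2(v)$, nor by $\val_1(v)$. Condition \textbf{(a)} cannot repair this: it compares $\vphi(u\alpha)$ with $\vphi(u\beta)$ for a \emph{common} prefix $u$, so it lets you transport an inequality between two suffixes across a prefix, but it gives you no way to replace the actual excursion $g$ that Min produced under $\tau_1$ by the different excursion $\tau_2$ would have produced. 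This is not a presentational issue; it is the heart of the theorem, and as written the argument establishes only that Max can secure $\val_1$ positionally, not that Min has a positional strategy securing the same value. A secondary gap of the same flavour sits in your use of \textbf{(c)}: if the $x_n$ are the individual excursions in order, then $x_0x_2x_4\ldots$ is not the label of any consistent play and cannot be bounded by the induction hypothesis; one has to group maximal runs of consecutive $E_1$-excursions and $E_2$-excursions into alternating blocks so that the even-indexed blocks concatenate to a $G_1$-play and the odd-indexed ones to a $G_2$-play (and even then the $G_2$-play is only $\tau_1$-consistent, which returns you to the first gap). To complete the proof you need either a genuinely different candidate for Min (e.g.\ first establishing the value of $G$ with a finite-memory combination of $\tau_1$ and $\tau_2$ and then running a separate argument, or a dual induction over Min's edges), or a comparison lemma relating $\val_1$ and $\val_2$ at \emph{all} vertices that controls $\tau_1$'s behaviour inside $G_2$; none of this is in the sketch.
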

We observe that in case of continuous payoffs, one can get rid of the conditions \emph{\textbf{(b)}} and \emph{\textbf{(c)}} in this Proposition. A weaker version of this statement was proved in the on-line version of~\cite{gimbert2004can}. Namely, it was shown there that one can get rid of the condition \emph{\textbf{(c)}} for continuous payoffs.

\begin{prop}
\label{agivesb}
For continuous payoffs, the condition \textbf{(a)} of Proposition \ref{fairly_mixing} implies the conditions \textbf{(b)} and \textbf{(c)} of Proposition \ref{fairly_mixing}.
\end{prop}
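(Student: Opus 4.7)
The plan is to prove (b) first, and then combine (b) with continuity to establish a statement stronger than (c), from which (c) follows immediately.

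To prove (b), I fix $u \in A^+$ and $\alpha \in A^\omega$ and study the sequence $\{\vphi(u^n\alpha)\}_{n \ge 0}$. Applying (a) with prefix $u^n$ shows that the sign of $\vphi(u^{n+1}\alpha) - \vphi(u^n\alpha)$ coincides with the sign of $\vphi(u\alpha) - \vphi(\alpha)$ and is therefore independent of $n$, so the sequence is monotone. Moreover, $u^n\alpha$ and $u^\omega$ agree on the prefix $u^n$, whose length $n|u|$ tends to infinity, so Proposition~\ref{cont_help} yields $\vphi(u^n\alpha)\to\vphi(u^\omega)$. Being the second term of a monotone sequence that starts at $\vphi(\alpha)$ and converges to $\vphi(u^\omega)$, the value $\vphi(u\alpha)$ is sandwiched between these two numbers, which is exactly (b).

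For (c), I would prove the stronger statement $\vphi(x_0 x_1 x_2\ldots)\in [\inf_n\vphi(x_n^\omega),\, \sup_n \vphi(x_n^\omega)]$, which implies both bounds in (c). Approximate $\alpha := x_0 x_1 \ldots$ by $\alpha^{(n)}:=x_0 x_1\ldots x_{n-1}(x_{n-1})^\omega$ for $n \ge 1$. By induction on $n$, using (b) applied with $u = x_0$ (and noting that $\alpha^{(n)} = x_0\cdot \widetilde\alpha^{(n-1)}$, where $\widetilde\alpha^{(n-1)}$ is the analogous approximation for the shifted sequence $x_1, x_2, \ldots$), one obtains $\vphi(\alpha^{(n)})\in [\inf_{i<n}\vphi(x_i^\omega),\, \sup_{i<n}\vphi(x_i^\omega)]$; the base case $n=1$ is just $\alpha^{(1)}=(x_0)^\omega$, so the bound is trivial. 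Since $\alpha^{(n)}$ and $\alpha$ share the prefix $x_0\ldots x_{n-1}$ of length at least $n$, Proposition~\ref{cont_help} gives $\vphi(\alpha^{(n)})\to\vphi(\alpha)$, and passing to the limit in the inductive bound yields the stronger statement.

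The main subtlety, such as it is, lies in choosing the approximating sequence $\alpha^{(n)}$: appending $(x_{n-1})^\omega$ after the prefix $x_0\ldots x_{n-1}$ is crucial because it simultaneously preserves the prefix $x_0\ldots x_{n-1}$ of $\alpha$ (so that $\alpha^{(n)}\to\alpha$ in the product topology and continuity applies) and leaves a tail that the final step of the inductive application of (b) can absorb cleanly into one of the values $\vphi(x_i^\omega)$.
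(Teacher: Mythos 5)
Your proof is correct. For condition \textbf{(b)} your argument is essentially the paper's: both iterate \textbf{(a)} to make $n\mapsto\vphi(u^n\alpha)$ monotone and use continuity to identify its limit with $\vphi(u^\omega)$; the paper splits into two one-sided cases while you phrase it as sandwiching the second term of a monotone convergent sequence, which is only a cosmetic difference. For condition \textbf{(c)}, however, you take a genuinely different and cleaner route. The paper proves the conditional statement that $\vphi(x_0x_1x_2\ldots)>\sup_{n}\vphi(x_n^\omega)$ forces $\vphi(x_0x_1x_2\ldots)\le\vphi(x_0x_2x_4\ldots)$, which requires a two-stage induction: first comparing consecutive shifts $\vphi(x_nx_{n+1}x_{n+2}\ldots)$, then splicing the even-indexed words back in via \textbf{(a)}. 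You instead prove the unconditional bound $\vphi(x_0x_1x_2\ldots)\in[\inf_n\vphi(x_n^\omega),\,\sup_n\vphi(x_n^\omega)]$ by a short induction on the truncations $x_0\ldots x_{n-1}(x_{n-1})^\omega$ using only \textbf{(b)}, followed by a passage to the limit; this avoids the odd/even interleaving entirely and shows, as a bonus, that for continuous payoffs satisfying \textbf{(a)} the terms $\vphi(x_0x_2x_4\ldots)$ and $\vphi(x_1x_3x_5\ldots)$ in condition \textbf{(c)} are redundant. The only nitpick is the phrase ``the sign of $\vphi(u^{n+1}\alpha)-\vphi(u^n\alpha)$ coincides with the sign of $\vphi(u\alpha)-\vphi(\alpha)$'': condition \textbf{(a)} only preserves the weak inequality (the difference may collapse to zero along the way), but weak monotonicity is all your argument uses, so nothing breaks.
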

\begin{proof}
Take any finite set $A$ and any continuous payoff $\vphi\colon A^\omega\to\mathbb{R}$ satisfying the condition \emph{\textbf{(a)}} of Proposition \ref{fairly_mixing}. We first show that $\vphi$ satisfies the condition \emph{\textbf{(b)}} of this proposition.
We will only show that $\vphi(u\alpha) \le \max\{\vphi(u^\omega), \vphi(\alpha)\}$, the other inequality from this condition can be proved similarly. If $\vphi(u\alpha) \le \vphi(\alpha)$, then we are done. Assume now that $\vphi(u\alpha) > \vphi(\alpha)$. By repeatedly applying \emph{\textbf{(a)}}, we obtain $\vphi(u^{i+1}\alpha) \ge \vphi(u^i\alpha)$ for every $i\in\mathbb{N}$. In particular, for every $i\ge 1$ we get that $\vphi(u^i\alpha) \ge \vphi(u\alpha)$. By continuity of $\vphi$, we have that $\lim_{i\to\infty}\vphi(u^i\alpha) = \vphi(u^\omega)$. Hence, $\vphi(u^\omega) \ge \vphi(u\alpha)$.

Now we show that $\vphi$ satisfies the condition \emph{\textbf{(c)}} of Proposition \ref{fairly_mixing}. We will only show that $\vphi(x_0 x_1 x_2\ldots)\le\max\{\vphi(x_0 x_2 x_4\ldots), \vphi(x_1 x_3 x_5\ldots), \sup_{n\ge 0}\vphi(x_n^\omega)\}$, the other inequality from this condition has the same proof. Namely, we will show that if $\vphi(x_0 x_1 x_2\ldots) > \sup_{n\ge 0}\vphi(x_n^\omega)$, then $\vphi(x_0 x_1 x_2\ldots) \le \vphi(x_0x_2 x_4\ldots)$.   Note that this claim is stronger than we need.

 First, we show that $\vphi(x_n x_{n+1} x_{n+2}\ldots) \le \vphi(x_{n+1} x_{n + 2} x_{n+3}\ldots)$ for every $n\ge 0$. This can be easily proved by induction on $n$. Let us start with the induction base. By the condition \textbf{\emph{(b)}}, which is already established for $\vphi$, we have $\vphi(x_0 x_1 x_2\ldots) \le\max\{\vphi(x_0^\omega), \vphi(x_1 x_2 x_3\ldots)\}$. Since, $\vphi(x_0 x_1 x_2\ldots) > \vphi(x_0^\omega)$, we have $\vphi(x_0 x_1 x_2\ldots) \le \vphi(x_1 x_2 x_3\ldots)$.

Let us now perform the induction step. Assume that it is already proved that $\vphi(x_n x_{n+1} x_{n+2}\ldots) \le \vphi(x_{n+1} x_{n + 2} x_{n+3}\ldots)$ for all $n\le N$. In particular, this means that $\vphi(x_{N+1} x_{N+2} x_{N+3}\ldots) \ge \vphi(x_0 x_1 x_2\ldots) \ge \vphi(x_{N+1}^\omega)$. Then, by the same argument as in the induction base, we get $\vphi(x_{N+1} x_{N+2} x_{N+3}\ldots) \le\vphi(x_{N+2} x_{N+3} x_{N+4}\ldots)$.

We will now prove that $\vphi(x_0 x_2 \ldots x_{2n} x_{2n+1} x_{2n + 2}\ldots) \ge \vphi(x_0 x_1 x_2\ldots)$ for every $n\ge 0$. For $n = 0$, the left-hand side and the right-hand side coincide. Then we show that $\vphi(x_0 x_2 \ldots x_{2n} x_{2n+1} x_{2n + 2}\ldots) \le \vphi(x_0 x_2 \ldots x_{2n + 2} x_{2n+3} x_{2n + 4}\ldots)$ for every $n \ge 0$. Due to the condition \textbf{\emph{(b)}}, we have $\vphi(x_{2n + 1} x_{2n + 2} x_{2n + 3}\ldots) \le \max\{\vphi(x_{2n + 1}^\omega), \vphi(x_{2n + 2} x_{2n + 3} x_{2n + 4}\ldots)\}$. On the other hand, as we have shown, $\vphi(x_{2n+1} x_{2n + 2} x_{2n + 3} \ldots) \ge \vphi(x_0 x_1 x_2\ldots) > \vphi(x_{2n + 1}^\omega)$. Hence $\vphi(x_{2n + 1} x_{2n + 2} x_{2n + 3}\ldots) \le \vphi(x_{2n + 2} x_{2n + 3} x_{2n + 4}\ldots)$. It remains to apply \textbf{\emph{(a)}} by appending $x_0 x_2\ldots x_{2n}$ to both sides.

Thus, we have established
that  $\vphi(x_0 x_2 \ldots x_{2n} x_{2n+1} x_{2n + 2}\ldots) \ge \vphi(x_0 x_1 x_2\ldots)$ for every $n \ge 0$. By continuity of $\vphi$, the left-hand side of this inequality converges to $\vphi(x_0 x_2 x_4\ldots)$ as $n\to\infty$. Hence, we get that $\vphi(x_0 x_2 x_4\ldots) \ge \vphi(x_0 x_1 x_2\ldots)$, as required.
\end{proof}

Thus, to establish that some continuous payoff is positionally determined, it is enough to demonstrate that this payoff satisfies the condition \textbf{\emph{(a)}} of Proposition \ref{fairly_mixing}. Let us now reformulate this condition using the following definition.

\begin{defi}
\label{shift_det}
Let $A$ be a finite set. A payoff $\vphi\colon A^\omega\to\mathbb{R}$ is called \textbf{shift-deterministic} if for all $a\in A, \beta,\gamma\in A^\omega$ we have
$\vphi(\beta) = \vphi(\gamma) \implies \vphi(a\beta) = \vphi(a\gamma)$.
\end{defi}

\begin{clm}
\label{obv}
Let $A$ be a finite set. A payoff $\vphi\colon A^\omega\to\mathbb{R}$ satisfies the condition \textbf{\emph{(a)}} of Proposition \ref{fairly_mixing} if and only if $\vphi$ is prefix-monotone and shift-deterministic.
\end{clm}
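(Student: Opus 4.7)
The plan is to prove the two implications separately, with the forward direction ((a) implies prefix-monotonicity and shift-determinism) being essentially immediate, and the reverse direction requiring a case split depending on whether $\vphi(\alpha)$ and $\vphi(\beta)$ are equal or strictly ordered.

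For the forward direction, I would first observe that shift-determinism follows from (a) applied to a single letter $u=a$ in both directions: if $\vphi(\beta) = \vphi(\gamma)$, then we have both $\vphi(\beta) \le \vphi(\gamma)$ and $\vphi(\gamma) \le \vphi(\beta)$, so (a) gives $\vphi(a\beta) \le \vphi(a\gamma)$ and $\vphi(a\gamma) \le \vphi(a\beta)$. For prefix-monotonicity, suppose for contradiction that some $u,v \in A^*$ and $\beta,\gamma \in A^\omega$ satisfy $\vphi(u\beta) > \vphi(u\gamma)$ and $\vphi(v\beta) < \vphi(v\gamma)$. Comparing $\vphi(\beta)$ with $\vphi(\gamma)$: if $\vphi(\beta) \le \vphi(\gamma)$, then (a) applied with prefix $u$ yields $\vphi(u\beta) \le \vphi(u\gamma)$, contradicting the first inequality; symmetrically, if $\vphi(\beta) \ge \vphi(\gamma)$, then (a) applied with prefix $v$ contradicts the second.

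For the reverse direction, assume $\vphi$ is prefix-monotone and shift-deterministic, and suppose $\vphi(\alpha) \le \vphi(\beta)$. I would split into two cases. In the equality case $\vphi(\alpha) = \vphi(\beta)$, a straightforward induction on $|u|$ using shift-determinism (applied one letter at a time) yields $\vphi(u\alpha) = \vphi(u\beta)$, which gives the desired inequality. In the strict case $\vphi(\alpha) < \vphi(\beta)$, suppose for contradiction that $\vphi(u\alpha) > \vphi(u\beta)$; then taking $v$ to be the empty word, we have $\vphi(v\alpha) = \vphi(\alpha) < \vphi(\beta) = \vphi(v\beta)$, and prefix-monotonicity is violated with the roles of $\beta$ and $\gamma$ in Definition~\ref{def:prefix_monotone} played by $\alpha$ and $\beta$, respectively.

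The only delicate point is recognizing that prefix-monotonicity alone does not suffice for the reverse direction: since prefix-monotonicity only constrains pairs of \emph{strict} inequalities of opposite sign, it cannot rule out the situation where $\vphi(\alpha) = \vphi(\beta)$ but $\vphi(u\alpha) \ne \vphi(u\beta)$. Shift-determinism is exactly what closes this gap, and this is why both properties together are required rather than just one of them. All other steps are routine comparisons with no genuine obstacle.
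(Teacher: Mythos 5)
Your proof is correct and follows essentially the same route as the paper's: shift-determinism from applying condition \textbf{(a)} in both directions, prefix-monotonicity by contraposition of \textbf{(a)}, and for the converse the same case split into $\vphi(\alpha)=\vphi(\beta)$ (handled letter-by-letter via shift-determinism) versus $\vphi(\alpha)<\vphi(\beta)$ (handled directly by prefix-monotonicity with the empty prefix as the second witness). Your closing remark on why shift-determinism cannot be dropped is a correct and worthwhile observation, though not needed for the proof itself.
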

\begin{proof}
Assume first that $\vphi$ satisfies the condition \textbf{\emph{(a)}} of Proposition \ref{fairly_mixing}. It is shift-deterministic, because
\begin{align*}
\vphi(\beta) = \vphi(\gamma) &\implies \vphi(\beta)\le \vphi(\gamma)\land \vphi(\gamma)\le \vphi(\beta)\\ &\implies \vphi(a\beta)\le \vphi(a\gamma)\land \vphi(a\gamma)\le \vphi(a\beta) \implies \vphi(a\beta) =  \vphi(a\gamma)
\end{align*}
for every $a\in A, \beta,\gamma\in A^\omega$. In turn, assume for contradiction that $\vphi$ is not prefix-monotone. Then $\vphi(u\beta) > \vphi(u\gamma)$ and $\vphi(v\beta) < \vphi(v\gamma)$ for some $u,v\in A^*$ and $\beta,\gamma\in A^\omega$. Due to the contraposition to the condition \textbf{\emph{(a)}} of Proposition \ref{fairly_mixing}, we have $\vphi(u\beta) > \vphi(u\gamma) \implies \vphi(\beta) > \vphi(\gamma)$ and $\vphi(v\gamma) > \vphi(v\beta) \implies \vphi(\gamma) > \vphi(\beta)$, contradiction. 

Now, assume that $\vphi$ is prefix-monotone and shift-deterministic. Take any $u\in A^*$ and $\alpha,\beta\in A^\omega$ such that $\vphi(\alpha)\le \vphi(\beta)$. We have to derive that $\vphi(u\alpha)\le \vphi(u\beta)$. If $\vphi(\alpha) =  \vphi(\beta)$, then $\vphi(u\alpha) = \vphi(u\beta)$ because $\vphi$ is shift-deterministic (we apply the definition of the shift-determinism to letters of $u$ from right to left). If $\vphi(\alpha) < \vphi(\beta)$, then $\vphi(u\alpha)\le \vphi(u\beta)$ because otherwise $\vphi$ is not prefix-monotone.
\end{proof}

The above discussion gives the following sufficient condition for positional determinacy.

\begin{prop}
\label{cpmsd}
Let $A$ be a finite set. Any continuous prefix-monotone shift-deterministic  payoff $\vphi\colon A^\omega\to\mathbb{R}$ is positionally determined.
\end{prop}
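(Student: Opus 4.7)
The plan is to assemble this proposition directly from the three ingredients already established in this section, so the proof will be short and amounts to threading together the cited results. Given a continuous, prefix-monotone, shift-deterministic payoff $\vphi\colon A^\omega\to\mathbb{R}$, I would first invoke Claim \ref{obv}, whose content is precisely that prefix-monotonicity together with shift-determinism is equivalent to condition \textbf{(a)} of Proposition \ref{fairly_mixing}. Thus $\vphi$ satisfies \textbf{(a)}.

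Next I would apply Proposition \ref{agivesb}. Since $\vphi$ is continuous and now known to satisfy \textbf{(a)}, this proposition guarantees that conditions \textbf{(b)} and \textbf{(c)} of Proposition \ref{fairly_mixing} hold as well. Consequently, all three hypotheses of Gimbert and Zielonka's sufficient condition are verified, and a direct appeal to Proposition \ref{fairly_mixing} then concludes that $\vphi$ is positionally determined.

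There is no real obstacle here: the substance has been absorbed into Proposition \ref{agivesb} (which does the nontrivial work of using continuity to bootstrap \textbf{(a)} into \textbf{(b)} and \textbf{(c)}) and into Claim \ref{obv} (which is a purely syntactic reformulation of \textbf{(a)}). The statement of Proposition \ref{cpmsd} is essentially a packaging of these observations, and I expect the written proof to occupy only two or three lines. The only thing worth double-checking is that the chain of implications invokes exactly the three stated hypotheses, which it does: continuity is used to feed Proposition \ref{agivesb}, while prefix-monotonicity and shift-determinism together feed Claim \ref{obv}.
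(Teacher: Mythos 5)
Your proposal is correct and matches the paper exactly: the paper presents Proposition \ref{cpmsd} as an immediate consequence of "the above discussion," which is precisely the chain Claim \ref{obv} (to get condition \textbf{(a)}), then Proposition \ref{agivesb} (continuity upgrades \textbf{(a)} to \textbf{(b)} and \textbf{(c)}), then Proposition \ref{fairly_mixing}. Nothing further is needed.
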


Still, some argument is needed for continuous prefix-monotone payoffs that are not shift-deterministic. To tie up loose ends, we prove the following:

\begin{prop}
\label{phi_to_psi}
Let $A$ be a finite set and let $\vphi\colon A^\omega\to\mathbb{R}$ be a continuous prefix-monotone payoff. Then $\vphi  = g\circ \psi$ for some continuous prefix-monotone shift-deterministic payoff $\psi\colon A^\omega\to\mathbb{R}$ and for some continuous\footnote{Throughout the paper we call a function $f\colon S\to\mathbb{R}$, $S\subseteq \mathbb{R}^n$ continuous if $f$ is continuous with respect to the restriction of the standard topology of $\mathbb{R}^n$ to $S$.} non-decreasing function $g\colon \psi(A^\omega) \to\mathbb{R}$ (note that since $g$ is defined on a compact and is continuous, it is also bounded).
\end{prop}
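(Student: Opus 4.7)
My plan for Proposition \ref{phi_to_psi} is to build $\psi$ by aggregating the values of $\varphi$ on all prefix-extensions of the input. Fix any enumeration $A^* = \{u_0, u_1, u_2, \ldots\}$ with $u_0$ the empty word, and set
\[\psi(\beta) = \sum_{n=0}^\infty 2^{-n} \varphi(u_n \beta).\]
Since $\varphi$ is bounded (being continuous on the compact space $A^\omega$), the series converges uniformly in $\beta$, so $\psi$ is continuous. Morally, $\psi$ is the ``most refined'' factor of $\varphi$ on which prefix-monotonicity still routes cleanly: two infinite words give equal $\psi$-values precisely when they are indistinguishable by $\varphi$ after appending any common prefix.

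Next I would verify the three properties of $\psi$. The key observation is that applying prefix-monotonicity of $\varphi$ to a pair $(\beta,\gamma)$ produces a dichotomy: either $\varphi(w\beta) \leq \varphi(w\gamma)$ for every $w \in A^*$, or $\varphi(w\beta) \geq \varphi(w\gamma)$ for every $w$. Consequently, for every $u \in A^*$, the summands $2^{-n}(\varphi(u_n u\beta) - \varphi(u_n u\gamma))$ all share a sign, so $\psi(u\beta) - \psi(u\gamma)$ has the same sign for every $u$, which is exactly prefix-monotonicity of $\psi$. For shift-determinism: $\psi(\beta) = \psi(\gamma)$ and the dichotomy force every summand to vanish, so $\varphi(u\beta) = \varphi(u\gamma)$ for every $u$; rewriting $\varphi(u(a\beta)) = \varphi((ua)\beta) = \varphi((ua)\gamma) = \varphi(u(a\gamma))$ then yields $\psi(a\beta) = \psi(a\gamma)$ for every $a \in A$.

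Now I would define $g$ by $g(\psi(\beta)) = \varphi(\beta)$; this is well-defined because the $n=0$ term alone forces $\varphi(\beta) = \varphi(\gamma)$ whenever $\psi(\beta) = \psi(\gamma)$. Monotonicity of $g$ uses the dichotomy once more: if $\psi(\beta) < \psi(\gamma)$, then the case $\varphi(w\beta) \geq \varphi(w\gamma)$ for all $w$ is impossible, so $\varphi(w\beta) \leq \varphi(w\gamma)$ for all $w$, and taking $w$ to be empty gives $\varphi(\beta) \leq \varphi(\gamma)$.

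The main obstacle is continuity of $g$ on the (compact) set $\psi(A^\omega) \subseteq \mathbb{R}$, and I would handle it through a compactness argument. Suppose $y_n \to y_0$ in $\psi(A^\omega)$; pick $\beta_n$ with $\psi(\beta_n) = y_n$ and any $\beta_0$ with $\psi(\beta_0) = y_0$. Given an arbitrary subsequence of $(\beta_n)$, compactness of $A^\omega$ provides a further subsequence $\beta_{n_k} \to \beta^*$. Continuity of $\psi$ yields $\psi(\beta^*) = y_0 = \psi(\beta_0)$, so $\varphi(\beta^*) = \varphi(\beta_0) = g(y_0)$, and then continuity of $\varphi$ gives $g(y_{n_k}) = \varphi(\beta_{n_k}) \to \varphi(\beta^*) = g(y_0)$. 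Since every subsequence has a further subsequence converging to $g(y_0)$, we conclude $g(y_n) \to g(y_0)$, as required.
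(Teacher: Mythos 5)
Your proposal is correct and follows essentially the same route as the paper: the same weighted sum of $\vphi$ over all finite prefix-words (only with a different, equally valid choice of summable weights), and the same dichotomy argument for prefix-monotonicity, shift-determinism, well-definedness and monotonicity of $g$. The only divergence is the last step: you prove continuity of $g$ by a compactness/subsequence argument, whereas the paper reads off the stronger fact that $g$ is $1$-Lipschitz directly from the series (the empty-word term has weight $1$ and all other terms share its sign after the dichotomy, so $|g(x)-g(y)|\le|x-y|$); both are valid.
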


Due to Corollary \ref{non_decr_comp}, this proposition means that all continuous prefix-monotone payoffs are positionally determined. In fact, we do not need continuity of $g$ here, but it will be useful later. Thus, once we establish Proposition \ref{thm:char}, our first proof of Theorem \ref{thm:char} will be finished.

\begin{proof}[Proof of Proposition \ref{phi_to_psi}]
Define a payoff $\psi\colon A^\omega\to\mathbb{R}$ as follows:
\begin{equation}
\label{psi_def}
\psi(\gamma) = \sum\limits_{w\in A^*} \left(\frac{1}{|A| + 1}\right)^{|w|} \vphi(w\gamma), \qquad \gamma \in A^\omega.
\end{equation}
First, why is $\psi$ well-defined, i.e.,  why does this series converge? Since $A^\omega$ is compact, so is $\vphi(A^\omega) \subseteq \mathbb{R}$, because $\vphi$ is continuous. Hence, $\vphi(A^\omega) \subseteq [-W, W]$ for some $W > 0$, which means that \eqref{psi_def} is bounded by the following absolutely converging series:
\[\sum\limits_{w\in A^*} W\cdot\left(\frac{1}{|A| + 1}\right)^{|w|}.\]

We shall show that $\psi$ is continuous, prefix-monotone and shift-deterministic, and that $\vphi = g\circ \psi$ for some continuous non-decreasing $g\colon \psi(A^\omega)\to\mathbb{R}$.

\bigskip

\textbf{Why is $\psi$ continuous?}  We will use Proposition \ref{cont_help}. Consider any $\alpha \in A^\omega$ and any infinite sequence $\{\beta_n\}_{n\in\mathbb{N}}$ of elements of $A^\omega$ such that for all $n$, the words $\alpha$ and $\beta_n$ have the same prefix of length $n$. We have to show that $\psi(\beta_n)$ converges to $\psi(\alpha)$ as $n\to\infty$. By definition:
\[\psi(\beta_n) = \sum\limits_{w\in A^*} \left(\frac{1}{|A| + 1}\right)^{|w|} \vphi(w\beta_n),\qquad \psi(\alpha) = \sum\limits_{w\in A^*} \left(\frac{1}{|A| + 1}\right)^{|w|} \vphi(w\alpha).\]
For $m\in\mathbb{N}$, define:
\[S_n^m = \sum\limits_{w\in A^*, |w|\le m} \left(\frac{1}{|A| + 1}\right)^{|w|} \vphi(w\beta_n), \qquad S^m = \sum\limits_{w\in A^*, |w|\le m} \left(\frac{1}{|A| + 1}\right)^{|w|} \vphi(w\beta). \]
By continuity of $\vphi$, we have for every $m\in\mathbb{N}$ that:
\begin{align*}
\lim\limits_{n\to\infty} S_n^m &=  \sum\limits_{w\in A^*, |w|\le m} \lim\limits_{n\to\infty}\left(\frac{1}{|A| + 1}\right)^{|w|} \vphi(w\beta_n)\\
&=\sum\limits_{w\in A^*, |w|\le m} \left(\frac{1}{|A| + 1}\right)^{|w|} \vphi(w\beta) = S^m
\end{align*}
(the sum is finite, which means that we can interchange it with the limit). On the other hand, we can bound the difference between $\psi(\beta_n)$ and $S_n^m$ as follows:
\begin{align*}
|\psi(\beta_n) - S_n^m| &\le \sum\limits_{w\in A^*, |w|> m} \left(\frac{1}{|A| + 1}\right)^{|w|}\cdot |\vphi(w\beta_n)| \\
&\le \sum\limits_{l > m} \left(\frac{|A|}{|A| + 1}\right)^{l} \cdot W \\
&=  (|A| +1)W \cdot\left(\frac{|A|}{|A| + 1}\right)^{m+1} .
\end{align*}
Likewise, we have $|\psi(\beta) - S^m| \le (|A| +1)W \cdot\left(\frac{|A|}{|A| + 1}\right)^{m+1}$. Thus, for every $m$ we obtain:
\begin{align*}
\limsup\limits_{n\to\infty} |\psi(\beta_n) - \psi(\beta)| &\le \limsup\limits_{n\to\infty}\Big[ |\psi(\beta_n) - S_n^m| + |S_n^m- S^m| + |S^m -  \psi(\beta)| \Big]\\
&\le 2(|A| +1)W \cdot\left(\frac{|A|}{|A| + 1}\right)^{m+1}.
\end{align*}
Since $m$ can be arbitrarily large, we obtain $\limsup\limits_{n\to\infty} |\psi(\beta_n) - \psi(\beta)|  = 0$, as required.

\bigskip

\textbf{Why is $\psi$ prefix-monotone?} Take any $\beta, \gamma \in A^\omega$. We have to show that either $\psi(u\beta) \ge \psi(u\gamma)$ for all $u\in A^*$ or $\psi(u\beta) \le \psi(u\gamma)$ for all $u\in A^*$.

Since $\vphi$ is prefix-monotone, then either $\vphi(w\beta) \ge \vphi(w \gamma)$ for all $w\in A^*$ or $\vphi(w\beta) \le \vphi(w \gamma)$ for all $w\in A^*$. Up to swapping $\beta$ and $\gamma$, we may assume that $\vphi(w\beta) \ge \vphi(w \gamma)$ for all $w\in A^*$. Then for any $u\in A^*$ the difference
\[\psi(u\beta) - \psi(u\gamma) = \sum\limits_{w\in A^*} \left(\frac{1}{|A| + 1}\right)^{|w|} \big[\vphi(wu\beta) - \vphi(wu\gamma)\big]\]
consists of non-negative terms. Hence $\psi(u\beta) \ge \psi(u\gamma)$ for all $u\in A^*$, as required.

\bigskip

\textbf{Why is $\psi$ shift-deterministic?} Take any $a\in A$ and $\beta, \gamma\in A^\omega$ with $\psi(\beta) = \psi(\gamma)$. We have to show that $\psi(a\beta) = \psi(a\gamma)$. Indeed, assume that
\[0 = \psi(\beta) - \psi(\gamma) = \sum\limits_{w\in A^*} \left(\frac{1}{|A| + 1}\right)^{|w|} \big[\vphi(w\beta) - \vphi(w\gamma)\big].\]
If this series contains a non-zero term, then it must contain a positive term and a negative term. But this contradicts prefix-monotonicity of $\vphi$. So all the terms in this series must be $0$. That is, we have $\vphi(w\beta) - \vphi(w\gamma) = 0$ for every $w\in A^*$. Therefore,
\[\psi(a\beta) - \psi(a\gamma) = \sum\limits_{w\in A^*} \left(\frac{1}{|A| + 1}\right)^{|w|} \big[\vphi(wa\beta) - \vphi(wa\gamma)\big] = 0.\]

\bigskip

\textbf{Why $\vphi = g\circ \psi$ for some continuous non-decreasing $g\colon \psi(A^\omega)\to\mathbb{R}$?}  Let us first show that
\begin{equation}
\label{for_g}
\mbox{$\vphi(\alpha) > \vphi(\beta) \implies \psi(\alpha) > \psi(\beta)$ for all  $\alpha, \beta \in A^\omega$.}
\end{equation}
Indeed, if $\vphi(\alpha) > \vphi(\beta)$, then we also have $\vphi(w\alpha) \ge \vphi(w\beta)$ for every $w\in A^*$, by prefix-monotonicity of $\vphi$. Now, by definition,
\[\psi(\alpha) - \psi(\beta) = \sum\limits_{w\in A^*} \left(\frac{1}{|A| + 1}\right)^{|w|} \big[\vphi(w\alpha) - \vphi(w\beta)\big].\]
All the terms in this series are non-negative, and the term corresponding to the empty $w$ is strictly positive. So we have $\psi(\alpha) > \psi(\beta)$, as required.

Now, let us demonstrate that \eqref{for_g} implies that $\vphi = g\circ \psi$ for some non-decreasing $g\colon \psi(A^\omega) \to\mathbb{R}$. Namely, define $g$ as follows. For $x\in \psi(A^\omega)$, take an arbitrary $\gamma \in \psi^{-1}(x)$ and set $g(x) = \vphi(\gamma)$. First, why do we have $\vphi = g \circ \psi$? By definition, $g(\psi(\alpha)) = \vphi(\gamma)$ for some $\gamma\in A^\omega$ with $\psi(\alpha) = \psi(\gamma)$. By \eqref{for_g} we also have $\vphi(\alpha) = \vphi(\beta)$, so $g(\psi(\alpha)) = \vphi(\gamma) = \vphi(\alpha)$, as required. Now, why is $g$ non-decreasing? I.e., why for all $x, y\in  \psi(A^\omega)$ we have $x \le y \implies g(x) \le g(y)$? Indeed, $g(x) = \vphi(\gamma_x), g(y) = \vphi(\gamma_y)$ for some $\gamma_x \in \psi^{-1}(x)$ and $\gamma_y\in \psi^{-1}(y)$. Now, since $x \le y$, we have $x = \psi(\gamma_x) \le \psi(\gamma_y) = y$. By taking the contraposition to \eqref{for_g}, we get that $g(x) = \vphi(\gamma_x) \le \vphi(\gamma_y) = g(y)$, as required.

Finally, we show that any $g\colon \psi(A^\omega)\to\mathbb{R}$ such that $\vphi = g\circ \psi$ must be continuous. For that, we show that $|g(x) - g(y)| \le |x - y|$ for all $x, y\in\psi(A^\omega)$.  Take any $\alpha, \beta \in A^\omega$ with $x = \psi(\alpha)$ and $y = \psi(\beta)$. By prefix-monotonicity of $\vphi$ we have that either $\vphi(w\alpha) \ge \vphi(w\beta)$ for all $w\in A^*$ or $\vphi(w\alpha) \le \vphi(w\beta)$ for all $w\in A^*$. Up to swapping $x$ and $y$, we may assume that the first option holds. Then 
\[\psi(\alpha) - \psi(\beta) = \sum\limits_{w\in A^*} \left(\frac{1}{|A| + 1}\right)^{|w|} \big[\vphi(w\alpha) - \vphi(w\beta)\big] \ge \vphi(\alpha) - \vphi(\beta) \ge 0.\]
On the left here we have $x - y$, and on the right we have $\vphi(\alpha) - \vphi(\beta) = g\circ \psi(\alpha) - g\circ \psi(\beta) = g(x) - g(y)$. That is, we get $x - y \ge g(x) - g(y) \ge 0$, and it implies that $|g(x) - g(y)| \le |x - y|$.
\end{proof}

\section{Fixed point argument}
\label{sec:fixed}
Here we present a way of establishing positional determinacy of continuous prefix-monotone shift-deterministic payoffs (Proposition \ref{cpmsd}) via a fixed point argument. Together with Proposition \ref{phi_to_psi}, this constitutes our second proof of Theorem \ref{thm:char}.

Obviously, for any shift-deterministic payoff $\vphi\colon A^\omega\to\mathbb{R}$ and for any $a\in A$ there is a unique function $\sh[a,\vphi]\colon \vphi(A^\omega)\to\vphi(A^\omega)$ such that 
$\sh[a,\vphi]\big(\vphi(\beta)\big) = \vphi(a\beta)$ for all $\beta\in A^\omega$.
\begin{rem}
Sometimes, when $\vphi$ is clear from the context, we will simply write $\sh[a]$ instead of $\sh[a,\vphi]$.
\end{rem}
\begin{clm}
\label{pm_eq}
A shift-deterministic payoff $\vphi\colon A^\omega\to\mathbb{R}$ is prefix-monotone if and only if $\sh[a,\vphi]$ is non-decreasing for every $a\in A$.
\end{clm}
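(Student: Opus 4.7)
The plan is to prove the equivalence directly by unpacking the definitions and using the fact that $\sh[a,\vphi]$ is well-defined precisely because $\vphi$ is shift-deterministic.

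For the ``only if'' direction, assume $\vphi$ is prefix-monotone. Fix $a\in A$ and take $x,y\in\vphi(A^\omega)$ with $x\le y$, together with witnesses $\beta,\gamma\in A^\omega$ satisfying $\vphi(\beta)=x$ and $\vphi(\gamma)=y$. I need to show $\vphi(a\beta)\le\vphi(a\gamma)$. If $x=y$, shift-determinism immediately gives $\vphi(a\beta)=\vphi(a\gamma)$. If $x<y$, then $\vphi(a\beta)>\vphi(a\gamma)$ would, together with $\vphi(\beta)<\vphi(\gamma)$, witness a violation of prefix-monotonicity with $u=a$ and $v$ the empty word; hence $\vphi(a\beta)\le\vphi(a\gamma)$.

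For the ``if'' direction, the key observation is that if $u=a_1 a_2\ldots a_k\in A^*$, then for every $\beta\in A^\omega$
\[\vphi(u\beta)=\sh[a_1,\vphi]\circ\sh[a_2,\vphi]\circ\cdots\circ\sh[a_k,\vphi]\bigl(\vphi(\beta)\bigr),\]
which follows by a straightforward induction on $|u|$ from the defining identity $\sh[a,\vphi](\vphi(\delta))=\vphi(a\delta)$. If each $\sh[a,\vphi]$ is non-decreasing, then so is their composition, for every $u\in A^*$. I would then argue by contraposition: suppose $u,v,\beta,\gamma$ witness a failure of prefix-monotonicity. Without loss of generality assume $\vphi(\beta)\le\vphi(\gamma)$ (otherwise swap $\beta$ and $\gamma$). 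Applying the composition corresponding to $u$, which is non-decreasing, I get $\vphi(u\beta)\le\vphi(u\gamma)$, contradicting $\vphi(u\beta)>\vphi(u\gamma)$.

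There is no real obstacle here; the claim is essentially a restatement of the definitions. The only point worth being careful about is to invoke shift-determinism in the ``only if'' direction to handle the borderline case $\vphi(\beta)=\vphi(\gamma)$, since prefix-monotonicity alone forbids only \emph{strict} order reversals. The rest is bookkeeping.
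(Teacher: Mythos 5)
Your proof is correct and is in substance the same argument as the paper's: the paper simply cites its Claim~\ref{obv} (the equivalence of condition \textbf{(a)} of Proposition~\ref{fairly_mixing} with prefix-monotonicity plus shift-determinism), whose proof contains exactly your case split -- shift-determinism handles $\vphi(\beta)=\vphi(\gamma)$, prefix-monotonicity handles the strict case -- together with the observation that the single-letter condition extends to all of $A^*$ by composing the non-decreasing maps $\sh[a,\vphi]$. You have just inlined that content rather than citing it.
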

\begin{proof} 

A statement that $\sh[a,\vphi]$ is non-decreasing for every $a\in A$ is equivalent to the condition \textbf{\emph{(a)}} of Proposition \ref{fairly_mixing}. In turn, by Claim \ref{obv}, this condition is equivalent to a statement that $\vphi$ is prefix-monotone and shift-deterministic.
\end{proof}

We use this notation to introduce so-called \emph{Bellman's equations}, playing a key role in our fixed point argument.
\begin{defi}
\label{bellman}
Let $A$ be a finite set, $\vphi\colon A^\omega\to\mathbb{R}$ be a shift-deterministic payoff and $G = \langle V, V_\Max, V_\Min, E\rangle$ be an $A$-labeled game graph.

The following equations in $\x\in \vphi(A^\omega)^V$ are called \textbf{Bellman's equations} for $\vphi$ in $G$:
\begin{align}
\label{max_opt}
\x_u &= \max\limits_{e \in E, \source(e) = u} \sh[\lab(e),\vphi](\x_{\target(e)}), \qquad \mbox{for } u\in V_\Max,\\
\label{min_opt}
\x_u &= \min\limits_{e \in E, \source(e) = u} \sh[\lab(e), \vphi](\x_{\target(e)}), \qquad \mbox{for } u\in V_\Min.
\end{align}
\end{defi}

The most important step of our argument is to show the existence of a solution to Bellman's equations.
\begin{prop}
\label{exists_solution}
For any finite set $A$, for any continuous prefix-monotone shift-deterministic payoff $\vphi\colon A^\omega\to\mathbb{R}$ and for any $A$-labeled game graph $G$ there exists a solution to Bellman's equations for $\vphi$ in $G$.
\end{prop}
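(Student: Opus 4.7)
The plan is to realize Bellman's equations as the fixed-point equation for the operator $T\colon\vphi(A^\omega)^V\to\vphi(A^\omega)^V$ whose $u$-th coordinate is $\max_{e\colon\source(e)=u}\sh[\lab(e),\vphi](\x_{\target(e)})$ at Max nodes and the analogous $\min$ at Min nodes. I first need to see that $T$ is well defined: $K:=\vphi(A^\omega)$ is a compact subset of $\mathbb R$, hence closed under finite $\min$ and $\max$; by Claim~\ref{pm_eq} each $\sh[a,\vphi]$ is non-decreasing, and one easily checks it is continuous using continuity of $\vphi$, compactness of $A^\omega$, and shift-determinism. So $T$ is a continuous self-map of $K^V$.

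The quantitative key is a uniform shrinkage statement: setting $\delta_n=\sup_{w\in A^n}\diam\sh[w](K)=\sup_{w\in A^n}\diam\vphi(wA^\omega)$, I want $\delta_n\to 0$. This follows from continuity of $\vphi$ combined with compactness of $A^\omega$ via Proposition~\ref{cont_help}: each $\alpha\in A^\omega$ has a cylinder around it on which $\vphi$ oscillates below any prescribed $\varepsilon$, and finitely many such cylinders cover $A^\omega$. The next step is to iterate $T$ and to exploit that each $\sh[a]$, being non-decreasing, commutes with finite $\min$ and $\max$. Unfolding, $T^n(\x)_u$ equals the min-max value of a depth-$n$ game tree whose leaves carry $\sh[\lab(p),\vphi](\x_{\target(p)})$ over length-$n$ paths $p$ from $u$. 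Because $\min$ and $\max$ are $1$-Lipschitz in every coordinate, this yields
\[\|T^n(\x)-T^n(\y)\|_\infty\le\max_{|p|=n}\diam\sh[\lab(p)](K)\le\delta_n\]
uniformly in $\x,\y\in K^V$.

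Existence of the fixed point is then immediate: any orbit $\x_n=T^n(\x_0)$ satisfies $\|\x_{n+k}-\x_n\|_\infty=\|T^n(\x_k)-T^n(\x_0)\|_\infty\le\delta_n\to 0$, hence is Cauchy; compactness of $K^V$ delivers a limit $\x^*\in K^V$, and continuity of $T$ forces $T(\x^*)=\x^*$, which is a solution to Bellman's equations. The main obstacle is the iteration bound itself: rephrased in the spirit of Banach, it asks for a metric on $K$ making every $\sh[a]$ a uniform contraction, and on $(K,|\cdot|)$ such a metric can fail to exist because the $\delta_n$ may shrink arbitrarily slowly. The trick is to avoid a pointwise Lipschitz estimate altogether and instead commute $\sh[a]$ through $\min$ and $\max$, reducing everything to the common diameter bound $\delta_n$ -- which is exactly where the fact that we are composing continuous non-decreasing functions on a compact subset of $\mathbb R$ becomes essential.
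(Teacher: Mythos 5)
Your proof is correct, and it reaches the fixed point by a genuinely different mechanism than the paper. The paper first proves a general result (Theorem~\ref{general_fact}) producing a continuous metric $d$ on $K=\vphi(A^\omega)$ under which every $\sh[a]$ is $d$-contracting and which is monotone on nested pairs; it then shows that the Bellman operator $T$ strictly decreases the induced sup-metric $D$ and extracts a fixed point by minimizing $D(\x,T(\x))$ over the compact set $K^V$. You instead unfold $T^n$ into a depth-$n$ min-max tree --- legitimate because each $\sh[a]$ is non-decreasing and hence commutes with finite $\min$ and $\max$ --- and bound $\|T^n(\x)-T^n(\y)\|_\infty$ by $\delta_n=\sup_{w\in A^n}\diam\,\vphi(wA^\omega)$, whose convergence to $0$ is a uniform-continuity/compactness fact (the paper proves the same uniform statement as Lemma~\ref{konig} via K\H{o}nig's lemma; your finite-subcover argument is equally valid, and $\delta_n$ is non-increasing, so the Cauchy estimate goes through). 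The limit argument then needs only continuity of $T$, which reduces to continuity of the $\sh[a]$ (Proposition~\ref{check}), as you note. Your route is more elementary and self-contained for Proposition~\ref{exists_solution}: it avoids constructing the contracting metric altogether, which is the technically heaviest part of the paper's fixed point argument. What the paper's machinery buys is reuse: Theorem~\ref{general_fact} and the notion of a contracting base are the backbone of the structural description of continuous positional payoffs in Section~\ref{sec:fixed_app}, so the metric is not dispensable for the paper as a whole --- only for this proposition.
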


This proposition requires some additional work. We first discuss why does it imply that all continuous prefix-monotone shift-deterministic payoffs are positionally determined. Assume that we are given a solution $\x$ to  (\ref{max_opt}--\ref{min_opt}). How can one extract an equilibrium of positional strategies from it? For that, we take any pair of positional strategies that use only  \emph{$\x$-tight} edges. Here an edge $e$ is called  $\x$-tight if $\x_{\source(e)} = \sh[a,\vphi](\x_{\target(e)})$. Note that each node must contain an out-going $\x$-tight edge (this will be any edge on which the maximum/minimum in (\ref{max_opt}--\ref{min_opt}) is attained for this node). So clearly each player has at least one positional strategy which only uses $\x$-tight edges. It remains to show that for continuous prefix-monotone shift-deterministic $\vphi$, any two such strategies of the players form an equilibrium.
\begin{lem}
\label{from_bell_to_pos}
If $A$ is a finite set, $\vphi\colon A^\omega\to\mathbb{R}$ is a continuous prefix-monotone shift-deterministic payoff, and $\x\in \vphi(A^\omega)^V$ is a solution to (\ref{max_opt}--\ref{min_opt}) for an $A$-labeled game graph $G = \langle V, V_\Max, V_\Min, E\rangle$, then the following holds. Let $\sigma^*$ be a positional strategy of Max and $\tau^*$ be a positional strategy of Min such that $\sigma^*(V_\Max)$ and $\tau^*(V_\Min)$ consist only of $\x$-tight edges. Then $(\sigma^*,\tau^*)$ is an equilibrium in $G$.
\end{lem}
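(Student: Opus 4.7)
The plan is to show $\Val[\sigma^*](s) \ge \x_s$ for every $s \in V$, and symmetrically $\Val[\tau^*](s) \le \x_s$. Combined with the general inequality $\Val[\sigma^*](s) \le \Val[\tau^*](s)$, this will yield $\Val[\sigma^*](s) = \Val[\tau^*](s) = \x_s$, i.e., that $(\sigma^*, \tau^*)$ is an equilibrium. I will carry out the first inequality in detail; the second is fully symmetric.

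First I would fix an arbitrary $s\in V$ and an arbitrary infinite path $\mathcal{P} = e_1 e_2 e_3 \ldots \in \Cons(s, \sigma^*)$, setting $v_0 = s$, $v_i = \target(e_i)$ and $a_i = \lab(e_i)$. The key pointwise observation is that $\x_{v_{i-1}} \le \sh[a_i,\vphi](\x_{v_i})$ for every $i \ge 1$: if $v_{i-1} \in V_\Max$, the consistency of $\mathcal{P}$ with $\sigma^*$ forces $e_i = \sigma^*(v_{i-1})$ to be $\x$-tight, giving equality via \eqref{max_opt}; if $v_{i-1} \in V_\Min$, inequality \eqref{min_opt} gives $\x_{v_{i-1}} \le \sh[a_i,\vphi](\x_{v_i})$ for any outgoing edge whatsoever.

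Next, I would iterate this inequality. By Claim \ref{pm_eq}, each $\sh[a,\vphi]$ is non-decreasing, so composing along $\mathcal{P}$ yields $\x_s \le \sh[a_1,\vphi] \circ \sh[a_2,\vphi] \circ \cdots \circ \sh[a_n,\vphi](\x_{v_n})$ for every $n \ge 1$. Now $\x_{v_n} \in \vphi(A^\omega)$, so pick any $\beta_n \in A^\omega$ with $\vphi(\beta_n) = \x_{v_n}$; by repeated application of the defining identity $\sh[a,\vphi](\vphi(\gamma)) = \vphi(a\gamma)$ of shift-determinism, the composition equals $\vphi(a_1 a_2 \cdots a_n \beta_n)$. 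Setting $\alpha_n = a_1 \cdots a_n \beta_n$, we obtain an infinite word that shares its first $n$ letters with $\lab(\mathcal{P}) = a_1 a_2 a_3 \ldots$ and satisfies $\vphi(\alpha_n) \ge \x_s$.

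Finally, continuity of $\vphi$ via Proposition \ref{cont_help} gives $\lim_{n\to\infty} \vphi(\alpha_n) = \vphi(\lab(\mathcal{P}))$, hence $\vphi(\lab(\mathcal{P})) \ge \x_s$. Since $\mathcal{P}$ was an arbitrary element of $\Cons(s, \sigma^*)$, taking the infimum yields $\Val[\sigma^*](s) \ge \x_s$. The dual argument for $\tau^*$ replaces maxima with minima and infima with suprema, using that $\tau^*$ picks tight edges at Min's nodes and \eqref{max_opt} gives the correct one-sided bound at Max's nodes; it produces a sequence $\alpha_n'$ with $\vphi(\alpha_n') \le \x_s$ converging to $\vphi(\lab(\mathcal{P}'))$ for any $\mathcal{P}' \in \Cons(s, \tau^*)$, and thus $\Val[\tau^*](s) \le \x_s$. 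The main subtlety, and the only place where continuity enters, is the final passage from the prefix-by-prefix bounds $\vphi(\alpha_n) \ge \x_s$ to a bound on the actual play value; without continuity the iterated compositions would not be known to stabilise to $\vphi(\lab(\mathcal{P}))$.
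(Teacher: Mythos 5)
Your proposal is correct and follows essentially the same route as the paper's proof: the chain of inequalities $\x_s \le \sh[a_1,\vphi]\circ\cdots\circ\sh[a_n,\vphi](\x_{v_n})$ is exactly the paper's statement $T_0 \le T_n$ (obtained there by showing $T_{n+1}\ge T_n$ term by term), and the passage to the limit via continuity, using $\beta_n$ with $\vphi(\beta_n)=\x_{v_n}$, is identical. No gaps.
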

\begin{proof}
For brevity, we will omit $\vphi$ in the notation $\sh[a,\vphi]$. 
We will also use a notation \[\sh[a_1 a_2 \ldots a_n] = \sh[a_1] \circ \sh[a_2]\circ\ldots \ldots \circ \sh[a_n]\] for $n\in\mathbb{N}, a_1 a_2 \ldots a_n \in A^n$. In particular, $\sh[\mbox{empty string}]$ will denote the identity function.

It is enough to show that
\begin{itemize}
\item \textbf{\emph{(a)}} for any $v\in V$ and for any $\mathcal{P} \in \Cons(v,\sigma^*)$  we have 
\[\vphi\circ\lab\big(\mathcal{P}\big) \ge \x_v^*.\]
\item \textbf{\emph{(b)}} for any $v\in V$ and for any $\mathcal{P} \in \Cons(v,\tau^*)$  we have 
\[\vphi\circ\lab\big(\mathcal{P}\big) \le \x_v^*.\]
\end{itemize}
Indeed, from these two inequalities we obtain that $\Val[\sigma^*](v) \ge \x_v^* \ge \Val[\tau^*](v)$ for every $v\in V$. But on the other hand, $\Val[\sigma^*](v) \le \vphi\circ\lab\big(\mathcal{P}^{\sigma^*,\tau^*}_v\big) \le \Val[\tau^*](v)$. That is, we get that $\Val[\sigma^*](v) = \Val[\tau^*](v)$ for every $v\in V$, and this by definition means  that $(\sigma^*, \tau^*)$ is an equilibrium.

We only show the first item, the second one can be proved similarly.
 Let $e_n$ be the $n$th edge of $\mathcal{P}$ for $n\ge 1$. Define $v_n = \target(e_1 e_2\ldots e_n)\in V$ and
$T_n = \sh[\lab(e_1 \ldots e_n)](\x_{v_n}^*)\in\vphi(A^\omega)$.
We also set $v_0 = v$ and $T_0 = \x_v^*$. Note that due to the continuity of $\vphi$ we have that $\lim_{n\to\infty} T_n = \vphi \circ\lab(\mathcal{P})$. Indeed, $\x_{v_n}^*\in \vphi(A^\omega)$, so there exists $\beta_n\in A^\omega$ with $\x_{v_n}^* = \vphi(\beta_n)$. Hence, $\vphi(\lab(e_1 \ldots e_n) \beta_n) = \sh[\lab(e_1 \ldots e_n)](\vphi(\beta_n)) = \sh[\lab(e_1 \ldots e_n)](\x_{v_n}^*)  = T_n$. On the other hand, the first $n$ letters of $\lab(e_1 \ldots e_n) \beta_n$  and  $\lab(\mathcal{P})$ coincide. Hence, $\vphi(\lab(e_1 \ldots e_n) \beta_n) = T_n$ converges to  $\vphi \circ\lab(\mathcal{P})$ as $n\to\infty$, as required.

So \textbf{\emph{(a)}} is equivalent to a statement that $\lim_{n\to\infty} T_n \ge T_0$. To show this statement, we demonstrate  that $T_{n+1} \ge T_n$ for every $n$.  Indeed, assume first that $v_n \in V_\Max$. Then, since $\mathcal{P}$ is consistent with $\sigma^*$, we have $e_{n + 1} = \sigma^*(v_n)$. In particular, $e_{n + 1}$ is $\x$-tight, by the conditions of the lemma. This gives us that $\sh[\lab(e_{n +1})](\x_{v_{n + 1}}^*) = \x_{v_n}^*$. After applying the function $\sh[\lab(e_1 e_2 \ldots e_n)]$ to this equality, we obtain $T_{n + 1} = T_{n}$.

Now, if $v_n\in V_\Min$, then $\sh[\lab(e_{n +1})](\x_{v_{n + 1}}^*)\ge \x_{v_n}^*$ by \eqref{min_opt}. The function $\sh[\lab(e_1 e_2 \ldots e_n)]$ is composed of non-decreasing functions due to Claim \ref{pm_eq}. Hence, after applying this function to the left-hand and the right-hand sides of the inequality $\sh[\lab(e_{n +1})](\x_{v_{n + 1}}^*)\ge \x_{v_n}^*$, we obtain $T_{n+1} \ge T_{n}$.\end{proof}

We now proceed to details of our proof of Proposition \ref{exists_solution}. Consider a function $T\colon \vphi(A^\omega)^V\to\vphi(A^\omega)^V$, mapping $\x \in \vphi(A^\omega)^V$ to the vector of the right-hand sides of (\ref{max_opt}--\ref{min_opt}). We should argue that $T$ has a fixed point. For that, we will construct a continuous metric $D\colon \vphi(A^\omega)^V\times \vphi(A^\omega)^V \to [0, +\infty)$ with respect to which $T$ is \emph{contracting}. 
More precisely, $D(T\x, T\y)$ will always be smaller than $D(\x, \y)$ as long as $\x$ and $\y$ are distinct. Due to the compactness of the domain of $T$, this will prove that $T$ has a fixed point.

Now, to construct such $D$, we show that for continuous shift-deterministic $\vphi$ there must be a continuous metric $d\colon \vphi(A^\omega)\times \vphi(A^\omega) \to [0, +\infty)$ such that all functions $\sh[a, \vphi], a \in A$ are $d$-contracting. Once we have such $d$,  we let $D(\x, \y)$ be the maximum of $d(\x_a, \y_a)$ over $a\in V$. Checking that $T$ is contracting with respect to such $D$ will be rather straightforward.
The main technical challenge is to prove the existence of $d$. We do so  
via the following general fact about compositions of continuous functions.

\begin{thm}
\label{general_fact}
Let $K\subseteq \mathbb{R}$ be a compact set, $m\ge 1$ be a natural number and $f_1, \ldots, f_m\colon K \to K$ be $m$ continuous functions. Then the following two conditions are equivalent:
\begin{itemize}
\item \textbf{(a)} for every $a_1 a_2 a_3\ldots \in\{1, 2, \ldots, m\}^\omega$ we have
$\lim_{n\to\infty} \diam\big(f_{a_1}\circ f_{a_2} \circ \ldots \circ f_{a_n}(K)\big) = 0$
(by $\diam(S)$ for $S\subseteq \mathbb{R}$ we mean $\sup_{x,y\in S} |x - y|$);
\item \textbf{(b)} there exists a continuous metric $d\colon K\times K\to[0,+\infty)$ such that $f_1, f_2, \ldots, f_m$ are all $d$-contracting (a function $h\colon K \to K$ is called $d$-contracting if for all $x,y\in K$ with $x\neq y$ we have $d(h(x), h(y)) < d(x, y)$).
\end{itemize}
If $f_1, \ldots, f_m$ are non-decreasing, then these two conditions are equivalent to the following condition:
\begin{itemize} \item \textbf{(c)} there exists a continuous metric $d\colon K\times K\to[0,+\infty)$ such that, first, $f_1, f_2, \ldots, f_m$ are all $d$-contracting, and second, for all $x, y, s, t \in K$ we have $x\le s\le t\le y\implies d(s, t) \le d(x, y)$.  
\end{itemize}
\end{thm}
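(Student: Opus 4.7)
My plan is to establish the cycle \textbf{(c)}\,$\Rightarrow$\,\textbf{(b)}\,$\Rightarrow$\,\textbf{(a)}\,$\Rightarrow$\,\textbf{(b)}, together with \textbf{(a)}\,$\Rightarrow$\,\textbf{(c)} in the non-decreasing setting; the implication \textbf{(c)}\,$\Rightarrow$\,\textbf{(b)} is immediate. Throughout, for $w = w_1 \ldots w_k \in \{1,\ldots,m\}^*$ I write $F_w = f_{w_1} \circ \cdots \circ f_{w_k}$, with $F_\varepsilon = \mathrm{id}$.

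For \textbf{(b)}\,$\Rightarrow$\,\textbf{(a)}, fix $\alpha = a_1 a_2 \ldots$ and set $K_n = F_{a_1 \ldots a_n}(K)$, a nested sequence of nonempty compacts. A short compactness argument (any sequence of preimages in $K$ has a convergent subsequence, which then lies in every $F_{a_2 \ldots a_k}(K)$) yields the identity $K_\infty := \bigcap_n K_n = f_{a_1}\bigl(\bigcap_n F_{a_2 \ldots a_n}(K)\bigr)$. This lets me lift any two witnesses $x \ne y \in K_\infty$ backward along $\alpha$ to obtain $x^{(k)}, y^{(k)} \in K$ with $f_{a_k}(x^{(k)}) = x^{(k-1)}$ and similarly for $y$. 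The $d$-contraction of each $f_{a_k}$ forces $d(x^{(k)}, y^{(k)})$ to be \emph{strictly} increasing, and it is bounded since $K$ is compact. Using the finiteness of the alphabet and compactness of $K$, I pass to a subsequence $k_j \to \infty$ along which $a_{k_j + 1}$ is a constant letter $a^*$ and along which both $x^{(k_j)}, y^{(k_j)}$ and $x^{(k_j + 1)}, y^{(k_j + 1)}$ converge. The limits $x^*, y^*, x^{**}, y^{**}$ then satisfy $f_{a^*}(x^{**}) = x^*$, $f_{a^*}(y^{**}) = y^*$, and $d(x^*, y^*) = d(x^{**}, y^{**}) > 0$, contradicting $d$-contraction of $f_{a^*}$. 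Hence $\diam_d(K_\infty) = 0$; because $d$, being continuous, induces the Euclidean topology on the compact $K$, a standard nested-compacts argument converts this to $\diam(K_n) \to 0$.

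For \textbf{(a)}\,$\Rightarrow$\,\textbf{(b)}, I first upgrade \textbf{(a)} to a uniform statement. The function $D_n\colon \{1,\ldots,m\}^\omega \to \mathbb{R}$ defined by $D_n(\alpha) = \diam(F_{\alpha|_n}(K))$ depends only on $\alpha|_n$, hence is continuous on the compact product space; it is pointwise non-increasing with pointwise limit $0$, and Dini's theorem then yields $\delta_n := \sup_{|w| = n} \diam(F_w(K)) \to 0$. I then define
\[ d(x, y) \;=\; \sup_{w \in \{1, \ldots, m\}^*} h_{|w|}\, |F_w(x) - F_w(y)|, \qquad h_n := \frac{n+1}{n+2}, \]
so that $h_n$ is strictly increasing and bounded in $(0,1)$. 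Symmetry, the triangle inequality, and positivity (via $d(x,y) \ge h_0 |x-y| = |x-y|/2$) are routine; continuity of $d$ follows because it is a uniform limit of the continuous finite maxima $\max_{|w| \le N} h_{|w|}|F_w(x) - F_w(y)|$, with error at most $\sup_{n > N} \delta_n \to 0$.

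The main obstacle is the strict contraction $d(f_i(x), f_i(y)) < d(x, y)$ for $x \ne y$. Unpacking the definition gives
\[ d(f_i(x), f_i(y)) \;\le\; \sup_{n \ge 1} h_{n-1}\, A_n(x, y), \qquad A_n(x, y) := \max_{|w| = n} |F_w(x) - F_w(y)|, \]
while $d(x, y) \ge h_0 |x - y| > 0$. The subtlety is that the ratios $h_{n-1}/h_n$, though each $< 1$, tend to $1$, so no uniform contraction factor is available. Instead, I combine the term-wise strict inequality $h_{n-1} A_n < h_n A_n \le d(x, y)$ (valid when $A_n > 0$; trivial when $A_n = 0$ since $d(x, y) > 0$) with the tail bound $h_n A_n \le h_n \delta_n \to 0$: picking $N$ so that $h_n \delta_n < d(x, y)/2$ for $n \ge N$, the sup splits into a finite maximum over $1 \le n \le N-1$ of values each strictly below $d(x, y)$ and a tail bounded by $d(x, y)/2$, both of which are strictly less than $d(x, y)$. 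Finally, when each $f_i$ is non-decreasing so is each $F_w$, hence $x \le s \le t \le y$ implies $|F_w(s) - F_w(t)| \le |F_w(x) - F_w(y)|$; taking weighted suprema gives $d(s, t) \le d(x, y)$, so the very same metric witnesses \textbf{(c)}.
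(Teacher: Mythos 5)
Your proposal is correct. The forward direction \textbf{(a)}$\implies$\textbf{(b)}/\textbf{(c)} is essentially the construction in the paper: the paper uses the metric $d(x,y)=\sup_w (2-2^{-|w|})\,|f_{a_1}\circ\cdots\circ f_{a_{|w|}}(x)-\cdots(y)|$, which is your $h_{|w|}$-weighted supremum with a different strictly increasing bounded weight sequence, and the same monotonicity observation handles \textbf{(c)}. Two cosmetic differences: you uniformize condition \textbf{(a)} via Dini's theorem on the compact product space, where the paper uses K\H{o}nig's lemma (both yield $\sup_{|w|=n}\diam(f_{a_1}\circ\cdots\circ f_{a_n}(K))\to 0$); and for strict contraction you split the supremum into a finite part of termwise-strict inequalities plus a small tail, where the paper instead argues the supremum is attained and applies the strict inequality at the attaining word --- these are interchangeable. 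The genuinely different piece is \textbf{(b)}$\implies$\textbf{(a)}: the paper runs a quantitative two-stage compactness argument, extracting a constant $z=\min\{d(x,y)/|x-y| : |x-y|\ge\varepsilon\}>0$ and a uniform contraction factor $\lambda<1$ valid on $\{d\ge z\varepsilon\}$, and then shows geometric decay of the $d$-diameter. You instead argue qualitatively: lift two distinct points of the intersection backward along $\alpha$ (using the identity $\bigcap_n F_{a_1\ldots a_n}(K)=f_{a_1}(\bigcap_n F_{a_2\ldots a_n}(K))$, correctly justified by compactness), note that strict contraction forces the $d$-distances of the lifts to increase strictly while remaining bounded, and derive a contradiction by extracting a convergent subsequence along a constant letter. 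Your route avoids introducing $z$ and $\lambda$ entirely and is arguably more transparent; the paper's route has the mild advantage of producing an explicit rate (geometric decay above the threshold $z\varepsilon$), which a pure limit-point argument does not. Both are complete proofs.
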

We postpone the proof of this result to the end of this section.

To derive Proposition \ref{exists_solution} from this theorem, we first show that it is applicable to functions $\sh[a,\vphi], a\in A$ for continuous shift-deterministic $\vphi$.

\begin{prop}
\label{check}
Let  $A$ be a finite set and $\vphi\colon A^\omega\to\mathbb{R}$ be a continuous shift-deterministic payoff. Then the functions $\sh[a,\vphi], a\in A$ are continuous and satisfy the condition \textbf{(a)} of Theorem \ref{general_fact} for $K = \vphi(A^\omega)$.
\end{prop}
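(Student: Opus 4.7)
The plan is to handle the two claims (continuity of each $\sh[a,\vphi]$, and condition \textbf{(a)} of Theorem \ref{general_fact}) separately. For continuity, the cleanest approach is to exploit that $\vphi\colon A^\omega \to K = \vphi(A^\omega)$ is a continuous surjection from a compact Hausdorff space onto $K\subseteq\mathbb{R}$, hence a quotient map. By the defining property of $\sh[a,\vphi]$, we have the identity $\sh[a,\vphi]\circ \vphi = \vphi\circ \sigma_a$, where $\sigma_a\colon A^\omega\to A^\omega$ is the prepending map $\beta\mapsto a\beta$, which is continuous. Since the right-hand side is continuous and $\vphi$ is a quotient map, $\sh[a,\vphi]$ is continuous. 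If one prefers to avoid the quotient-map language, the same conclusion follows by a direct sequential argument: given $x_n\to x$ in $K$, choose preimages $\beta_n, \beta \in A^\omega$ under $\vphi$, extract a subsequence $\beta_{n_k}\to \beta'$ using compactness of $A^\omega$, observe $\vphi(\beta')=x$ by continuity, and then $\sh[a,\vphi](x_{n_k}) = \vphi(a\beta_{n_k}) \to \vphi(a\beta') = \sh[a,\vphi](x)$, where the last equality uses shift-determinism.

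For condition \textbf{(a)} of Theorem \ref{general_fact}, the key computation is that for any $u = a_1 a_2 \ldots a_n \in A^*$, iterating the defining equation $\sh[a,\vphi](\vphi(\beta)) = \vphi(a\beta)$ gives
\[
\sh[a_1,\vphi]\circ \sh[a_2,\vphi]\circ \cdots \circ \sh[a_n,\vphi](\vphi(\beta)) = \vphi(u\beta)
\]
for every $\beta\in A^\omega$, so the image of $K$ under this composition is exactly $\vphi(uA^\omega)$. Thus condition \textbf{(a)} becomes: for every $\alpha = a_1 a_2 a_3 \ldots \in A^\omega$, if $u_n$ denotes its length-$n$ prefix, then $\diam(\vphi(u_n A^\omega)) \to 0$. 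This is essentially a restatement of continuity of $\vphi$ along the ``cylinder'' $u_n A^\omega$ shrinking to $\alpha$. I would argue it by contradiction: if the diameter were bounded below by some $\varepsilon > 0$ for infinitely many $n$, I could pick $\beta_n, \gamma_n \in A^\omega$ with $|\vphi(u_n \beta_n) - \vphi(u_n \gamma_n)|\ge \varepsilon$; but both $u_n\beta_n$ and $u_n\gamma_n$ share the length-$n$ prefix of $\alpha$, so Proposition \ref{cont_help} forces both $\vphi(u_n\beta_n)$ and $\vphi(u_n\gamma_n)$ to converge to $\vphi(\alpha)$, contradicting the lower bound on their difference.

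The main obstacle is getting continuity of the induced functions $\sh[a,\vphi]$ clean: once one recognizes that $\vphi$ is a quotient map, or equivalently that any preimage sequence admits a convergent subsequence in $A^\omega$, the verification is routine. The diameter condition itself is a direct consequence of the product-topology continuity of $\vphi$ and requires no use of prefix-monotonicity, only shift-determinism (to ensure $\sh[a,\vphi]$ is well defined) and continuity.
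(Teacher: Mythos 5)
Your proposal is correct. The second half (condition \textbf{(a)}) is essentially identical to the paper's proof: both reduce, via the identity $\sh[a_1,\vphi]\circ\cdots\circ\sh[a_n,\vphi](K)=\vphi(a_1\ldots a_nA^\omega)$, to showing that the diameters of the sets $\vphi(u_nA^\omega)$ shrink to zero, and both argue this by the same contradiction using Proposition \ref{cont_help}. For the continuity of $\sh[a,\vphi]$, your primary argument is a genuine streamlining: the paper proves by hand, via a cluster-point/compactness argument, exactly what the quotient-map lemma gives for free, namely that continuity of $\sh[a,\vphi]\circ\vphi=\vphi\circ\sigma_a$ transfers to $\sh[a,\vphi]$ because $\vphi$ is a continuous surjection from a compact space onto a Hausdorff one and hence a closed, quotient map. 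This buys brevity and makes clear that shift-determinism is used only to make $\sh[a,\vphi]$ well defined.

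One small caution about your ``direct sequential'' variant: as written, it extracts a convergent subsequence $\beta_{n_k}$ and concludes $\sh[a,\vphi](x_{n_k})\to\sh[a,\vphi](x)$, which establishes convergence only along that subsequence, not of the full sequence $\sh[a,\vphi](x_n)$. The paper avoids this by first assuming for contradiction that $|\sh[a,\vphi](x_n)-\sh[a,\vphi](x)|>\varepsilon$ for infinitely many $n$ and restricting to those indices before extracting the subsequence; alternatively, one can invoke the standard ``every subsequence has a further subsequence converging to the same limit'' criterion. Either fix is routine, and your quotient-map argument sidesteps the issue entirely, so the proposal stands.
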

\begin{proof}
We use the same abbreviations with respect to the notation $\sh[a,\vphi]$ as in the proof of Lemma \ref{from_bell_to_pos}.

Let us first demonstrate that $\sh[a]$ is continuous for every $a\in A$.
Consider any $x\in \vphi(A^\omega)$ and any infinite sequence $\{x_n\in\vphi(A^\omega)\}_{n\in\mathbb{N}}$ such that $\lim_{n\to\infty} x_n = x$. We shall show that $\lim_{n\to\infty} \sh[a](x_n) = \sh[a](x)$. It is enough to show that $\sh[a](x)$ is the only limit point of the sequence $\{\sh[a](x_n)\}_{n\in\mathbb{N}}$. In other words, w.l.o.g~we may assume that the limit $\lim_{n\to\infty} \sh[a](x_n)$ exists, and our goal is to show that it equals $\sh[a](x)$.

 Let $\beta_n\in A^\omega$ be such that $x_n = \vphi(\beta_n)$. Due to the compactness of $A^\omega$, there exists $\beta\in A^\omega$ such that  any open set $\mathcal{S} \subseteq A^\omega$, containing $\beta$, also contains $\beta_n$ for infinitely many $n$. Indeed, otherwise any point of $A^\omega$ is contained in an open set which covers only finitely many elements of the sequence $\{\beta_n\}_{n\in\mathbb{N}}$. A collection of such open sets would be an open cover of $A^\omega$ without a finite subcover (no finite subcover can have $\beta_n$ for all $n$).

 For every $k\in\mathbb{N}$ there exists $n_k \ge k$ such that the first $k$ letters of $\beta_{n_k}$ and $\beta$ coincide. Indeed, consider a word $u\in A^k$, consisting of the first $k$ letters of $\beta$. An open set $\mathcal{S} = u A^\omega$ contains $\beta$. Hence, there are infinitely many $n$ such that $\beta_n\in u A^\omega$, or, equivalently, such that $\beta_n$ starts with $u$. In particular, there exists such $n$ which is at least as large as $k$.
 
 Due to continuity of $\vphi$, we have  that $\lim_{k\to\infty} \vphi(\beta_{n_k})= \vphi(\beta)$. On the other hand, $ \lim_{k\to\infty} \vphi(\beta_{n_k}) =\lim_{k\to\infty} x_{n_k} = x$. Hence, $\vphi(\beta) = x$. Using the continuity of $\vphi$ again, we get 
\begin{align*}
\lim_{n\to\infty} \sh[a](x_n) &= \lim_{n\to\infty} \sh[a](\vphi(\beta_n)) =  \lim_{n\to\infty} \vphi(a\beta_n) =  \lim_{k\to\infty} \vphi(a\beta_{n_k}) \\
&= \vphi(a\beta) && \text{by continuity of $\vphi$}\\
&= \sh[a](\vphi(\beta)) = \sh[a](x),&& \text{because $\vphi(\beta) = x$}
\end{align*}
as required.

Now, let us show that $\sh[a], a\in A$ satisfy item \textbf{\emph{(a)}} of Theorem \ref{general_fact} for $K = \vphi(A^\omega)$. By definition of $\sh[a]$, we have that
\[\sh[a_1\ldots a_n]\big(\vphi(A^\omega)\big) = \vphi(a_1a_2 a_3\ldots a_n A^\omega)\]
for every $n\in\mathbb{N}$ and $a_1 a_2\ldots a_n\in A^n$.
Thus, it is enough to establish that 
\[\lim\limits_{n\to\infty} \diam\big(\vphi(a_1 a_2 \ldots a_n A^\omega)\big) = 0\]
for any $a_1 a_2 a_3\ldots \in A^\omega$. This is a simple consequence of the continuity of $\vphi$. Indeed, assume for contradiction that for some  $a_1 a_2 a_3\ldots \in A^\omega$ we have  $\diam\big(\vphi(a_1 a_2 \ldots a_n A^\omega)\big) > \varepsilon$ for infinitely many $n$. Then for infinitely many $n$ there exist $\beta_n, \gamma_n \in a_1 a_2 \ldots a_n A^\omega$ with $|\vphi(\beta_n) - \vphi(\gamma_n)| \ge \varepsilon$. At the same time, by continuity of $\vphi$, both $\vphi(\beta_n)$ and $\vphi(\gamma_n)$ must converge to $\vphi(a_1 a_2 a_3\ldots)$, contradiction. \qedhere

\end{proof}

We finally derive Proposition \ref{exists_solution} from Theorem \ref{general_fact} and Proposition \ref{check}.
This will finish our second proof of the fact that all continuous prefix-monotone payoffs are positionally determined.
\begin{proof}[Proof of Proposition \ref{exists_solution}]
We use the same abbreviations with respect to the notation  $\sh[a,\vphi]$ as in the proof of Lemma \ref{from_bell_to_pos}.

Define a mapping $T\colon K^V \to K^V$, where $K = \vphi(A^\omega)$, as follows:
\begin{align}
\label{T1}
T(\x)_u &=\max\limits_{e \in E, \source(e) = u} \sh[\lab(e)](\x_{\target(e)}), \qquad \mbox{for } u\in V_\Max,\\
\label{T2}
T(\x)_u &= \min\limits_{e \in E, \source(e) = u} \sh[\lab(e)](\x_{\target(e)}), \qquad \mbox{for } u\in V_\Min.
\end{align}
Recall that $K$ is a compact set (because $A^\omega$ is compact and $\vphi$ is continuous).
It is enough to show that $T$ has a fixed point. By Proposition \ref{check}, the functions $\sh[a], a\in A$ are continuous (which means that $T$ is also continuous) and satisfy the item  \textbf{\emph{(a)}} of Theorem \ref{general_fact}. By Claim \ref{pm_eq}, the functions $\sh[a], a\in A$ are non-decreasing. Hence, these functions satisfy the item \textbf{\emph{(c)}} of Theorem \ref{general_fact}. That is, there exists a continuous metric $d\colon K\times K\to[0,+\infty)$ such that, first,  the function $\sh[a]$ is $d$-contracting for every $a\in A$, and second,  for every $x, s, t, y\in K$ we have $x\le s \le t \le y\implies d(s, t) \le d(x, y)$.

Define a metric $D\colon K^V\times K^V\to [0, +\infty)$ as follows:
\[D(\mathbf{x}, \mathbf{y}) = \max\limits_{u\in V} d(\mathbf{x}_u, \mathbf{y}_u).\]
It is enough to show $D(T(\mathbf{x}), T(\mathbf{y})) < D(\mathbf{x}, \mathbf{y})$ for all $\x, \y\in K^V, \x\neq \y$. Indeed, assume that this inequality is already established. Consider a point $\x^* \in K^V$ minimizing $D(\x, T(\x))$. Such $\x^*$ exists because $D(\x, T(\x))$ is continuous and $K^V\times K^V$ is a compact set. If $\x^* \neq T(\x^*)$, then $D(T(\x^*), T\circ T(\x^*)) < D(\x^*, T(\x^*))$, contradiction.

Now, take any $\x, \y\in K^V, \x\neq \y$. Let $u\in V$ be such that $D(T(\mathbf{x}), T(\mathbf{y})) = d(T(\x)_u, T(\y)_u)$. Assume w.l.o.g.~that $u\in V_\Max$. Also, up to swapping $\x$ and $\y$, we  may assume that $T(\x)_u \le T(\y)_u$. Let $e$ be an edge on which the maximum in \eqref{T1} is attained for $\y$. That is, the source of $e$ is $u$ and $T(\y)_u = \sh[\lab(e)](\y_{w})$, where $w = \target(e)$. On the other hands, by \eqref{T1} applied to $\x$, we get $\sh[\lab(e)](\x_w)\le T(\x)_u$. Overall, 
\[\sh[\lab(e)](\x_w) \le T(\x)_u \le T(\y)_u = \sh[\lab(e)](\y_w).\]

Since for any  $x, s, t, y\in K$ it holds that $x\le s \le t \le y\implies d(s, t) \le d(x, y)$, we get:
\[d(T(\x)_u, T(\y)_u) \le d\big(\sh[\lab(e)](\x_w), \sh[\lab(e)](\y_w)\big).\]
If $\x_w = \y_w$, then $0 = d(T(\x)_u, T(\y)_u) = D(T(\x), T(\y)) < D(\x, \y)$, because $\x\neq \y$. Now, if $\x_w \neq \y_w$, then due to the fact that the function $\sh[\lab(e)]$ is $d$-contracting, we have:
\begin{align*}
d\big(\sh[\lab(e)](\x_w), \sh[\lab(e)](\y_w)\big) < d(\x_w, \y_w) \le D(\x, \y),
\end{align*}
which gives us $D(T(\x), T(\y)) = d(T(\x)_u, T(\y)_u) \le d\big(\sh[\lab(e)](\x_w), \sh[\lab(e)](\y_w)\big) < D(\x, \y)$, as required.
\end{proof}

We finish this section with the missing proof of Theorem \ref{general_fact}.
\begin{proof}[Proof of Theorem \ref{general_fact}]

For the sake of readability, we will use the following notation. First, we will denote $f_i$ by $f[i]$. Moreover, we will abbreviate
\[f[a_1 a_2 \ldots a_n] = f[a_1]\circ f[a_2]\circ\ldots \circ f[a_n]\]
for $n\in\mathbb{N}, a_1 a_2\ldots a_n\in \{1, 2, \ldots, m\}^n$. In particular, $f[\mbox{empty word}]$ will denote the identity function).

\begin{lem}
\label{konig}
The condition \textbf{(a)} of Theorem \ref{general_fact} is equivalent to the following condition: for every $\varepsilon > 0$ there are only finitely many $w\in\{1, 2, \ldots, m\}^*$ such that 
\[\diam\big(f[w](K)\big) > \varepsilon.\]
\end{lem}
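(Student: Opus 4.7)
The proof plan is to use König's lemma on the tree of finite words over $\{1, 2, \ldots, m\}$. First I would record the key monotonicity observation: for any $w', w'' \in \{1, 2, \ldots, m\}^*$, we have $f[w'w''](K) = f[w'](f[w''](K)) \subseteq f[w'](K)$, so $\diam\bigl(f[w'w''](K)\bigr) \le \diam\bigl(f[w'](K)\bigr)$. In particular, if $\diam(f[w](K)) > \varepsilon$, then the same inequality holds for every prefix of $w$.

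For the easier direction (``finiteness condition $\implies$ (a)''), I would simply fix $a_1 a_2 a_3 \ldots \in \{1, 2, \ldots, m\}^\omega$ and $\varepsilon > 0$. Since the prefixes $\{a_1 \ldots a_n\}_{n \ge 1}$ are infinitely many pairwise distinct words, and by hypothesis only finitely many words $w$ satisfy $\diam(f[w](K)) > \varepsilon$, there exists $N$ such that $\diam\bigl(f[a_1 \ldots a_n](K)\bigr) \le \varepsilon$ for every $n \ge N$. Since $\varepsilon > 0$ was arbitrary, this gives condition \textbf{(a)}.

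For the converse (``(a) $\implies$ finiteness condition''), I would argue by contradiction. Suppose there exists $\varepsilon > 0$ such that infinitely many $w \in \{1, 2, \ldots, m\}^*$ satisfy $\diam(f[w](K)) > \varepsilon$. Let $S_\varepsilon$ be the set of all such $w$. By the monotonicity observation, $S_\varepsilon$ is closed under taking prefixes, so $S_\varepsilon$ (together with the empty word, which I may assume lies in $S_\varepsilon$, otherwise $\diam(K) \le \varepsilon$ and only finitely many words could have larger diameter) is an infinite subtree of the full $m$-ary tree of finite words. Each node has at most $m$ children, so the tree is finitely branching. By König's lemma, this subtree contains an infinite branch $a_1 a_2 a_3 \ldots \in \{1, 2, \ldots, m\}^\omega$ such that every prefix $a_1 \ldots a_n$ lies in $S_\varepsilon$. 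Then $\diam\bigl(f[a_1 \ldots a_n](K)\bigr) > \varepsilon$ for every $n$, so $\lim_{n\to\infty} \diam\bigl(f[a_1 \ldots a_n](K)\bigr) \ne 0$, contradicting \textbf{(a)}.

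There is no real obstacle here; the entire argument is the standard ``compactness via König's lemma'' pattern, and the only ingredient beyond that is the monotonicity of $\diam(f[w](K))$ in $w$ under the prefix order, which follows immediately from $f[w'w''](K) \subseteq f[w'](K)$.
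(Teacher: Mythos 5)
Your proof is correct and follows essentially the same route as the paper: the same monotonicity observation ($f[w'w''](K)\subseteq f[w'](K)$, so the set of words with diameter exceeding $\varepsilon$ is prefix-closed) followed by K\H{o}nig's lemma to extract an infinite branch and contradict condition \textbf{(a)}. The only difference is that you spell out the easy direction, which the paper dismisses as obvious.
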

\begin{proof}
Assume that the condition \textbf{\emph{(a)}} of Theorem \ref{general_fact} holds. Take any $\varepsilon > 0$. Call  $w\in\{1, 2, \ldots, m\}^*$ \emph{bad} if $\diam\big(f[w](K)\big) > \varepsilon$. We have to show that the number of bad $w$ if finite. Assume for contradiction that the number of bad $w$ is infinite. Observe that any prefix of a bad $w$ is also bad. Indeed, if $w = uv$ for some  $u,v\in\{1, 2, \ldots, m\}^*$, then $f[w](K) = f[u] \circ f[v](K) \subseteq f[u](K)$. Hence, by K\H{o}nig's Lemma, there exists $\alpha = a_1 a_2 a_3\ldots\in\{1, 2, \ldots, m\}^\omega$ such any finite prefix of $\alpha$ is bad. Observe that 
\[\liminf\limits_{n\to\infty}\diam\big(f[a_1 a_2\ldots a_n](K)\big) \ge \varepsilon.\]
This is a contradiction with the condition \textbf{\emph{(a)}} of Theorem \ref{general_fact}.

The opposite direction of the lemma is obvious.
\end{proof}

The rest of the proof is organized as follows. We first show that \textbf{\emph{(a)}$\implies$\emph{(b)}}. Then we observe that the same proof establishes \textbf{\emph{(a)} $\implies$ \emph{(c)}} when $f[1], \ldots, f[m]$ are non-decreasing. Finally, we show that \textbf{\emph{(b)} $\implies$ \emph{(a)}}. Since, obviously, \textbf{\emph{(c)} $\implies$ \emph{(b)}}, this will establish Theorem \ref{general_fact}.

\bigskip

\textbf{Proof of \emph{(a)} $\implies$ \emph{(b)}.}
Define
\begin{equation}
\label{our_metric}
d\colon K\times K\to[0,+\infty), \qquad d(x, y) = \sup\limits_{w\in \{1, \ldots, m\}^*} \big(2 - 2^{-|w|}\big) \cdot \big|f[w](x) - f[w](y)\big|.
\end{equation}
First, we obviously have $d(x, x) = 0$ and $d(x, y) = d(y, x)$. Notice also that $d(x, y) \ge |x - y|$, so $d(x, y) > 0$ for $x\neq y$. In turn, the triangle inequality $d(x, y) \le d(x, z) + d(z, y)$ holds because, first, it holds for every $w\in \{1, 2, \ldots, m\}^*$ in \eqref{our_metric}, and second, the supremum of the sums is at most the sum of the supremums.  These considerations show that $d$ is a metric.

Note that the supremum in \eqref{our_metric} is always attained on some $w\in \{1, 2, \ldots, m\}^*$. Indeed, if $d(x, y) = 0$, then it is attained already on the empty word. Assume now that $d(x, y) > 0$.  By Lemma \ref{konig} all but finitely many $w\in \{1, 2, \ldots, m\}^*$ satisfy
$\diam\big(f[w](K)\big) \le d(x, y)/3$.
 So  only finitely many terms in \eqref{our_metric} are bigger than $2d(x, y)/3$, and hence the supremum (which is $d(x, y) > 2d(x,y)/3$) must be attained on one of them.

This already implies that $f[i]$ is $d$-contracting for every $i\in \{1, 2, \ldots, m\}$. Indeed, take any $x, y \in K$. Then for some $w\in \{1, 2, \ldots, m\}^*$ we have:
\[d\big(f[i](x), f[i](y)\big) = (2 - 2^{-|w|}) \cdot \left|f[w]\big(f[i](x)) - f[w]\big(f[i](y)\big)\right|.\]
We have to show that if $x\neq y$, then $d\big(f[i](x), f[i](y)\big) < d(x, y)$. If $d\big(f[i](x), f[i](y)\big) = 0$, there is nothing to prove. Otherwise, the quantity
\[\left|f[w]\big(f[i](x)) - f[w]\big(f[i](y)\big)\right|\]
is positive. Therefore, we can write:
\begin{align*}
d\big(f[i](x), f[i](y)\big) &= (2 - 2^{-|w|}) \cdot \left|f[w]\big(f[i](x)) - f[w]\big(f[i](y)\big)\right|\\
&< (2 - 2^{-|w| - 1}) \cdot \left|f[w]\big(f[i](x)) - f[w]\big(f[i](y)\big)\right| \\
&=  (2 - 2^{-|wi|}) \cdot \left|f[wi](x) - f[wi](y)\right|
\le d(x, y).
\end{align*}

It remains to show that $d$ is continuous. Consider any $(x_0, y_0)\in K\times K$. We have to show that for any $\varepsilon > 0$ there exists $\delta > 0$ such that for all $(x, y) \in K\times K$ with $|x - x_0| + |y - y_0| \le \delta$ we have $|d(x, y) - d(x_0, y_0)|\le \varepsilon$.

By Lemma \ref{konig}, there exists $n \in\mathbb{N}$ such that for all $w\in \{1, 2, \ldots, m\}^*$ with $|w| \ge n$ we have:
\[\diam\big(f[w](K)\big) \le \varepsilon/6.\]
In particular, this means that all terms in \eqref{our_metric} corresponding to $w\in \{1, 2, \ldots, m\}^*$ with $|w| \ge n$ are at most $\varepsilon/3$. Hence, for every $(x, y)\in K\times K$ we have that $d(x, y)$ is $(\varepsilon/3)$-close to $d_n(x, y)$, where
\[d_n(x, y) = \max\limits_{w\in A^*, |w| < n} \big(2 - 2^{-|w|}\big) \cdot \big|f[w](x) - f[w](y)\big|.\]
Now, notice that the function $d_n$ is continuous (as a composition of finitely many continuous functions). Hence there exists $\delta > 0$ such that for all $(x, y)\in K\times K$ with $|x - x_0| + |y - y_0| \le \delta$ we have $|d_n(x, y) - d_n(x_0, y_0)| \le \varepsilon/3$. Obviously, for all such $(x, y)$ we also have $|d(x, y) - d(x_0, y_0)| \le \varepsilon$.

\begin{rem}
Here we observe that if $f[1], \ldots, f[m]$ are non-decreasing, then this construction establishes \textbf{\emph{(a)} $\implies$ \emph{(c)}}. That is, we show that if $f[1], \ldots, f[m]$ are non-decreasing, then $d(s, t) \le d(x, y)$ for all $x, s, t, y \in K$ with $x\le s\le t\le y$. Indeed, in this case the function $f[w]$ is non-decreasing for every $w\in\{1, 2, \ldots, m\}^*$. Hence we have $f[w](x) \le f[w](s) \le f[w](t) \le f[w](y)$ and $|f[w](s) - f[w](t)| \le |f[w](x) - f[w](y)|$. By \eqref{our_metric}, this gives us $d(s,t) \le d(x, y)$.
\end{rem}

\bigskip

\textbf{Proof of \emph{(b)} $\implies$ \emph{(a)}}. We show that for every $\varepsilon > 0$  there exists $n\in\mathbb{N}$ such that for all $w\in\{1, 2, \ldots, m\}^*$ with $|w|\ge n$ it holds that
\[\diam\big(f[w](K)\big) \le \varepsilon.\]
Obviously, this implies \textbf{\emph{(a)}}.

Define $T = \{(x, y) \in K\times K \mid |x - y| \ge \varepsilon\}$. Note that $T$ is a compact set. A function $d(x, y)/|x - y|$ is continuous on $T$. Hence, there exists 
\[z = \min\limits_{(x, y)\in T} d(x, y)/|x - y|.\]
Observe that $z > 0$. Indeed, for some $(x, y) \in T$ we have $z = d(x, y)/|x - y|$.  By definition of $T$, we have $|x - y| \ge \varepsilon$. Hence $x\neq y$ and $d(x, y)$ is positive, as well as $z$.

Now, define $S = \{(x, y) \in K\times K \mid d(x, y) \ge z\cdot \varepsilon\}$. Again, $S$ is a compact set. Consider a function:
\[h(x, y) = \max\limits_{i\in \{1, \ldots, m\}} \frac{d\big(f[i](x), f[i](y)\big)}{d(x, y)}.\]
The function $h$ is continuous on $S$ (we never have $0$ in its denominator on $S$). Hence there exists
\[\lambda = \max_{(x, y) \in S} h(x, y).\]
The function $h$ is non-negative, so $\lambda \ge 0$. Let us show that $\lambda < 1$. Indeed, for some $(x, y) \in S$ we have $\lambda = h(x, y)$. By definition of $h$, for some $i\in \{1, 2,\ldots, m\}$ we have:
\[\lambda = \frac{d\big(f[i](x), f[i](y)\big)}{d(x, y)}.\]
Since $(x, y) \in S$, we have $d(x, y) \ge z\cdot \varepsilon > 0$. Hence $x\neq y$. Now,  $f[i]$ is $d$-contracting. Therefore $d\big(f[i](x), f[i](y)\big) < d(x, y)$ and $\lambda < 1$.

Define $D = \sup_{x,y\in K} d(x, y)$. If $D = 0$, then $K$ consists of a singe point, which means that the condition \textbf{\emph{(a)}} trivially holds. From now on we assume that $D > 0$. 
Take any $n\in\mathbb{N}$ such that 
\[\lambda^n < \frac{z\varepsilon}{D}.\]
We claim that for any $w\in\{1, 2\ldots, m\}^*$ with $|w| \ge n$ we have $\diam\big( f[w](K) \big) \le \varepsilon$. We only have to show this for $w$ of length exatly $n$. This is because if $w^\prime$ is of length at least $n$, then $f[w^\prime](K)$ is contained in $f[w](K)$, where $w$ is a prefix of $w^\prime$ of length $n$.

So take any $w = a_1 a_2 \ldots a_n\in\{1, 2\ldots, m\}^n$.
Let us first establish that:
\begin{equation}
\label{d_diam}
\sup\limits_{x,y\in K} d\big(f[w](x), f[w](y)\big) \le z\varepsilon.
\end{equation}
Assume for contradiction that $d\big(f[w](x), f[w](y)\big) > z\varepsilon$ for some $x,y\in K$. Define $w_{\ge i} = a_i a_{i + 1}\ldots a_n$ for $i = 1, \ldots, n$, and let $w_{\ge n + 1}$ be the empty string. Set 
\[F_i = d\big(f[w_{\ge i}](x), f[w_{\ge i}](y)\big).\] 
Note that $F_1 = d\big(f[w](x), f[w](y)\big)$ and $F_{n + 1} = d(x,y)$.
Since $f[w_\ge i] = f[a_i]\circ f[w_{\ge i + 1}]$, and since $f[a_i]$ is $d$-contracting, for every $i = 1, \ldots, n$ we have
\[F_i = d\big(f[w_{\ge i}](x), f[w_{\ge i}](y)\big) \le d\big(f[w_{\ge i + 1}](x), f[w_{\ge i + 1}](y)\big) = F_{i + 1}.\]
 In fact, if $F_{i + 1} \ge z\varepsilon$, then, by definition of $\lambda$, it holds that $F_i \le \lambda F_{i + 1}$. Recall that $F_1 = d\big(f[w](x), f[w](y)\big) > z\varepsilon$.
Since $F_1 \le F_2 \le \ldots \le F_{n + 1}$, we have $F_i \ge z\varepsilon$ for every $i$. Therefore, $F_1 \le \lambda F_2 \le\ldots \le \lambda^n F_{n + 1}$. On the other hand, $F_{n + 1} = d(x, y) \le D$. Hence, $z\varepsilon < F_1 \le \lambda^n D$. But by definition of $n$ we have $\lambda^n D < z\varepsilon$, contradiction.

It remains to show that \eqref{d_diam} implies that $\diam\big(f[w](K)\big) \le \varepsilon$. We do so by showing for every $x,y\in K$ that $|f[w](x) - f[w](y)| > \varepsilon \implies d\big(f[w](x), f[w](y) \big) > z\varepsilon$. This means that if $\diam\big(f[w](K)\big) > \varepsilon$, then \eqref{d_diam} cannot hold.

Now, if $|f[w](x) - f[w](y)| > \varepsilon$, then $(f[w](x), f[w](y)) \in T$, so
\[\frac{d\big(f[w](x), f[w](y)\big)}{|f[w](x) - f[w](y)|} \ge \min\limits_{(x, y)\in T} \frac{d(x, y)}{|x - y|} = z\neq 0.\]
Therefore, $d\big(f[w](x), f[w](y)\big) \ge z\cdot |f[w](x) - f[w](y)| > z\varepsilon$. \qedhere

\end{proof}

\section{The Structure of Continuous Positional Payoffs}
\label{sec:fixed_app}

In this section, we give an explicit description of the set of continuous positionally determined payoffs, see Theorem \ref{exhaustive_method} below. Then, in Proposition \ref{disc_char}, we use our description to give an alternative definition of the class of multi-discounted payoffs. Finally, in Proposition \ref{not_multi}, we give an example of a continuous positionally determined payoff which, in a quite strong sense, does not ``reduce'' to multi-discounted payoffs.

 We start with some terminology. Let $K\subseteq \mathbb{R}$ be a compact set. We call a family of $m$ continuous functions $f_1, \ldots, f_m\colon K \to K$ a \emph{contracting base} if there exists a continuous metric $d\colon K\times K \to [0, +\infty)$ such that $f_i$ is $d$-contracting for every $i\in\{1, \ldots, m\}$ (in other words, if $f_1, \ldots, f_m$ satisfy the item \textbf{\emph{(b)}} of Theorem \ref{general_fact}). If $f_1, \ldots, f_m$ are non-decreasing, then we call this family of functions a non-decreasing contracting base.

\begin{clm} 
\label{contracting_base}
Let $K\subseteq \mathbb{R}$ be a compact set and $f_1, \ldots, f_m\colon K\to K$ be $m$ continuous functions forming a contracting base. Then for any $a_1 a_2 a_3\ldots\in \{1, 2, \ldots, m\}^\omega$ we have
\[\left|\bigcap\limits_{n = 1}^\infty f_{a_1} \circ f_{a_2}\circ \ldots \circ f_{a_n}(K)\right| = 1.\]
Moreover, for any $x\in K$, the quantity $f_{a_1} \circ f_{a_2}\circ \ldots \circ f_{a_n}(x)$ converges to the unique element of this intersection as $n\to\infty$.
\end{clm}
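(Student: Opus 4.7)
The plan is to deduce this directly from Theorem~\ref{general_fact}: the hypothesis that $f_1,\dots,f_m$ form a contracting base is exactly item \textbf{(b)} of that theorem, so item \textbf{(a)} also holds, i.e.\ $\lim_{n\to\infty}\diam\bigl(f_{a_1}\circ f_{a_2}\circ\ldots\circ f_{a_n}(K)\bigr)=0$ for every $a_1 a_2 a_3\ldots\in\{1,\ldots,m\}^\omega$. The rest is a standard Cantor-intersection argument on $\mathbb{R}$.

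First I would set $K_n=f_{a_1}\circ f_{a_2}\circ\ldots\circ f_{a_n}(K)$ and observe that each $K_n$ is a non-empty compact subset of $K$: it is the image of the compact set $K$ under a composition of continuous functions, hence compact, and non-empty because $K$ is. Next I would show that $\{K_n\}$ is a nested decreasing sequence, which follows from $f_{a_{n+1}}(K)\subseteq K$ by applying $f_{a_1}\circ\ldots\circ f_{a_n}$ to both sides, yielding $K_{n+1}\subseteq K_n$.

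Then I would invoke Cantor's intersection theorem for compact subsets of $\mathbb{R}$: the intersection $\bigcap_{n=1}^\infty K_n$ is non-empty. Combined with $\diam(K_n)\to 0$ (from item \textbf{(a)} of Theorem~\ref{general_fact}), the intersection must consist of a single point $y\in K$, since any two points of the intersection would both lie in $K_n$ for every $n$ and therefore be at distance at most $\diam(K_n)\to 0$, forcing them to coincide. This settles the first claim.

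For the ``moreover'' part, fix $x\in K$ and let $y$ denote the unique point of $\bigcap_{n=1}^\infty K_n$. For every $n$ we have $f_{a_1}\circ\ldots\circ f_{a_n}(x)\in K_n$ and $y\in K_n$, so
\[\bigl|f_{a_1}\circ f_{a_2}\circ\ldots\circ f_{a_n}(x)-y\bigr|\le \diam(K_n),\]
and the right-hand side tends to $0$ by item \textbf{(a)} of Theorem~\ref{general_fact}. Hence the sequence $f_{a_1}\circ\ldots\circ f_{a_n}(x)$ converges to $y$, as required. There is no serious obstacle here; the only thing to be careful about is to use the already-established equivalence \textbf{(a)}$\Leftrightarrow$\textbf{(b)} of Theorem~\ref{general_fact} rather than redoing its proof.
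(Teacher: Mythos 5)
Your proposal is correct and follows essentially the same route as the paper: invoke the equivalence \textbf{(b)}$\Rightarrow$\textbf{(a)} of Theorem~\ref{general_fact} to get $\diam(K_n)\to 0$, use Cantor's intersection theorem for non-emptiness, and bound the distance from $f_{a_1}\circ\ldots\circ f_{a_n}(x)$ to the intersection point by $\diam(K_n)$. The only difference is that you spell out the nestedness and compactness of the sets $K_n$, which the paper leaves implicit.
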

\begin{proof}
This intersection is non-empty 
due to Cantor's intersection theorem. To show that this intersection contains just one point, observe that $f_1, \ldots, f_m$ satisfy the item \textbf{\emph{(b)}} of Theorem \ref{general_fact} by definition. Hence, they also satisfy the item  \textbf{\emph{(a)}} of this theorem. This means that the diameter of this intersection is $0$. 

As for the second claim, note that the distance between the unique element of our intersection and the point $f_{a_1} \circ f_{a_2}\circ \ldots \circ f_{a_n}(x)$ is at most $\diam f_{a_1} \circ f_{a_2}\circ \ldots \circ f_{a_n}(K)$, because both these points belong to $f_{a_1} \circ f_{a_2}\circ \ldots \circ f_{a_n}(K)$. It remains to refer to the item  \textbf{\emph{(a)}} of Theorem \ref{general_fact} once again.
\end{proof}
This claim means that any contracting base  $f_1, \ldots, f_m\colon K\to K$ induces a payoff $\psi[f_1, \ldots, f_m]\colon\{1, \ldots, m\}^\omega\to K \subseteq\mathbb{R}$, defined by
\[\{\psi[f_1, \ldots, f_m](a_1 a_2 a_3\ldots)\} = \bigcap\limits_{n = 1}^\infty f_{a_1} \circ f_{a_2}\circ \ldots \circ f_{a_n}(K),\]
for every $a_1 a_2 a_3\ldots \in \{1, 2,  \ldots, m\}^\omega$. By the second part of Claim \ref{contracting_base}, we have:
\[\psi[f_1, \ldots, f_m](a_1 a_2 a_3\ldots) = \lim\limits_{n\to\infty} f_{a_1}\circ\ldots\circ f_{a_n}(x)\]
for every $x\in K$.
\begin{clm} 
\label{contracting_payoff}
Let $K\subseteq\mathbb{R}$ be a compact set and $f_1, \ldots, f_m\colon K\to K$ be $m$ continuous functions forming a contracting base. Then the payoff $\psi = \psi[f_1, \ldots, f_m]$, induced by $f_1, \ldots, f_m$, is continuous and shift-deterministic. Moreover, for all $i\in\{1, 2, \ldots, m\}$ and $\alpha = a_1a_2 a_3\ldots\in\{1, 2, \ldots, m\}^\omega$ we have:
\begin{equation}
\label{psi_shift_eq}
\psi(i \alpha) = f_i(\psi(\alpha)).
\end{equation}
\end{clm}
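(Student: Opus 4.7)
The plan is to first establish the shift equation \eqref{psi_shift_eq}, then derive shift-determinism as a direct consequence, and finally prove continuity using Proposition \ref{cont_help} together with the diameter-shrinking property.

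First I would prove \eqref{psi_shift_eq}. Fix $i\in\{1,\ldots,m\}$ and $\alpha = a_1 a_2 a_3\ldots$. By the ``moreover'' part of Claim \ref{contracting_base}, applied to the sequence $i a_1 a_2 a_3\ldots$ and any point $x\in K$,
\[\psi(i\alpha) = \lim_{n\to\infty} f_i \circ f_{a_1}\circ\ldots\circ f_{a_n}(x).\]
Since $f_i$ is continuous and $f_{a_1}\circ\ldots\circ f_{a_n}(x)\to \psi(\alpha)$ as $n\to\infty$ (again by Claim \ref{contracting_base}), we may pull the limit inside $f_i$ to obtain $\psi(i\alpha) = f_i(\psi(\alpha))$.

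Shift-determinism now follows immediately: if $\psi(\alpha) = \psi(\beta)$, then
\[\psi(i\alpha) = f_i(\psi(\alpha)) = f_i(\psi(\beta)) = \psi(i\beta)\]
for every $i\in\{1,\ldots,m\}$.

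For continuity, I would use Proposition \ref{cont_help}. Let $\alpha = a_1 a_2 a_3\ldots$ and let $\{\beta_n\}_{n\in\mathbb{N}}$ be a sequence of elements of $\{1,\ldots,m\}^\omega$ such that $\alpha$ and $\beta_n$ share a prefix of length $n$ for every $n$. The key observation is that by the definition of $\psi$, both points
\[\psi(\alpha),\; \psi(\beta_n) \;\in\; f_{a_1}\circ f_{a_2}\circ\ldots\circ f_{a_n}(K),\]
since both infinite words begin with $a_1 a_2\ldots a_n$. Hence
\[|\psi(\beta_n) - \psi(\alpha)| \le \diam\bigl(f_{a_1}\circ f_{a_2}\circ\ldots\circ f_{a_n}(K)\bigr).\]
Because $f_1,\ldots,f_m$ form a contracting base, they satisfy item \textbf{(b)} of Theorem \ref{general_fact}, and therefore item \textbf{(a)} as well, which means the right-hand side tends to $0$ as $n\to\infty$. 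Thus $\psi(\beta_n)\to \psi(\alpha)$, and continuity follows from Proposition \ref{cont_help}.

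None of the steps involves a serious obstacle; the mildest care is required in the continuity argument, where one must exploit two different consequences of the contracting base hypothesis (the nested inclusion of the image sets via Claim \ref{contracting_base}, and the diameter-shrinking property from Theorem \ref{general_fact}\textbf{(a)}) simultaneously.
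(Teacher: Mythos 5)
Your proof is correct and follows essentially the same route as the paper's: establish \eqref{psi_shift_eq} by pulling the limit inside $f_i$, deduce shift-determinism immediately, and prove continuity via Proposition \ref{cont_help} by bounding $|\psi(\beta_n)-\psi(\alpha)|$ by $\diam\bigl(f_{a_1}\circ\ldots\circ f_{a_n}(K)\bigr)$, which vanishes by item \textbf{(a)} of Theorem \ref{general_fact}. The only cosmetic difference is that you justify the membership of $\psi(\alpha),\psi(\beta_n)$ in $f_{a_1}\circ\ldots\circ f_{a_n}(K)$ directly from the intersection definition of $\psi$, whereas the paper cites \eqref{psi_shift_eq}; both are fine.
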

\begin{proof}
Let us first establish \eqref{psi_shift_eq}. Take any $x\in K$.
 By Claim \ref{contracting_base}, we have
$\psi(i\alpha) = \lim_{n\to\infty} f_i \circ f_{a_1}\circ\ldots\circ f_{a_n}(x)$. By continuity of $f_i$, we get $\psi(i\alpha) = f_i\big(\lim_{n\to\infty} f_{a_1}\circ\ldots\circ f_{a_n}(x)\big)= f_i(\psi(\alpha))$.

This immediately implies that $\psi$ is shift-deterministic. To show that $\psi$ is continuous, we use Proposition \ref{cont_help}. Take any $\alpha = a_1 a_2 a_3\ldots \in\{1, 2, \ldots, m\}^\omega$ and any infinite sequence $\{\beta_n\}_{n\ge 1}$ of elements of $\{1, 2, \ldots, m\}^\omega$ such that $\alpha$ and $\beta_n$ have the same prefixes of length $n$, for every $n\ge 1$. We have to show that $\psi(\beta_n)\to\psi(\alpha)$ as $n\to\infty$. By \eqref{psi_shift_eq}, both $\psi(\beta_n)$ and $\psi(\alpha)$ belong to the set $f_{a_1}\circ f_{a_2}\circ\ldots \circ f_{a_n}(K)$. Hence, the difference between $\psi(\beta_n)$ and $\psi(\alpha)$ does not exceed the diameter of this set. But by the item \textbf{\emph{(a)}} of Theorem \ref{general_fact}, the diameter of this set converges to $0$ as $n\to\infty$.
\end{proof}

\begin{thm}
\label{exhaustive_method}
Let $m \ge 1$ be a natural number. Set\footnote{We assume that edge labels are natural numbers, for the simplicity of notation.} $A = \{1, 2, \ldots, m\}$. Then the set of continuous positionally determined payoffs from $A^\omega$ to $\mathbb{R}$ coincides with the set of $\vphi\colon A^\omega\to \mathbb{R}$ that can be obtained in the following 5 steps.
\begin{itemize}
\item \textbf{Step 1.} Take a compact set $K\subseteq \mathbb{R}$.
\item \textbf{Step 2.} Take a continuous metric $d\colon K\times K\to [0,+\infty)$.
\item \textbf{Step 3.} Take $m$ continuous non-decreasing $d$-contracting functions $f_1, f_2, \ldots, f_m\colon K\to K$. They will form a non-decreasing contracting base.
\item \textbf{Step 4.} Consider the payoff $\psi = \psi[f_1, \ldots, f_m]$ induced by $f_1, \ldots, f_m$.
\item \textbf{Step 5.} Choose a continuous non-decreasing function $g\colon \psi(A^\omega)\to\mathbb{R}$ and set $\vphi = g\circ \psi$.
\end{itemize}  
\end{thm}
\begin{proof}
Assume first that $\vphi\colon A^\omega\to\mathbb{R}$ is continuous and positionally determined. Then $\vphi$ is prefix-monotone by Theorem \ref{thm:char}. By Proposition \ref{phi_to_psi}, there is a continuous prefix-monotone shift-deterministic payoff $\psi\colon A^\omega\to\mathbb{R}$  and a continuous non-decreasing function $g\colon \psi(A^\omega)\to\mathbb{R}$ such that $\vphi = g\circ \psi$.  Set $K = \psi(A^\omega)$. Note that $K$ is compact due to the continuity of $\psi$. Define $f_i = \sh[i,\psi]$.  By Claim \ref{pm_eq}, the functions $f_1, f_2, \ldots, f_m$ are non-decreasing. By Proposition \ref{check}, the functions $f_1, f_2, \ldots, f_m$ are continuous and satisfy the item \textbf{\emph{(a)}} of Theorem \ref{general_fact}. Hence, they form a non-decreasing contracting base with respect to some continuous metric $d\colon K\times K \to [0,+\infty)$. It remains to show that $\psi$ coincides with the payoff induced by $f_1, \ldots, f_m$. For that, take any $x\in K = \psi(A^\omega)$. By the second part of Claim \eqref{contracting_base}, it is sufficient to show that $\psi(a_1 a_2 a_3\ldots) = \lim_{n\to\infty} f_{a_1}\circ f_{a_2}\circ \ldots \circ f_{a_n}(x)$ for every $a_1 a_2 a_3\ldots\in A^\omega$.  There exists $\beta\in A^\omega$ such that $x = \psi(\beta)$. Observe that $f_{a_1} \circ f_{a_2}\circ \ldots \circ f_{a_n}(x) = \sh[a_1, \psi]\circ\sh[a_2, \psi]\circ\ldots \circ \sh[a_n,\psi](\psi(\beta)) = \psi(a_1 a_2\ldots a_n\beta)$. This quantity converges  to $\psi(a_1 a_2 a_3\ldots)$ as $n\to\infty$ due to the continuity of $\psi$.

In turn, assume that $\vphi$ was obtained in these 5 steps. By Theorem \ref{thm:char}, we only have to show that $\vphi$ is continuous and prefix-monotone. First, by Claim \ref{contracting_payoff}, we have that $\psi$ is continuous. Since $\vphi = g\circ \psi$ and $g$ is continuous, we have that $\vphi$ is continuous as well. In turn, since $f_1, \ldots, f_m$ are non-decreasing, from \eqref{psi_shift_eq} we get that $\psi$ is prefix-monotone. This easily implies that $\vphi$ is also prefix-monotone. Indeed, if $\vphi$ is not prefix-monotone, then $g\circ\psi(u\beta) > g\circ \psi(u\gamma)$ and $g\circ\psi(v\beta) > g\circ \psi(v\gamma)$ for some $u, v\in A^*$ and $\beta,\gamma\in A^\omega$. Since $g$ is non-decreasing, this implies that $\psi(u\beta) > \psi(u\gamma)$ and $\psi(v\beta) < \psi(v\gamma)$, contradiction with the prefix-monotonicity of $\psi$.

\end{proof}
\begin{rem}
Recall that we did not use the continuity of $g$ from Proposition \ref{phi_to_psi} in the inductive argument, but we use it in the proof of Theorem \ref{exhaustive_method}.
\end{rem}

Next, we characterize the class of multi-discounted payoffs, using the language of Theorem \ref{exhaustive_method}.

\begin{prop}
\label{disc_char}
Let $m \ge 1$ be a natural number and set $A = \{1, 2,\ldots, m\}$. Then the set of multi-discounted payoffs from $A^\omega$ to $\mathbb{R}$ coincides with the set of $\psi\colon A^\omega\to\mathbb{R}$ that can be obtained as in Theorem \ref{exhaustive_method} with the following additional requirements:
\begin{itemize}
\item $K = [-W, W]$ for some $W > 0$;
\item $d$ is a standard metric $d(x, y) = |x - y|$;
\item $f_1, \ldots, f_m$ are affine functions with the slope from $[0,1)$. That is, for each $a\in\{1, \ldots, m\}$ there exists $\lambda(a) \in [0, 1)$ and $w(a)\in\mathbb{R}$ such that $f_a(x) = \lambda(a) x + w(a)$ (observe that we must have $\lambda(a) W + w(a) \le W$ and $\lambda(a) (-W) + w(a) \ge -W$, because $f_a\colon [-W, W]\to[-W, W]$).
\item $g$ is the identity function.
\end{itemize}
\end{prop}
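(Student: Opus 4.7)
The plan is to prove the two inclusions separately; both directions are bookkeeping built on top of \eqref{psi_shift_eq} and Claim \ref{contracting_base}.

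For the forward inclusion, suppose $\psi$ is multi-discounted with functions $\lambda\colon A\to[0,1)$ and $w\colon A\to\mathbb{R}$. Since $A$ is finite, I can pick any $W > 0$ with $W \ge \max_{a\in A} |w(a)|/(1-\lambda(a))$, which guarantees $|w(a)| \le (1-\lambda(a))W$ for every $a$. Set $K = [-W,W]$, let $d$ be the standard metric on $K$, and define $f_a(x) = \lambda(a)x + w(a)$. By the choice of $W$, each $f_a$ maps $K$ into $K$; each $f_a$ is continuous, non-decreasing (since $\lambda(a)\ge 0$), and $d$-contracting, because $|f_a(x)-f_a(y)| = \lambda(a)|x-y| < |x-y|$ for $x\ne y$ (using that $\max_{a\in A}\lambda(a) < 1$). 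So $f_1,\ldots,f_m$ form a non-decreasing contracting base with respect to $d$. It remains to identify the induced payoff with $\psi$: by the second part of Claim \ref{contracting_base}, $\psi[f_1,\ldots,f_m](a_1a_2a_3\ldots) = \lim_{n\to\infty} f_{a_1}\circ\cdots\circ f_{a_n}(0)$, and a direct iteration gives
\[f_{a_1}\circ\cdots\circ f_{a_n}(0) = \sum_{k=1}^n \lambda(a_1)\cdots\lambda(a_{k-1}) w(a_k),\]
whose limit as $n\to\infty$ is exactly $\psi(a_1a_2a_3\ldots)$. Taking $g$ to be the identity closes this direction.

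For the backward inclusion, suppose $\psi$ is produced by the 5-step construction with the four restrictions, so $f_a(x) = \lambda(a)x + w(a)$ with $\lambda(a)\in[0,1)$ and $g = \mathrm{id}$. By \eqref{psi_shift_eq} from Claim \ref{contracting_payoff}, $\psi(a\alpha) = f_a(\psi(\alpha)) = \lambda(a)\psi(\alpha) + w(a)$ for every $a\in A$ and $\alpha\in A^\omega$. Iterating this relation $n$ times yields
\[\psi(a_1a_2\ldots a_n\alpha) = \sum_{k=1}^n \lambda(a_1)\cdots\lambda(a_{k-1}) w(a_k) + \lambda(a_1)\cdots\lambda(a_n)\psi(\alpha)\]
for every $\alpha\in A^\omega$. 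Since $\psi$ takes values in the bounded set $K = [-W,W]$ and $\lambda(a_1)\cdots\lambda(a_n)\to 0$ (again because $\max_{a\in A}\lambda(a) < 1$), the second term vanishes in the limit. The word $a_1\ldots a_n\alpha$ converges to $a_1a_2a_3\ldots$ in $A^\omega$, so continuity of $\psi$ gives
\[\psi(a_1a_2a_3\ldots) = \sum_{k=1}^\infty \lambda(a_1)\cdots\lambda(a_{k-1}) w(a_k),\]
which is the defining formula \eqref{multi_disc} of a multi-discounted payoff.

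No step here is truly hard; the only points deserving care are the choice of $W$ ensuring $f_a(K)\subseteq K$ in the forward direction, and the uniform bound $\max_a\lambda(a) < 1$ (available by finiteness of $A$) that drives both the $d$-contraction and the vanishing of $\lambda(a_1)\cdots\lambda(a_n)$ in the backward direction.
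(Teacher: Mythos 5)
Your proof is correct and follows essentially the same route as the paper: the same choice of $W$ and affine $f_a$, the same identification of the induced payoff with the discounted sum via $\lim_n f_{a_1}\circ\cdots\circ f_{a_n}(0)$, and the same computation run in reverse for the converse (your use of \eqref{psi_shift_eq} plus boundedness/continuity in the backward direction is a cosmetic variant of the paper's "same chain of equalities").
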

\begin{proof}
First, assume that $\psi\colon A^\omega\to\mathbb{R}$ is a multi-discounted payoff. Take $\lambda\colon A\to[0, 1)$ and $w\colon A\to\mathbb{R}$ such that
\[\psi(a_1 a_2 a_3\ldots) = \sum\limits_{n = 1}^\infty \lambda(a_1) \cdot \ldots \cdot \lambda(a_{n-1}) \cdot w(a_n)\]
for all $a_1 a_2 a_3\ldots \in A^\omega$. Obviously, there exists $W > 0$ such that $\lambda(a) W + w(a) \le W$ and $\lambda(a) (-W) + w(a) \ge -W$ for all $a\in A$. Set $K = [-W, W]$ and $f_a(x) = \lambda(a) x + w(a)$ for every $a\in A$. Obviously, $f_1, \ldots, f_m\colon[-W,W]\to[-W,W]$ form a non-decreasing contracting base with respect to a standard metric $d(x, y) = |x - y|$. It remains to show that $\psi$ coincides with the payoff induced by $f_1, \ldots, f_m$. Indeed,
\begin{align*}
\psi(a_1 a_2 a_3\ldots) &= \sum\limits_{n = 1}^\infty \lambda(a_1) \cdot \ldots \cdot \lambda(a_{n-1}) \cdot w(a_n) \\
&=\lim\limits_{n\to\infty}\big[ w(a_1) + \lambda(a_1) \cdot w(a_2) + \ldots + \lambda(a_1)\cdot\ldots \cdot\lambda(a_{n - 1}) w(a_n)\big]\\
&= \lim\limits_{n\to\infty}\big[w(a_1) + \lambda(a_1) \big(w(a_2) + \ldots + \big(w(a_{n - 1}) + \lambda(a_{n - 1}) w(a_n) \big)\big)\big]\\
&= \lim\limits_{n\to\infty}\big[ f_{a_1}\circ f_{a_2}\circ\ldots f_{a_n}(0)\big].
\end{align*}
This computation also establishes the opposite direction of the proposition. Indeed, assume that $\psi$ was obtained as in Theorem \ref{exhaustive_method} with the requirements of our proposition. Then $\psi$ must be equal to the payoff induced by $f_1, \ldots, f_m$, where $f_1, \ldots, f_m$ are affine functions with the slope from $[0,1)$. By writing down the same chain of equalities as above, we get that $\psi$ is multi-discounted.
\end{proof}

We also construct a continuous positionally determined payoff which does not ``reduce'' to the multi-discounted ones, in a sense of the following definition. 

\begin{defi}
Let $A$ be a finite set, $\vphi, \psi\colon A^\omega\to\mathbb{R}$ be two payoffs, and $G$ be an $A$-labeled game graph. We say that $\vphi$ \textbf{positionally reduces} to $\psi$ \textbf{inside} $G$ if any pair of positional strategies in $G$ which is an equilibrium for $\psi$ is also an equilibrium for $\vphi$.
\end{defi}
This definition has an algorithmic motivation. Namely, note that finding a positional equilibrium for $\psi$ in $G$ is at least as hard as for $\vphi$, provided that $\vphi$ reduces to $\psi$ inside $G$. There are classical reductions from Parity to Mean Payoff games~\cite{jurdzinski1998deciding} and from Mean Payoff to Discounted games~\cite{zwick1996complexity} that work in exactly this way. See also~\cite{gimbert2008applying} for a reduction from \emph{Priority} Mean Payoff games to Multi-Discounted games. As far as we know, our next proposition provides the first example of a positionally determined payoff which does not reduce to multi-discounted payoffs in this sense.

\begin{prop}
\label{not_multi}
There exist a finite set $A$, a continuous positionally determined payoff $\vphi\colon A^\omega\to\mathbb{R}$ and an $A$-labeled game graph $G$ such that there exists no multi-discounted payoff to which $\vphi$ reduces inside $G$.
\end{prop}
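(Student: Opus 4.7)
My plan is to construct the required triple $(A, \vphi, G)$ explicitly. First, using Theorem \ref{exhaustive_method}, I would take $A = \{1, 2\}$ and a non-decreasing contracting base $(f_1, f_2)$ on $K = [0, 1]$ in which at least one function is strictly non-affine: concretely, $f_1(x) = x/2$ and $f_2(x) = 1 - (1-x)^2/2$. The images are $[0, 1/2]$ and $[1/2, 1]$ respectively, so every iterate shrinks the diameter by at least a factor of $2$; this verifies the contracting-base property via Theorem \ref{general_fact}. Set $\vphi = \psi[f_1, f_2]$, which is continuous and positionally determined by Theorem \ref{exhaustive_method}.

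A preliminary observation is that $\vphi$ is not of the form $g \circ \psi_{\mathrm{md}}$ for any multi-discounted $\psi_{\mathrm{md}}$ and any non-decreasing continuous $g$: otherwise Proposition \ref{conserv_eq} would yield reduction in \emph{every} game graph, contradicting the proposition. Equivalently, a would-be conjugation $h$ turning $(f_1, f_2)$ into a pair of affine functions would — because $f_1$ has constant derivative $1/2$ — need to satisfy $h'(y/2) / h'(y) = \mathrm{const}$, forcing $h(y) = y^p$ up to rescaling; one then checks that for no $p>0$ does $(1 - (1-x^{1/p})^2/2)^p$ become affine in $x$ (e.g., plugging $x = 1/2^p$ gives $(7/8)^p = (1/2)^p$, which fails for every $p > 0$).

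Next, I would design an $A$-labeled game graph $G$ rich enough to probe the non-affine structure of $f_2$ at multiple arguments. A natural candidate has several Max nodes, possibly a Min node, and enough branching so that the positional-strategy plays produce several distinct ultimately periodic words such as $(12)^\omega$, $(112)^\omega$, $(122)^\omega$; the Bellman equations for $\vphi$ on $G$ then pin down the $\vphi$-values at several of these periodic points as fixed points of distinct compositions of $f_1$ and $f_2$, and the orderings among these values determine $E_\vphi$.

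Finally, I would argue non-reducibility by contradiction. Assume $\vphi$ reduces to some multi-discounted $\psi$ in $G$, so that $\emptyset \neq E_\psi \subseteq E_\vphi$. Multi-discounted $\psi$ has only four parameters $\lambda(1), w(1), \lambda(2), w(2)$, and its Bellman values on $G$ are rational expressions in these. The inclusion $E_\psi \subseteq E_\vphi$ translates, via the Bellman equations and the strict/non-strict orderings distinguishing $E_\vphi$ inside the set of all positional strategies, into a finite system of equalities and strict inequalities in those four parameters. The main obstacle — and the technical crux of the proof — is to engineer $G$ so that this system is infeasible: the plan is that enforcing the $\vphi$-equilibrium orderings at the several distinct cycles probes in $G$ forces the affine shift $\sh[2, \psi](x) = \lambda(2) x + w(2)$ to agree with the strictly non-affine $f_2$ at three or more distinct arguments, which is impossible for any affine function. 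This contradiction yields the desired statement.
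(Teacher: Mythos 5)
Your overall strategy is the right one and matches the paper's in spirit: pick a contracting base in which one generator is genuinely non-affine, build a one-player graph of lassos, and derive a contradiction from the rigidity of affine shift maps. But the proof has a genuine gap exactly where you flag ``the technical crux'': the game graph $G$ is never constructed, the finite system of (in)equalities is never written down, and the mechanism you propose for infeasibility does not work as stated. The reduction hypothesis $E_\psi \subseteq E_\vphi$ only yields \emph{inequalities} among the $\psi$-values of the periodic words realized by positional strategies in $G$ (``going left is at least as good as going right for $\psi$ at such-and-such node''); it never produces equalities, so nothing forces the affine map $\sh[2,\psi]$ to \emph{agree} with $f_2$ at three arguments. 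You would instead have to show that a specific finite sign pattern of strict inequalities, realized by $\vphi$, is unrealizable by every multi-discounted payoff --- and that is a nontrivial semialgebraic infeasibility claim in the four parameters $\lambda(1),w(1),\lambda(2),w(2)$ that your write-up does not establish. This is precisely what the paper supplies via Claim~\ref{aaa}: no multi-discounted $\psi$ admits $a,b,\gamma$ with $\psi(a\gamma)>\psi(b\gamma)$, $\psi(aa\gamma)<\psi(bb\gamma)$, $\psi(aaa\gamma)>\psi(bbb\gamma)$, proved by taking a positive linear combination of the three inequalities that collapses to $0>0$; the graph is then three pairs of lassos labeled by $1^k3^\omega$ vs.\ $2^k3^\omega$ for $k=1,2,3$, and $f_2$ is a piecewise-linear map tuned so that $\vphi$ realizes the pattern $>,<,>$. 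Without an obstruction of this kind your argument does not close.

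Two smaller points. Your ``preliminary observation'' that $\vphi$ does not factor as $g\circ\psi_{\mathrm{md}}$ is neither needed nor sufficient for the proposition (for each fixed $G$ a different multi-discounted payoff could a priori serve as the target), and its justification assumes the conjugating map is differentiable, which is not granted. Also, your verification of the contracting-base property is off: for $f_2(x)=1-(1-x)^2/2$ the image of $[0,1/2]$ has diameter $3/8$, so iterates do not shrink diameters by a factor of $2$; the correct (and easy) route is to check directly that $|f_2(x)-f_2(y)|=|x-y|\,|2-x-y|/2<|x-y|$ for distinct $x,y\in[0,1]$, i.e.\ condition \textbf{(c)} of Theorem~\ref{general_fact} with the standard metric.
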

\begin{proof}
It is sufficient to establish the following lemma.
\begin{lem}
\label{not_multi_strong}
There exist a finite set $A$, a continuous positionally determined payoff $\vphi\colon A^\omega\to\mathbb{R}$ and three pairs $(\alpha_1, \beta_1), (\alpha_2, \beta_2), (\alpha_3, \beta_3)\in A^\omega\times A^\omega$ of ultimately periodic infinite words such that:
\begin{itemize}
\item for all $i\in\{1,2,3\}$ we have $\vphi(\alpha_i) > \vphi(\beta_i)$
\item for every multi-discounted payoff $\psi\colon A^\omega\to\mathbb{R}$ there exists $i\in\{1,2,3\}$ such that $\psi(\alpha_i) \le \psi(\beta_i)$.
\end{itemize}
\end{lem}

Indeed, assume that this lemma is proved. Consider a game graph from Figure \ref{3l}, consisting of three pairs of ``lassos''.  The only optimal positional strategy of Max there w.r.t.~$\vphi$ is to go to the left from $v_1$, $v_2$ and $v_3$. On the other hand, any multi-discounted payoff has an optimal positional strategy which for some $i\in\{1, 2, 3\}$ goes to the right from $v_i$. Hence, there is no multi-discounted payoff to which $\vphi$ positionally reduces insides the game graph from Figure \ref{3l}.

\begin{figure}[h!]
\centering
\begin{tikzpicture}[gamegraph]
\node[draw,circle] (v1) {$v_1$};
\node[draw,circle,left=1.5cm of v1] (v1l) {};
\node[draw,circle,right=1.5cm of v1] (v1r) {};
\draw[transition] (v1)--(v1l);
\draw[transition] (v1)--(v1r);
\draw[transition] (v1l) to [out=120,in=-120, distance=2cm] (v1l);
\draw[transition] (v1r) to [out=60,in=-60, distance=2cm] (v1r);
\node[draw=none, above right = 0.2 and 0.2cm of v1l] (a1) { $\alpha_1$};
\node[draw=none, above left = 0.2 and 0.2cm of v1r] (b1) { $\beta_1$};

\node[draw,circle, below=1cm of v1] (v2) {$v_2$};
\node[draw,circle,left=1.5cm of v2] (v2l) {};
\node[draw,circle,right=1.5cm of v2] (v2r) {};
\draw[transition] (v2)--(v2l);
\draw[transition] (v2)--(v2r);
\draw[transition] (v2l) to [out=120,in=-120, distance=2cm] (v2l);
\draw[transition] (v2r) to [out=60,in=-60, distance=2cm] (v2r);
\node[draw=none, above right = 0.2 and 0.2cm of v2l] (a2) { $\alpha_2$};
\node[draw=none, above left = 0.2 and 0.2cm of v2r] (b2) { $\beta_2$};

\node[draw,circle, below=1cm of v2] (v3) {$v_3$};
\node[draw,circle,left=1.5cm of v3] (v3l) {};
\node[draw,circle,right=1.5cm of v3] (v3r) {};
\draw[transition] (v3)--(v3l);
\draw[transition] (v3)--(v3r);
\draw[transition] (v3l) to [out=120,in=-120, distance=2cm] (v3l);
\draw[transition] (v3r) to [out=60,in=-60, distance=2cm] (v3r);
\node[draw=none, above right = 0.2 and 0.2cm of v3l] (a3) { $\alpha_3$};
\node[draw=none, above left = 0.2 and 0.2cm of v3r] (b3) { $\beta_3$};

\end{tikzpicture}
  \caption{All nodes are owned by Max. For every $i = 1, 2, 3$, the node $v_i$ has two lassos $\mathcal{L}_i$ and $\mathcal{R}_i$ starting at it, one going to the left, and the other going to the right. We label their edges in such a way that $\lab(\mathcal{L}_i) = \alpha_i$ and $\lab(\mathcal{R}_i) = \beta_i$. This is possible because $\alpha_1, \alpha_2, \alpha_3$ and $\beta_1, \beta_2, \beta_3$ are ultimately periodic.}
\label{3l}
\end{figure}
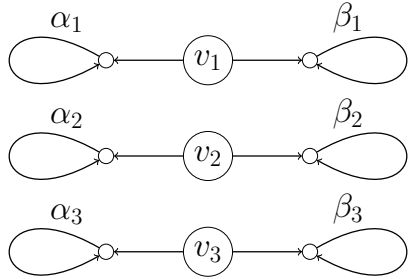

To show Lemma \ref{not_multi_strong}, we observe that following property of the multi-discounted payoffs.

\begin{clm}
\label{aaa}
Let $A$ be a finite set and $\psi\colon A^\omega\to\mathbb{R}$ be a multi-discounted payoff. Then there are no $a, b\in A, \gamma \in A^\omega$ such that
\begin{align*}
\psi(a\gamma) &> \psi(b\gamma),\\
\psi(aa\gamma) &< \psi(bb\gamma),\\
\psi(aaa\gamma) &> \psi(bbb\gamma).
\end{align*}
\end{clm}
\begin{proof}
Assume for contradiction that such $a, b, \gamma$ exist. Take $\lambda\colon A\to [0,1)$ and $w\colon A\to [0,1)$ defining $\psi$ as in \eqref{multi_disc}.
Set $\lambda = \lambda(a)$, $\mu = \lambda(b)$, $u = w(a), v = w(b)$ and $x = \psi(\gamma)$. Then $\lambda, \mu\in [0, 1)$ and
\begin{align}
\label{one}
\lambda x + u &> \mu x + v,\\
\label{two}
\lambda^2 x + (1 + \lambda)u &< \mu^2 x + (1 + \mu) v,\\
\label{three}
\lambda^3 x + (1 + \lambda + \lambda^2)u &> \mu^3 x + (1 + \mu + \mu^2) v.
\end{align}
 Multiply \eqref{one} by $\lambda + \mu + \lambda\mu$, multiply \eqref{two} by $-(1 +\lambda + \mu)$, multiply \eqref{three} by $1$ and take the sum. This will give us $0 > 0$, contradiction.
\end{proof}

To finish a proof of Lemma \ref{not_multi_strong}, we construct a continuous positionally determined payoff $\vphi\colon\{1, 2, 3\}^\omega\to\mathbb{R}$ such that:
\begin{align*}
\vphi(1 3^\omega) &> \vphi(2 3^\omega), \\
\vphi(11 3^\omega) &< \vphi(22 3^\omega), \\
\vphi(111 3^\omega) &> \vphi(222 3^\omega).
\end{align*}
For that, we use Theorem \ref{exhaustive_method}.
Namely, we set $K = [0, 1]$ and $d(x, y) = |x - y|$. Next, we let $f_1 = \frac{x}{2}$, $f_3 = \frac{x}{2} + \frac{1}{2}$. These two functions are clearly $d$-contracting. Finally, we let $f_2\colon [0,1]\to[0,1]$ be a piece-wise linear function whose graph has the following break-points:
\[(0, 0),\,\, (0.26, 0.11),\,\,  (0.49, 0.26),\,\,  (1, 0.49).\]
Observe that its slope is always from $[0, 1)$, so $f_2$ is also $d$-contracting. So $f_1, f_2, f_3$ is a non-decreasing contracting base. Let $\vphi\colon\{1,2,3\}^\omega\to\mathbb{R}$ be the payoff induced by $f_1, f_2, f_3$, that is, 
\[\vphi(a_1 a_2 a_3\ldots) = \lim\limits_{n\to\infty} f_{a_1} \circ f_{a_2}\circ\ldots \circ f_{a_n}(1).\]
(Of course, 1 can here can be changed to any point from $[0,1]$, but 1 is the most convenient for computations below.)
By Theorem \ref{exhaustive_method}, we have that $\vphi$ is a continuous positionally determined payoff. Now,
it is easy to see that $\vphi(3^\omega) = 1$ and
\begin{align*}
\vphi(1 3^\omega) = 0.5 &> \vphi(2 3^\omega) = 0.49, \\
\vphi(11 3^\omega) = 0.25 &< \vphi(22 3^\omega) = 0.26, \\
\vphi(111 3^\omega) = 0.125 &> \vphi(222 3^\omega) = 0.11. \tag*{\qedhere}
\end{align*}
\end{proof}

\section{Strategy improvement argument}
\label{sec:str}
Here we establish the existence of a solution to Bellman's equations (Proposition \ref{exists_solution}) via the \emph{strategy improvement}. This will yield our third proof of Theorem \ref{thm:char}. We start with an observation that the vector of values of a positional strategy always gives a solution to a restriction of Bellman's equations to edges that are consistent with this strategy. 

\begin{lem}
\label{cons_lemma}
Let $A$ be a finite set, $\vphi\colon A^\omega\to\mathbb{R}$ be a continuous prefix-monotone shift-deterministic payoff and $G = \langle V, V_\Max, V_\Min, E\rangle$ be an $A$-labeled game graph. Then for every positional strategy $\sigma$ of Max in $G$ we have for all $u\in V$:
\begin{align*}
\Val[\sigma](u) &=  \min\limits_{e \in E^\sigma, \source(e) = u} \sh[\lab(e), \vphi]\big(\Val[\sigma](\target(e))\big) \,\, \mbox{for } u\in V_\Min.
\end{align*}
(If $u\in V_\Max$, the minimum is over a single edge $e = \sigma(u)$. If $u\in V_\Min$, the minimum is over all edges that start at $u$).
\end{lem}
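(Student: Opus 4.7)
My plan is to unfold the definition of $\Val[\sigma](u)$ and rearrange the infimum according to the first edge of each consistent path. First I would observe that any $\mathcal{P} \in \Cons(u,\sigma)$ begins with some edge $e$ satisfying $\source(e) = u$, and this edge is automatically in $E^\sigma$: if $u \in V_\Max$ then consistency forces $e = \sigma(u)$, while if $u \in V_\Min$ every out-edge of $u$ belongs to $E^\sigma$ by definition. Conversely, every extension $e\mathcal{P}'$ with $e\in E^\sigma$, $\source(e)=u$, and $\mathcal{P}' \in \Cons(\target(e),\sigma)$ lies in $\Cons(u,\sigma)$. This yields the disjoint decomposition
\[
\Cons(u,\sigma) \;=\; \bigsqcup_{\substack{e \in E^\sigma \\ \source(e) = u}} \bigl\{\, e\mathcal{P}' \,\big|\, \mathcal{P}' \in \Cons(\target(e),\sigma) \,\bigr\}.
\]

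Next I would use shift-determinism of $\vphi$ to write $\vphi(\lab(e)\lab(\mathcal{P}')) = \sh[\lab(e),\vphi]\bigl(\vphi(\lab(\mathcal{P}'))\bigr)$. Combining this with the decomposition above and the definition $\Val[\sigma](u) = \inf \vphi \circ \lab\bigl(\Cons(u,\sigma)\bigr)$, I obtain
\[
\Val[\sigma](u) \;=\; \min_{\substack{e \in E^\sigma \\ \source(e) = u}} \; \inf_{\mathcal{P}' \in \Cons(\target(e),\sigma)} \sh[\lab(e),\vphi]\bigl(\vphi \circ \lab(\mathcal{P}')\bigr).
\]
The outer $\inf$ is a $\min$ because $E$ is finite.

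The key step is to commute $\sh[\lab(e),\vphi]$ with the inner infimum. Here I would invoke two properties already at hand: by Claim~\ref{pm_eq} (together with prefix-monotonicity of $\vphi$) the function $\sh[\lab(e),\vphi]$ is non-decreasing on $\vphi(A^\omega)$, and by Proposition~\ref{check} it is continuous. For any non-empty $S \subseteq \vphi(A^\omega)$ with $m := \inf S$, monotonicity immediately yields $\sh[\lab(e),\vphi](m) \le \inf \sh[\lab(e),\vphi](S)$. For the reverse inequality, I pick a sequence $s_n \in S$ with $s_n \to m$ (noting that $m \in \vphi(A^\omega)$ because $\vphi(A^\omega)$ is compact, hence closed) and use continuity to conclude $\sh[\lab(e),\vphi](s_n) \to \sh[\lab(e),\vphi](m)$. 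Applying this with $S = \vphi \circ \lab(\Cons(\target(e),\sigma))$ gives precisely $\sh[\lab(e),\vphi](\Val[\sigma](\target(e)))$ inside the outer $\min$, which is the desired identity.

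The main (and essentially only) technical point is this commutation of the shift operator with the infimum; it reduces to a standard fact about non-decreasing continuous maps on compact sets, and the hypotheses of continuity, prefix-monotonicity and shift-determinism of $\vphi$ feed in exactly where one expects.
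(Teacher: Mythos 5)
Your proposal is correct and follows essentially the same route as the paper: decompose $\Cons(u,\sigma)$ by the first consistent edge, take the minimum over finitely many edges of the infima, and commute $\sh[\lab(e),\vphi]$ with the inner infimum using its monotonicity (Claim \ref{pm_eq}) and continuity (Proposition \ref{check}), with compactness of $\vphi(A^\omega)$ ensuring the infimum lies in the domain. No gaps.
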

\begin{rem}
Technically, Bellman's equations are over $\x\in\vphi(A^\omega)^V$. So we have to argue that $\Val[\sigma](u)\in\vphi(A^\omega)$ for every $u\in V$. This is because $\Val[\sigma](u)$ is the infimum of some subset of $\vphi(A^\omega)$. In turn, since $\vphi$ is continuous, we have that $\vphi(A^\omega)$ is compact, and hence is closed.
\end{rem}
\begin{proof}[Proof of Lemma \ref{cons_lemma}]
For brevity, we will denote $C_u = \Cons(u)$.
By definition, $\Val[\sigma](u)$ is the infimum of the image of $\vphi\circ\lab$ on the set $C_u$. Now, the set $C_u$ is exactly the set of infinite paths that start at $u$ and consist only of edges from $E^\sigma$. So we can write:
\[C_u = \bigcup\limits_{\substack{e\in E^\sigma \\ \source(e) = u}}eC_{\target(e)}.\]
The infimum of a union of finitely many sets is the minimum of the infimums of these sets. So we get:
\[\Val[\sigma](u) =  \min\limits_{\substack{e\in E^\sigma \\ \source(e) = u}} \inf\vphi \circ\lab\left(eC_{\target(e)}\right).\]
It is sufficient to show that:
 \begin{equation}
\label{eC}
\inf\vphi \circ\lab\left(eC_{\target(e)}\right) = \sh[\lab(e)]\big(\Val[\sigma](\target(e))\big).
\end{equation}

For any $a\in A,\mathcal{S}\subseteq A^\omega$, by definition of $\sh[a]$, we can write:
\[\vphi\big(a\mathcal{S}\big) = \sh[a]\big(\vphi(\mathcal{S})\big).\]
After applying this to $a = \lab(e), \mathcal{S} = \lab\big(C_{\target(e)}\big)$, we obtain:
\[\vphi \circ\lab\left(eC(\target(e)\right) = \sh[\lab(e), \vphi] \big(\vphi\circ\lab\left(C_{\target(e)}\right) \big).\]
Now, since $\sh[\lab(e)]$ is non-decreasing (by Claim \ref{pm_eq}) and continuous (by Proposition \ref{check}), we can interchange $\inf$ and $\sh[\lab(e)]$. This gives us:
\begin{align*}
\inf\sh[\lab(e)] &\big(\vphi\circ\lab\left(C_{\target(e)}\right) \big)\\ &= 
\sh[\lab(e)] \big(\inf\vphi\circ\lab\left(C_{\target(e)}\right)\big) \\
&= \sh[\lab(e)]\big(\Val[\sigma](\target(e))\big).
\end{align*}
Hence, \eqref{eC} is proved.
\end{proof}
Next, take a positional strategy $\sigma$ of Max.
If the vector $\{\Val[\sigma](u)\}_{u\in V}$ happens to be a solution to the Bellman's equations, then we are done. Otherwise by Lemma \ref{cons_lemma} there must exist an edge $e\in E$ with $\source(e) \in V_\Max$ such that $\Val[\sigma](\source(e)) < \sh[\lab(e), \vphi]\big(\Val[\sigma](\target(e))\big)$. We call edges satisfying this property \emph{$\sigma$-violating}. We show that \emph{switching} $\sigma$ to any $\sigma$-violating edge gives us a positional strategy which \emph{improves} $\sigma$.
\begin{lem}
\label{switching_lemma}
Let $A$ be a finite set, $\vphi\colon A^\omega\to\mathbb{R}$ be a continuous prefix-monotone shift-deterministic payoff and $G = \langle V, V_\Max, V_\Min, E\rangle$ be an $A$-labeled game graph. Next, let $\sigma$ be a positional strategy  of Max in $G$. Assume that the vector $\Val[\sigma] = \{\Val[\sigma](u)\}_{u\in V}$ does not satisfy (\ref{max_opt}--\ref{min_opt}) and let $e^\prime \in E$ be any $\sigma$-violating edge. Define a positional strategy $\sigma^\prime$ of Max as follows:
\[\sigma^\prime(u) = \begin{cases}e^\prime & u = \source(e^\prime), \\ \sigma(u) & \mbox{otherwise}.\end{cases}\]
Then $\sum\limits_{u\in V} \Val[\sigma^\prime](u) > \sum\limits_{u\in V} \Val[\sigma](u)$.
\end{lem}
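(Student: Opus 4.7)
My plan is to prove the stronger pointwise statement $\Val[\sigma'](u) \ge \Val[\sigma](u)$ for every $u \in V$, with strict inequality at $u^{*} := \source(e')$; summation then yields the conclusion. Write $\mathbf{x}_u := \Val[\sigma](u)$ for brevity and abbreviate $\sh[a,\vphi]$ to $\sh[a]$. By Lemma \ref{cons_lemma} applied to $\sigma$, we have $\mathbf{x}_u = \sh[\lab(\sigma(u))](\mathbf{x}_{\target(\sigma(u))})$ for $u \in V_\Max$ and $\mathbf{x}_u = \min_{e: \source(e) = u}\sh[\lab(e)](\mathbf{x}_{\target(e)})$ for $u \in V_\Min$ (here I use that every edge from a Min node is $\sigma$-consistent). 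The hypothesis that $e'$ is $\sigma$-violating gives the strict inequality
\[
\sh[\lab(e')](\mathbf{x}_{\target(e')}) \;>\; \mathbf{x}_{u^*}.
\]

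The heart of the argument is a telescoping estimate along paths consistent with $\sigma'$, mirroring the potential argument used in Lemma~\ref{from_bell_to_pos}. Fix $u \in V$ and any $\mathcal{P} = e_1 e_2 \ldots \in \Cons(u, \sigma')$; let $u_0 = u$, $u_n = \target(e_1\ldots e_n)$, and define the potentials
\[
T_n \;=\; \sh[\lab(e_1 \ldots e_n)](\mathbf{x}_{u_n}), \qquad T_0 = \mathbf{x}_u.
\]
Exactly as in the proof of Lemma~\ref{from_bell_to_pos}, the continuity of $\vphi$ implies $T_n \to \vphi\circ\lab(\mathcal{P})$. I will verify that $T_{n+1} \ge T_n$ for every $n$; since $\sh[\lab(e_1 \ldots e_n)]$ is non-decreasing by Claim~\ref{pm_eq}, it suffices to show $\sh[\lab(e_{n+1})](\mathbf{x}_{u_{n+1}}) \ge \mathbf{x}_{u_n}$ in each of three cases: (i) $u_n \in V_\Min$, where the inequality follows from the Min branch of Lemma~\ref{cons_lemma} (any outgoing edge qualifies); (ii) $u_n \in V_\Max\setminus\{u^*\}$, where $\sigma'(u_n) = \sigma(u_n) = e_{n+1}$ and Lemma~\ref{cons_lemma} gives equality; (iii) $u_n = u^*$, where $e_{n+1} = e'$ and the $\sigma$-violating hypothesis gives a strict inequality. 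Consequently $T_n \ge T_0 = \mathbf{x}_u$ for all $n$, so $\vphi\circ\lab(\mathcal{P}) \ge \mathbf{x}_u$, and taking infimum over $\mathcal{P} \in \Cons(u,\sigma')$ yields $\Val[\sigma'](u) \ge \mathbf{x}_u$.

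To produce strict improvement at $u^*$, I will track the constant size of the jump in case (iii). Every $\mathcal{P} \in \Cons(u^*, \sigma')$ begins with $e_1 = e'$, so
\[
T_1 - T_0 \;=\; \sh[\lab(e')](\mathbf{x}_{\target(e')}) - \mathbf{x}_{u^*} \;=:\; \delta \;>\; 0,
\]
and this $\delta$ does not depend on $\mathcal{P}$. Monotonicity of the potentials then gives $T_n \ge T_1 = T_0 + \delta$ for all $n \ge 1$, and passing to the limit yields $\vphi\circ\lab(\mathcal{P}) \ge \mathbf{x}_{u^*} + \delta$ for every such $\mathcal{P}$. Taking infimum: $\Val[\sigma'](u^*) \ge \mathbf{x}_{u^*} + \delta > \mathbf{x}_{u^*}$. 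Combined with the pointwise bound $\Val[\sigma'](u) \ge \mathbf{x}_u$ for the remaining nodes, this gives $\sum_u \Val[\sigma'](u) > \sum_u \Val[\sigma](u)$.

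The only place where care is needed is the propagation of the strict gap $\delta$ at $u^*$ through the telescoping: since the $\sh[a]$ are merely non-decreasing (not strictly so), one cannot hope to show $T_{n+1} > T_n$ at later steps, but the trick is that the gap from step $0$ to step $1$ is a fixed positive constant which, by monotonicity of the subsequent potentials, is then automatically preserved. All other verifications reduce to two applications of Lemma~\ref{cons_lemma} and the non-decreasingness of the $\sh[a]$ from Claim~\ref{pm_eq}, so there are no further technical obstacles.
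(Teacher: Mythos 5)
Your proof is correct and follows essentially the same route as the paper: the paper packages the telescoping of the potentials $T_n = \sh[\lab(e_1\ldots e_n)](\x_{u_n})$ into a separate ``potential transformation lemma'' with modified costs $R^{\x}(e)$ and coefficients $\lambda_n\ge 0$, $\lambda_1=1$, whereas you telescope the same quantities directly and track the monotone sequence $T_n$; the substance (Lemma \ref{cons_lemma} for non-negativity of the increments on $\sigma$-consistent edges, non-decreasingness of the shift maps from Claim \ref{pm_eq}, and the fixed positive first increment $\delta$ at $\source(e')$) is identical.
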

\begin{proof}
For $\x\in \vphi(A^\omega)^V$, let the \emph{modified cost} of an edge $e\in E$ with respect to $\x$ be the following quantity:
\[R^\x(e) = \sh[\lab(e)](\x_{\target(e)}) - \x_{\source(e)}.\]

We need the following ``potential transformation lemma''  (its analog for discounted payoffs is well-known, see, e.g., \cite[Lemma 3.6]{hansen2013strategy}).

\begin{lem} 
\label{pot_lemma}
Take any $\x \in \vphi(A^\omega)^V$.
Let $\mathcal{P} = e_1 e_2 e_2\ldots$ be an infinite path in $G$. Then there exists an infinite sequence of non-negative real numbers $\lambda_1, \lambda_2, \lambda_3,\ldots$ such that $\lambda_1 = 1$ and 
\[\vphi\circ\lab\big(\mathcal{P}\big) - \x_{\source(\mathcal{P})} = \sum\limits_{n = 1}^\infty \lambda_n \cdot R^\x(e_i).\]
\end{lem}
\begin{proof}
For $u\in V$, let $\beta_u\in A^\omega$ be such that $\x_u = \vphi(\beta_u)$. Define $s_n = \target(e_1 e_2\ldots e_n)$ for $n\ge 1$ and $s_0 = \source(\mathcal{P})$. By the continuity of $\vphi$, we have
\[\vphi\circ\lab\big(\mathcal{P}\big) = \lim\limits_{n\to\infty} \vphi\big(\lab(e_1 e_2\ldots e_n)\beta_{s_n}\big) = \lim\limits_{n\to\infty} \sh[\lab(e_1 e_2\ldots e_n)](\x_{s_n}).\]
Hence we obtain
\begin{align*}
\vphi\circ\lab\big(\mathcal{P}\big) - \x_{s_0} &=  \lim\limits_{n\to\infty} \big(\sh[\lab(e_1 e_2\ldots e_n)](\x_{s_n}) - \x_{s_0}\big) \\
&=  \lim\limits_{n\to\infty}  \sum\limits_{k = 1}^n \big(\sh[\lab(e_1 e_2\ldots e_k)](\x_{s_k}) - \sh[\lab(e_1 e_2\ldots e_{k - 1})](\x_{s_{k - 1}}) \big)\\
&= \sum\limits_{n = 1}^\infty\big(\sh[\lab(e_1 e_2\ldots e_n)](\x_{s_n}) - \sh[\lab(e_1 e_2\ldots e_{n - 1})](\x_{s_{n - 1}}) \big).
\end{align*}
We can write each term in this series as:
\begin{align*}
\sh&[\lab(e_1 e_2\ldots e_n)](\x_{s_n}) - \sh[\lab(e_1 e_2\ldots e_{n - 1})](\x_{s_{n - 1}}) \\
&= \sh[\lab(e_1 \ldots e_{n - 1})] \big(\sh[\lab(e_n)](\x_{s_n})\big) -  \sh[\lab(e_1 e_2\ldots e_{n - 1})](\x_{s_{n - 1}})\\
&= \lambda_n \cdot \big(\sh[\lab(e_n)](\x_{s_n}) - \x_{s_{n - 1}}\big),
\end{align*}
for some $\lambda_n\in[0,+\infty)$, because $\sh[\lab(e_1 \ldots e_{n - 1})]$ is non-decreasing (for $n = 1$ we get $\lambda_1 = 1$ because $\sh[\mbox{empty word}]$ is the identity function). It remains to notice that by definition:
\[\sh[\lab(e_n)](\x_{s_n}) - \x_{s_{n - 1}} = R^\x(e_n)\]
(because $\source(e_n) = s_{n - 1}, \target(e_n) = s_n$).
\end{proof}
We apply this lemma to the vector $\g = \{\Val[\sigma](u)\}_{u\in V}$. Note that by Lemma \ref{cons_lemma} we have $R^\g(e) \ge 0$ for every $e\in E^\sigma$. In turn, since $e^\prime$ is $\sigma$-violating, we have $R^\g(e^\prime) > 0$.

Let us at first show that
\[\Val[\sigma^\prime](u) \ge \Val[\sigma](u) = \g_u\]
for every $u\in V$. In other words, we will demonstrate that $\vphi\circ \lab(\mathcal{P}) \ge \g_u$ for any infinite path $\mathcal{P} = e_1 e_2 e_3\ldots \in \Cons(u,\sigma^\prime)$. Indeed, by Lemma \ref{pot_lemma} we can write:
\begin{equation}
\label{series_ineq}
\vphi\circ \lab(\mathcal{P}) - \g_u = \sum\limits_{n = 1}^\infty \lambda_n R^\g(e_n)
\end{equation}
for some $\lambda_n\in [0,+\infty), \lambda_1 = 1$. All edges of $\mathcal{P}$ are from $E^\sigma\cup\{e^\prime\}$. Hence, all terms in this series are non-negative, and so is the left-hand side. 

To establish that $\sum_{u\in V} \Val[\sigma^\prime](u) > \sum_{u\in V} \Val[\sigma](u)$, it is now enough to show that $\Val[\sigma^\prime](u) > \Val[\sigma](u) = \g_u$ for some $u\in V$. We will show this for $u = \source(e^\prime)$. The first edge of any $\mathcal{P}\in \Cons(u,\sigma^\prime)$ is $e^\prime$. So the first term in \eqref{series_ineq} for any such $\mathcal{P}$ equals $R^\g(e^\prime)$. All the other terms, as we have discussed, are non-negative. Hence, $\vphi\circ\lab(\mathcal{P}) \ge R^\g(e^\prime) + \Val[\sigma](u)$ for any $\mathcal{P}\in \Cons(u,\sigma^\prime)$. Since $R^\g(e^\prime)$ is strictly positive, we get that $\Val[\sigma^\prime](u) > \Val[\sigma](u)$.
\end{proof}
By this lemma, a Max's positional strategy $\sigma^*$ maximizing the quantity  $\sum_{u\in V} \Val[\sigma](u)$ (over positional strategies $\sigma$ of Max) gives a solution to (\ref{max_opt}--\ref{min_opt}). Such $\sigma^*$ exists just because there are only finitely many positional strategies of Max. This finishes our strategy improvement proof of Proposition \ref{exists_solution}. Let us note that the same argument can be carried out with positional strategies of Min (via analogues of Lemma \ref{cons_lemma} and Lemma \ref{switching_lemma} for Min).

\section{Subexponential-time Algorithm}
\label{sec:alg}

In this subsection, we discuss implications of our strategy improvement argument to the \emph{strategy synthesis problem}. The strategy synthesis for a positionally determined payoff $\vphi$ is an algorithmic problem of finding an equilibrium (with respect to $\vphi$) of two positional strategies in a given game graph. It is classical that the strategy synthesis for parity, mean and multi-discounted payoffs payoffs admits a randomized algorithm which is subexponential in the number of nodes~\cite{halman2007simple,bjorklund2005combinatorial}. We obtain the same subexponential bound for all continuous positionally determined payoffs. For that, we use a framework of recursively local-global functions due to Bj{\"o}rklund and Vorobyov~\cite{bjorklund2005combinatorial}.

Let us start with an observation that for continuous positionally determined shift-deterministic payoffs, a non-optimal positional strategy can always be improved by changing it in a single node.
\begin{prop}
\label{local_imp}
Let $A$ be a finite set and $\vphi\colon A^\omega\to\mathbb{R}$ be a continuous positionally determined shift-deterministic payoff. Then for any $A$-labeled game graph $G = \langle V, V_\Max, V_\Min, E\rangle$ the following two conditions hold:
\begin{itemize}
\item if $\sigma$ is a positional strategy of Max in $G$ which is not optimal, then in $G$ there exists a Max's positional strategy $\sigma^\prime$ such that $\left|\{u\in V_\Max \mid \sigma(u) \neq \sigma^\prime(u)\} \right| = 1$ and $\sum_{u\in V} \Val[\sigma^\prime](u) > \sum_{u\in V} \Val[\sigma](u)$;
\item if $\tau$ is a positional strategy of Min in $G$ which is not optimal, then in $G$ there exists a Min's positional strategy $\tau^\prime$ such that $\left|\{u\in V_\Min \mid \tau(u) \neq \tau^\prime(u)\} \right| = 1$ and $\sum_{u\in V} \Val[\tau^\prime](u) < \sum_{u\in V} \Val[\tau](u)$.
\end{itemize}
\end{prop}
\begin{proof}
Assume that $\sigma$ is a positional strategy of Max which is not optimal. First, let us show that the vector $\Val[\sigma]$ cannot be a solution to Bellman's equations. Indeed, by Lemma \ref{cons_lemma}, it holds that $\sigma$ uses only edges that are $\Val[\sigma]$-tight. Hence, if $\Val[\sigma]$ were a solution to Bellman's equation, then, by Lemma \ref{from_bell_to_pos}, strategy $\sigma$ would have been optimal.

Since $\Val[\sigma]$ is not a solution to Bellman's equation, we can take $\sigma^\prime$ as in Lemma \ref{switching_lemma}, obtained by switching $\sigma$ to some $\sigma$-violating edge. The argument for positional strategies of Min is similar.
\end{proof}

It is instructive to visualize this proposition by imagining the set of positional strategies of one of the players (say, Max) as a \emph{hypercube}. Namely, in this hypercube there will be as many dimensions as there are nodes of Max. A coordinate, corresponding to a node $u\in V_\Max$, will take values  in the set of edges that start at $u$. Obviously, vertices of such a hypercube are in a one-to-one correspondence with positional strategies of Max. Let us call two vertices \emph{neighbors} of each other if they differ in exactly one coordinate. Now, Proposition \ref{local_imp} means the following: any vertex $\sigma$, maximizing $\sum_{u\in V} \Val[\sigma](u)$ over its neighbors, also maximizes this quantity over the \emph{whole} hypercube.

So the optimization problem of maximizing  $\sum_{u\in V} \Val[\sigma](u)$ (equivalently, finding an optimal positional strategy of Max) has the following remarkable feature: all its \emph{local} maxima are also \emph{global}. For positional strategies of Min the same holds for the minima. Optimization problems with this feature have been studied in numerous works, starting from a classical area of convex optimization.

Observe that in our case this local-global property is \emph{recursive}; i.e., it holds for any restriction to a \emph{subcube} of our hypercube. Indeed, subcubes correspond to subgraphs of our initial game graph, and for any subgraph we still have Proposition \ref{local_imp}. Bj{\"o}rklund and Vorobyov~\cite{bjorklund2005combinatorial} noticed that a similar phenomenon occurs for all classical positionally determined payoffs. In turn, they showed that any optimization problem on a hypercube with this recursive local-global property admits a randomized algorithm which is subexponential in the dimension of a hypercube.  In our case, this yields a randomized algorithm for the strategy synthesis problem which is subexponential in the number of nodes of a game graph.

Still, this only applies to continuous payoffs that are shift-deterministic (as we have Proposition \ref{local_imp} only for shift-deterministic payoffs). One more issue is that we did not specify how our payoffs are represented. We overcome these difficulties in the following result. Its proof is given in Section \ref{sec:subexp}.

\begin{thm}
\label{cont_alg}
Let $A$ be a finite set and $\varphi\colon A^\omega \to\mathbb{R}$ be a continuous positionally determined payoff.
Consider an oracle which for given $u, v, a, b\in A^*$ tells, whether there exists $w\in A^*$ such that $\vphi(w u (v)^\omega) > \vphi(w a (b)^\omega)$. 
There exists a randomized algorithm, which solves the strategy synthesis problem for $\vphi$ with this oracle in expected $e^{O\left(\log m + \sqrt{n\log m}\right)}$ time for game graphs with $n$ nodes and $m$ edges. In particular, every call to the oracle in the algorithm is for $u, v, a, b\in A^*$ that are of length $O(n)$, and the expected number of the calls is $e^{O\left(\log m + \sqrt{n\log m}\right)}$.
\end{thm}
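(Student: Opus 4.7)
The plan is to reduce to the shift-deterministic case, simulate $\psi$-comparisons through the given oracle, and then invoke the Björklund--Vorobyov subexponential algorithm for recursively local-global optimization on the hypercube of positional strategies.

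First, by Proposition~\ref{phi_to_psi} we can write $\vphi = g\circ\psi$, where $\psi\colon A^\omega\to\mathbb{R}$ is continuous, prefix-monotone and shift-deterministic, and $g$ is continuous non-decreasing; by Corollary~\ref{non_decr_comp}, any positional equilibrium for $\psi$ is one for $\vphi$, so it suffices to find a positional equilibrium for $\psi$. The explicit formula $\psi(\gamma) = \sum_{w\in A^*}(|A|+1)^{-|w|}\vphi(w\gamma)$ from the proof of Proposition~\ref{phi_to_psi}, together with prefix-monotonicity of $\vphi$, shows that all terms of
\[\psi(\alpha)-\psi(\beta) = \sum_{w\in A^*}(|A|+1)^{-|w|}\bigl(\vphi(w\alpha)-\vphi(w\beta)\bigr)\]
have the same sign; hence $\psi(\alpha)>\psi(\beta)$ if and only if some $w\in A^*$ satisfies $\vphi(w\alpha)>\vphi(w\beta)$. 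Specialising to ultimately periodic $\alpha = uv^\omega$ and $\beta = ab^\omega$, two oracle calls suffice to determine the strict order between $\psi(uv^\omega)$ and $\psi(ab^\omega)$.

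Second, I shall run strategy improvement on the hypercube of Max's positional strategies equipped with the partial order $\sigma\preceq \sigma'\iff \Val[\sigma](u)\le \Val[\sigma'](u)$ for all $u\in V$ (values with respect to $\psi$). Lemma~\ref{switching_lemma} shows that any non-optimal $\sigma$ admits a single-edge switch producing a strictly $\preceq$-greater neighbour, and this local-global property is automatically inherited by every subgame obtained by restricting the edge set, since $\psi$ remains continuous, prefix-monotone and shift-deterministic on any subgraph. This is exactly the recursively local-global structure addressed by the Björklund--Vorobyov framework~\cite{bjorklund2005combinatorial}, which yields an expected $e^{O(\log m + \sqrt{|V_\Max|\log m})}$ bound on the number of pivots. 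Each pivot calls for $\Val[\sigma]$, which by positional determinacy of $\psi$ is realised by a positional Min best response on the subgraph with edge set $E^\sigma$; this one-player sub-problem is solved by the same algorithm applied to Min's hypercube in $e^{O(\log m + \sqrt{|V_\Min|\log m})}$ further pivots. Since $|V_\Max|+|V_\Min|=n$, the product is $e^{O(\log m + \sqrt{n\log m})}$.

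Third, every comparison the algorithm performs involves two values $\Val[\sigma](u)$ and $\Val[\sigma'](u)$, each realised by a play $\mathcal{P}^{\sigma,\tau}_u$ between positional strategies; such plays are lassos with prefix and loop of length at most $n$, so each comparison feeds the oracle words of length $O(n)$. The main obstacle is to verify that the Björklund--Vorobyov framework applies to the partial order $\preceq$ and recursively across the full family of subgames encountered during the recursion; this reduces to the fact that Proposition~\ref{local_imp} and Lemma~\ref{switching_lemma} are statements about arbitrary $A$-labeled game graphs and therefore hold on every such subgame simply by restriction.
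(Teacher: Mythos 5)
Your proposal is correct and follows essentially the same route as the paper: reduce to the shift-deterministic $\psi$ of Proposition~\ref{phi_to_psi} (whose order is decided by the existential oracle, as in the paper's Lemma on $\psi$-comparisons), apply the Bj\"orklund--Vorobyov recursively local-global framework via Proposition~\ref{local_imp}, and resolve each neighbor comparison by solving the one-player subgames for Min's best responses and comparing lassos of length $O(n)$. The only cosmetic difference is that the paper linearizes your componentwise partial order into the real-valued objective $\sum_{u\in V}\Val[\sigma](u)$, which is what the Bj\"orklund--Vorobyov theorem formally requires; since neighboring strategies have componentwise comparable value vectors, the two formulations coincide.
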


So to deal with the issue of representation we assume a suitable oracle access to $\vphi$. Still, the oracle from Theorem \ref{cont_alg} might look unmotivated. Here it is instructive to recall that all continuous positionally determined $\vphi$ must be prefix-monotone. For prefix-monotone $\vphi$, the formula $\exists w\in A^*\,\, \vphi(w\alpha) > \vphi(w\beta)$ defines a total preorder on $A^\omega$, and our oracle just compares ultimately periodic infinite words according to this preorder. In fact, it is easy to see that the formula $\exists w\in A^*\,\, \vphi(w\alpha) > \vphi(w\beta)$ defines a total preorder on $A^\omega$ if and \emph{only if} $\vphi$ is prefix-monotone. This indicates a fundamental role of this preorder for prefix-monotone $\vphi$ and justifies the use of the corresponding oracle in Theorem \ref{cont_alg}. Let us note that $\big[\exists w\in A^*\,\, \vphi(w\alpha) > \vphi(w\beta)\big] \iff \vphi(\alpha) > \vphi(\beta)$ if $\vphi$ is additionally shift-deterministic.

\section{Proof of Theorem \ref{cont_alg}}
\label{sec:subexp}

First in Subsection \ref{subsec:reducing} it is demonstrated that w.l.o.g. we may assume that $\vphi$ is shift-deterministic (so that we can use Proposition \ref{local_imp}) and that we are given an oracle which simply compares values of $\vphi$ on ultimately periodic infinite words. Then in Subsection \ref{subsec:rlg} we expose a framework of \emph{recursively local-global functions} due to  Bj{\"o}rklund and Vorobyov. Finally, in Subsection \ref{subsec:deriving} we use this framework to show Theorem \ref{cont_alg} in the assumptions of Subsection \ref{subsec:reducing}.
\subsection{Reducing to shift-deterministic payoffs}
\label{subsec:reducing}
 It is sufficient to establish Theorem \ref{cont_alg} with the following assumptions.
\begin{asm}
\label{ass1}
Payoff $\vphi$ is continuous, positionally determined and \textbf{shift-deterministic}.
\end{asm}
\begin{asm}
\label{ass2}
We are given an oracle which for $u, v, a, b\in A^*$ tells, whether $\vphi(u(v)^\omega) > \vphi(a(b)^\omega)$.
\end{asm}

To justify this, it is enough to show the following lemma.
\begin{lem}
\label{phi_to_psi_refined}
Let $A$ be a finite set and let $\vphi\colon A^\omega\to\mathbb{R}$ be a continuous positionally determined payoff. Then there exist a  continuous positionally determined shift-deterministic payoff $\psi\colon A^\omega\to\mathbb{R}$ and a  non-decreasing function $g\colon \psi(A^\omega) \to\mathbb{R}$ such that $\vphi = g\circ\psi$ and
\[\big[\exists w\in A^*\,\, \vphi(w\alpha) > \vphi(w\beta)\big] \iff \psi(\alpha) > \psi(\beta) \qquad \mbox{ for all } \alpha, \beta\in A^\omega.\]
\end{lem}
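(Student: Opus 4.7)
The plan is to reuse the $\psi$ and $g$ constructed in Proposition~\ref{phi_to_psi} and verify the two additional properties it does not explicitly assert: positional determinacy of $\psi$, and the biconditional characterizing strict $\psi$-order via prefixed comparisons of $\vphi$.

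Since $\vphi$ is continuous and positionally determined, Theorem~\ref{thm:char} gives prefix-monotonicity of $\vphi$, so the hypotheses of Proposition~\ref{phi_to_psi} are satisfied. Applying it, I obtain a continuous, prefix-monotone, shift-deterministic payoff $\psi$ together with a continuous non-decreasing $g\colon\psi(A^\omega)\to\mathbb{R}$ with $\vphi=g\circ\psi$. Continuity and prefix-monotonicity of $\psi$ then yield positional determinacy of $\psi$ by the ``if'' direction of Theorem~\ref{thm:char}.

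For the biconditional, the plan is to exploit the explicit series representation
\[\psi(\alpha)-\psi(\beta)=\sum_{w\in A^*}\left(\frac{1}{|A|+1}\right)^{|w|}\bigl[\vphi(w\alpha)-\vphi(w\beta)\bigr]\]
coming from the proof of Proposition~\ref{phi_to_psi}. The key observation is that prefix-monotonicity of $\vphi$ forces every term of this series to have the same sign (or to vanish). From this, both directions are immediate: if $\vphi(w_0\alpha)>\vphi(w_0\beta)$ for some $w_0$, then every term is nonnegative and the $w_0$-term is strictly positive, so $\psi(\alpha)>\psi(\beta)$; conversely, if $\psi(\alpha)>\psi(\beta)$, the sum is positive, so some term must be strictly positive, supplying the desired witness $w$. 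I do not expect a genuine obstacle here: everything beyond Proposition~\ref{phi_to_psi} is bookkeeping, and the common-sign property of the series is precisely what promotes the one-directional implication \eqref{for_g} from that earlier proof into a full equivalence.
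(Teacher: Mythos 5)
Your proposal is correct and follows essentially the same route as the paper: take the $\psi$ and $g$ from Proposition~\ref{phi_to_psi}, note that continuity and prefix-monotonicity give positional determinacy of $\psi$ via Theorem~\ref{thm:char}, and read the equivalence off the defining series of $\psi$. The only (harmless) variation is in the forward implication, where you argue directly from the common-sign property of the series terms, whereas the paper chains $g$ non-decreasing, prefix-monotonicity of $\psi$, and shift-determinism of $\psi$; both arguments are valid and of comparable length.
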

Indeed, let $\vphi$ be an arbitrary continuous positionally determined payoff and $\psi$ be as in Lemma \ref{phi_to_psi_refined}. By Proposition \ref{conserv_eq}, an equilibrium for $\psi$ is also an equilibrium for $\vphi = g\circ \psi$. So to solve the strategy synthesis for $\vphi$, it is enough to do so for $\psi$. Clearly, $\psi$ satisfies Assumption \ref{ass1}. Finally, note that the oracle from Assumption \ref{ass2} for $\psi$ simply coincides on every input with the oracle we are given for $\vphi$ in Theorem \ref{cont_alg}.

\begin{proof}[Proof of Lemma \ref{phi_to_psi_refined}]
We take $\psi$ as in the proof of Proposition \ref{phi_to_psi}. It is established there that
\begin{itemize}
\item $\psi$ is continuous, prefix-monotone  (hence positionally determined) and shift-deterministic;
\item $\vphi = g\circ \psi$ for some non-decreasing $g\colon\psi(A^\omega) \to\mathbb{R}$.
\end{itemize}
This information is sufficient to show that
\[\big[\exists w\in A^*\,\, \vphi(w\alpha) > \vphi(w\beta)\big] \implies \psi(\alpha) > \psi(\beta).\]
Indeed, if $g\circ \psi(w\alpha) = \vphi(w \alpha)  > \vphi(w\beta) = g\circ \psi(w\beta)$  for some $w\in A^*$, then we also have $\psi(w\alpha) > \psi(w\beta)$, because $g$ is non-decreasing. Due to prefix-monotonicity of $\psi$, we also have $\psi(\alpha) \ge \psi(\beta)$. It remains to demonstrate that $\psi(\alpha) \neq \psi(\beta)$. Indeed, $\psi(\alpha) = \psi(\beta) \implies \psi(w\alpha) = \psi(w\beta)$ because $\psi$ is shift-deterministic.

To demonstrate that 
\[\psi(\alpha) > \psi(\beta) \implies \big[\exists w\in A^*\,\, \vphi(w\alpha) > \vphi(w\beta)\big]\]
we have to recall the construction of $\psi$.
By \eqref{psi_def}, we can write:
\[\psi(\alpha) - \psi(\beta) = \sum\limits_{w\in A^*} \left(\frac{1}{|A| + 1}\right)^{|w|} \big[\vphi(w\alpha) - \vphi(w\beta)\big]\]
If $\vphi(w\alpha) \le \vphi(w\beta)$ for all $w\in A^*$, then clearly $\psi(\alpha) \le \psi(\beta)$. This is exactly the contraposition to the implication that we have to prove.
\end{proof}

\subsection{Recursively local-global functions}
\label{subsec:rlg}

Fix $d\in\mathbb{N}$. A $d$-dimensional \emph{structure} is a collection $\mathcal{S} = \{S_i\}_{i = 1}^d$ of  $d$ non-empty finite sets $S_1, S_2, \ldots, S_d$. \emph{Vertices} of $\mathcal{S}$ are elements of the Cartesian product $\prod_{i = 1}^d S_i$. Two vertices $\sigma = (\sigma_1, \ldots, \sigma_d), \sigma^\prime = (\sigma_1^\prime, \ldots, \sigma_d^\prime) \in \prod_{i = 1}^d S_i$ of a structure $\mathcal{S}$ are called \emph{neighbors} if there is exactly one $i\in\{1, 2, \ldots, d\}$ such that $\sigma_i \neq \sigma_i^\prime$.
A structure $\mathcal{S}^\prime = \{S_i^\prime\}_{i = 1}^d$ is a \emph{substructure} of a structure $\mathcal{S} = \{S_i\}_{i = 1}^d$ if $S_i^\prime\subseteq S_i$ for every $i\in \{1, 2, \ldots, d\}$.



Let $\mathcal{S}$ be a structure and $f$ be a function from the set of vertices of $\mathcal{S}$ to $\mathbb{R}$. A vertex $\sigma$ of $\mathcal{S}$ is called a \emph{local} maximum of $f$ if $f(\sigma) \ge f(\sigma^\prime)$ for every neighbor $\sigma^\prime$ of $\sigma$ in $\mathcal{S}$. A vertex $\sigma$ is called a \emph{global} maximum of $f$ if $f(\sigma) \ge f(\sigma^\prime)$ for every vertex $\sigma^\prime$ of $\mathcal{S}$. The function $f$ is called \emph{local-global} if all its local maxima are global. The function $f$ is called \emph{recursively local-global} if all its restrictions to substructures of $\mathcal{S}$ are local-global.

Given a structure $\mathcal{S}$ and a function $f$ from the set of vertices of $\mathcal{S}$ to $\mathbb{R}$, we are interested in finding a global maximum of $f$. In~\cite{bjorklund2005combinatorial} Bj{\"o}rklund and Vorobyov obtained the following result.

\begin{thmC}[{\cite[Theorem 5.1]{bjorklund2005combinatorial}}]
\label{thm:BV}
Let $\mathcal{S} = \{S_i\}_{i = 1}^d$ be a $d$-dimensional structure and $f\colon \prod_{i = 1}^d S_i \to\mathbb{R}$ be a recursively local-global function. 

Consider an oracle which, given two vertices $\sigma^1$ and $\sigma^2$ of $\mathcal{S}$ that are neighbors of each other, compares $f(\sigma^1)$ and $f(\sigma^2)$. There is a randomized algorithm which find a  global maximum of $f$ with this oracle in expected
\[e^{O\left(\log m + \sqrt{d\log m}\right)} \mbox{ time},\]
where $m = \sum_{i = 1}^d |S_i|$.
\end{thmC}

\subsection{Deriving Theorem \ref{cont_alg} with Assumptions \ref{ass1} and \ref{ass2}} 
\label{subsec:deriving}
Let $G = \langle V, V_\Max, V_\Min, E\rangle$ be an $A$-labeled game graph in which we want to solve the strategy synthesis. We will only show how to find an optimal positional strategy of Max, the argument for Min is similar.

 Let $d = |V_\Max|$ and $V_\Max = \{u_1, u_2, \ldots, u_d\}$. Define $S_i = \{e\in E\mid \source(e) = u_i\}$. Consider a structure $\mathcal{S} = \{S_i\}_{i = 1}^d$. Obviously, we may identify vertices of $\mathcal{S}$ with positional strategies of $\Max$. Define
\[f\colon \prod_{i = 1}^d S_i \to\mathbb{R}, \qquad f(\sigma) = \sum\limits_{u\in V} \Val[\sigma](u).\]

\begin{lem}
Any global maximum of $f$ is an optimal positional strategy of Max.
\end{lem}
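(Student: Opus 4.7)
The plan is to give a short proof by contradiction, using Proposition \ref{local_imp} as the main hammer. Suppose $\sigma^*$ is a global maximum of $f$ but is \emph{not} an optimal positional strategy of Max in $G$. Since $\vphi$ is continuous, positionally determined and, by Assumption \ref{ass1}, shift-deterministic, the first bullet of Proposition \ref{local_imp} applies: there exists a positional strategy $\sigma'$ of Max in $G$ that differs from $\sigma^*$ in exactly one node of $V_\Max$ and satisfies
\[
\sum_{u \in V} \Val[\sigma'](u) \;>\; \sum_{u \in V} \Val[\sigma^*](u).
\]

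Next I would translate this into the language of the structure $\mathcal{S}$. By our identification of positional strategies of Max with vertices of $\mathcal{S}$, the strategy $\sigma'$ is a vertex of $\mathcal{S}$, and the fact that it differs from $\sigma^*$ in exactly one coordinate (the one corresponding to the unique node of $V_\Max$ on which they disagree) means that $\sigma'$ is a neighbor of $\sigma^*$ in $\mathcal{S}$. The displayed inequality above is exactly $f(\sigma') > f(\sigma^*)$, which contradicts the assumption that $\sigma^*$ is a global maximum of $f$. Hence every global maximum of $f$ must be optimal, which is what we had to prove.

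There is no real obstacle here; the whole content of the lemma is encapsulated in Proposition \ref{local_imp}, and the present statement is essentially a reformulation of it in the combinatorial language of the structure $\mathcal{S}$ that we need in order to invoke Theorem \ref{thm:BV}. The only thing worth being careful about is to make sure the identification between vertices of $\mathcal{S}$ and Max's positional strategies is unambiguous, so that ``differ in exactly one node of $V_\Max$'' on the strategy side literally matches ``differ in exactly one coordinate'' on the structure side, which is immediate from the definition $S_i = \{e \in E \mid \source(e) = u_i\}$.
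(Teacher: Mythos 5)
Your proof is correct, but it takes a different route from the paper's. The paper argues directly from positional determinacy: it takes any optimal positional strategy $\sigma^*$ of Max (which exists by Assumption \ref{ass1}), notes that optimality gives the pointwise domination $\Val[\sigma^*](u) \ge \Val[\sigma](u)$ for every $u\in V$ and every global maximum $\sigma$ of $f$, and then observes that since $\sigma$ maximizes the \emph{sum} of values, these inequalities must all be equalities, so $\sigma$ is itself optimal. Your argument instead invokes Proposition \ref{local_imp} as a hammer: a non-optimal global maximum would admit a single-switch neighbor with strictly larger $f$-value, contradicting (even local, hence) global maximality. Both are sound. The paper's version is more elementary and does not depend on the strategy improvement machinery at all, reserving Proposition \ref{local_imp} for the second lemma (that $f$ is recursively local-global); your version unifies the two lemmas under the same tool and in fact shows the stronger statement that every \emph{local} maximum of $f$ is optimal, which is anyway implied once the local-global property is established. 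One small point worth making explicit in your write-up: to contradict global maximality you only need \emph{some} vertex with larger $f$-value, so the fact that $\sigma'$ is a neighbor is not actually needed for this particular lemma, though it costs nothing.
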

\begin{proof}
Let $\sigma$ be a global maximum of $f$ and $\sigma^*$ be any unformly optimal positional strategy of Max. By uniform optimality of $\sigma^*$, we have $\Val[\sigma^*](u) \ge \Val[\sigma](u)$ for every $u\in V$. On the other hand,  $\sigma$  maximizes the sum of the values (over all positional strategies of Max), so we must have $\Val[\sigma^*](u) = \Val[\sigma](u)$ for every $u\in V$. This means that $\sigma$ is also optimal.
\end{proof}
\begin{lem} The function $f$ is recursively local-global.
\end{lem}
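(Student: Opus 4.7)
The plan is to exhibit a natural correspondence between substructures of $\mathcal{S}$ and subgames of $G$, and then apply Proposition \ref{local_imp} to each such subgame. Specifically, any substructure $\mathcal{S}' = \{S_i'\}_{i=1}^d$ of $\mathcal{S}$ (where each $S_i' \subseteq S_i$ is non-empty) corresponds to the $A$-labeled game graph $G' = \langle V, V_\Max, V_\Min, E'\rangle$ obtained from $G$ by keeping every edge whose source lies in $V_\Min$ and keeping only the edges in $\bigcup_i S_i'$ among those whose source lies in $V_\Max$. The non-emptiness of each $S_i'$ guarantees that every node of $G'$ retains a positive out-degree, so $G'$ is a valid game graph. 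Vertices of $\mathcal{S}'$ are exactly the positional strategies of Max in $G'$.

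The crucial observation is that for any vertex $\sigma$ of $\mathcal{S}'$ and any $u \in V$, the value $\Val[\sigma](u)$ computed in $G$ coincides with the value of $\sigma$ computed in $G'$. This is because $\Cons(u,\sigma)$ depends only on which edges are consistent with $\sigma$, and removing unused edges from $V_\Max$ does not affect $\Cons(u,\sigma)$. Consequently, the restriction $f|_{\mathcal{S}'}$ is literally the function ``sum of values of positional Max strategies'' for the game graph $G'$.

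Now I would argue that any local maximum $\sigma$ of $f|_{\mathcal{S}'}$ is a global maximum. Since $\vphi$ is continuous, positionally determined, and shift-deterministic (Assumption \ref{ass1}), Proposition \ref{local_imp} applies to $G'$: if $\sigma$ were non-optimal in $G'$, there would exist a positional Max strategy $\sigma'$ in $G'$ differing from $\sigma$ in exactly one node of $V_\Max$ with $f(\sigma') > f(\sigma)$. But this $\sigma'$ would be a neighbor of $\sigma$ in $\mathcal{S}'$, contradicting local maximality. Hence $\sigma$ is optimal in $G'$, which means $\Val[\sigma](u) \ge \Val[\sigma''](u)$ for every vertex $\sigma''$ of $\mathcal{S}'$ and every $u \in V$; summing over $u$ yields $f(\sigma) \ge f(\sigma'')$, so $\sigma$ is a global maximum of $f|_{\mathcal{S}'}$.

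There is no real obstacle here: the argument is essentially a routine verification that Proposition \ref{local_imp} lifts to subgames. The only thing to be mindful of is ensuring that $f|_{\mathcal{S}'}$ truly equals the ``sum of values'' function for $G'$ (so that Proposition \ref{local_imp} can be invoked), which follows from the observation about $\Cons(u,\sigma)$ being insensitive to removing unused Max edges.
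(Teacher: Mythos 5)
Your proof is correct and follows essentially the same route as the paper: the paper also reduces the claim to Proposition \ref{local_imp} applied to the subgraphs of $G$ that correspond to substructures of $\mathcal{S}$. You simply spell out the details the paper leaves implicit (the explicit construction of $G'$ and the check that $\Val[\sigma]$, hence $f$, is unchanged when unused Max edges are removed), and these verifications are accurate.
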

\begin{proof}
A fact that $f$ is local-global is a simple consequence of Proposition \ref{local_imp} (note that by Assumption \ref{ass1}, our payoff satisfies the requirements of this proposition). Indeed, a strategy $\sigma$ which is not a global maximum of $f$ cannot be optimal. Then take $\sigma^\prime$ as in Proposition \ref{local_imp}. It is a neighbor of $\sigma$ with $f(\sigma^\prime) > f(\sigma)$, so $\sigma$ cannot be a local maximum  as well.

To show that $f$ is recursively local-global, it is sufficient to note that substructures of $\mathcal{S}$ correspond to subgraphs of $G$, and for these subgraphs we also have Proposition \ref{local_imp}.
\end{proof}

Due to these two lemmas, if we run the algorithm from Theorem \ref{thm:BV}, we get an optimal positional strategy of Max in expected
\[e^{O\left(\log m + \sqrt{d\log m}\right)} \mbox{ time},\]
where
$m = \sum_{i = 1}^d |S_i|$. Note that $d$ does not exceed the number of nodes of $G$ and $m$ does not exceed the number of edges, so Theorem \ref{cont_alg} follows.

Still, the algorithm from Theorem \ref{thm:BV} requires an oracle which, given any two vertices $\sigma^1$ and $\sigma^2$ of $\mathcal{S}$ that are neighbors of each other, compares $f(\sigma^1)$ and $f(\sigma^2)$. In our case, this oracle, given two positional strategies $\sigma^1, \sigma^2$ of Max that differ from each other in exactly one node, compares the sums of their values:
\[\sum\limits_{u\in V} \Val[\sigma^1](u), \qquad \sum\limits_{u\in V} \Val[\sigma^2](u).\]
We have to perform this comparison using the oracle from Assumption \ref{ass2}.

Assume that the node where $\sigma^1$ and $\sigma^2$ differ is $v$.
Let $G^{1}$ (respectively, $G^{2}$) be a game graph obtained from $G$ by deleting all edges that are not consistent with $\sigma^1$ (resp., $\sigma^2$). Next, let $G^{1,2}$ be a game graph of all edges that appear either in $G^{1}$ or in $G^{2}$.

Observe that in $G^{1,2}$, strategies $\sigma^1, \sigma^2$ are the only two positional strategies of Max (indeed, all nodes of Max except $v$ have exactly one out-going edge in $G^{1,2}$, and $v$ has exactly two).
One of these strategies must be optimal in $G^{1,2}$. So either $\Val[\sigma^1](u) \ge \Val[\sigma^2](u)$ for all $u\in V$ or $\Val[\sigma^1](u) \le \Val[\sigma^2](u)$ for all $u\in V$. This means that 
 \begin{align*}
&\sum\limits_{u\in V} \Val[\sigma^1](u) > \sum\limits_{u\in V} \Val[\sigma^2](u) \\ &\iff \exists u\in V\,\, \Val[\sigma^1](u) > \Val[\sigma^2](u).
\end{align*}
So our task reduces to a task of comparing $\Val[\sigma^1](u)$ and $\Val[\sigma^2](u)$ for $u\in V$.

Assume first that our game graph $G$ is \emph{one-player}. This means that for one of the players it holds that all nodes of this player have out-degree 1. In our case, this must be Min, because Max has two distinct positional strategies $\sigma^1$ and $\sigma^2$. In particular, there is exactly one strategy $\tau$ of Min in $G$, and this strategy is positional (even if there are no nodes controlled by Min, we assume that Min has a unique empty strategy $\tau$). Hence, $\Val[\sigma^1](u) = \vphi\circ\lab\big(\mathcal{P}^{\sigma^1,\tau}_u\big)$ and $\Val[\sigma^2](u) = \vphi\circ\lab\big(\mathcal{P}^{\sigma^2,\tau}_u\big)$. It remains to compare the value of $\vphi$ on  $\lab\big(\mathcal{P}^{\sigma^1,\tau}_u\big)$ and on $\lab\big(\mathcal{P}^{\sigma^2,\tau}_u\big)$ using the oracle from Assumption \ref{ass2}. These two infinite words are written over some lassos in $G$, so we can decompose them as $\lab\big(\mathcal{P}^{\sigma^1,\tau}_u\big) = u(v)^\omega$ and $\lab\big(\mathcal{P}^{\sigma^2,\tau}_u\big) = a(b)^\omega$ in polynomial time.

 Theorem \ref{cont_alg} is already proved for one-player game graphs. Hence, at the cost of increasing the expected running time by a factor of $e^{O\left(\log m+ \sqrt{n\log m}\right)}$, we may assume that we also have an oracle which can solve the strategy synthesis for $\vphi$ in one-player game graphs. Then we can find an optimal positional strategy $\tau^1$  of Min in $G^{1}$ and an optimal  positional strategy $\tau^2$  of Min in $G^{2}$. Indeed, in these two graphs all nodes of Max have exactly one out-going edge. Observe that $\tau^1$ is an optimal response to $\sigma^1$ and $\tau^2$ is an optimal response to $\sigma^2$, so we have:
\[\Val[\sigma^1](u) = \vphi\circ\lab\big(\mathcal{P}^{\sigma^1,\tau^1}_u\big), \qquad \Val[\sigma^2](u) = \vphi\circ\lab\big(\mathcal{P}^{\sigma^2,\tau^2}_u\big).\]
It remains to compare the value of $\vphi$ on $\lab\big(\mathcal{P}^{\sigma^1,\tau^1}_u\big)$ and $\lab\big(\mathcal{P}^{\sigma^2,\tau^2}_u\big)$. We can do this via the oracle from Assumption \ref{ass2}.

\section{Multi-discounted Payoffs and MDPs}
\label{sec:mdp}
In this section, we establish the following result.
\begin{thm}
\label{thm:mdp}
Let $A$ be a finite set and $\vphi\colon A^\omega\to\mathbb{R}$ be a continuous payoffs. Then $\vphi$ is positionally determined in MDPs if and only if $\vphi$ is multi-discounted.
\end{thm}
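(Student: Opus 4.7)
The easy direction---multi-discounted implies positional determinacy in MDPs---is the classical Shapley--Howard argument. If $\vphi$ is multi-discounted with $\sh[a, \vphi](x) = \lambda(a) x + w(a)$ and all $\lambda(a) \in [0,1)$, then for any MDP $\mathcal{M} = \langle S, \act, \lab\rangle$ the Bellman operator $(TV)(s) = \max_{(s,P) \in \act}\big[w(\lab) + \lambda(\lab) \sum_{s'} P(s') V(s')\big]$ is a max-norm contraction on $\mathbb{R}^S$ with constant $\max_a \lambda(a) < 1$. Its unique fixed point $V^*$ equals the value function, and any positional strategy choosing at each $s$ an action realizing the maximum in $V^*(s)$ is optimal. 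This works equally well for two-player stochastic games, as mentioned in the introduction.

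For the converse, assume $\vphi$ is continuous and positional in MDPs. Since deterministic one-player games embed as MDPs (with all action distributions supported on single states), $\vphi$ is positional in deterministic one-player games, hence prefix-monotone by Theorem~\ref{thm:char}. By Proposition~\ref{phi_to_psi} and Theorem~\ref{exhaustive_method}, we may decompose $\vphi = g \circ \psi$ where $\psi$ is continuous, shift-deterministic, and induced by a non-decreasing contracting base $(f_a)_{a \in A}$ on $K = \psi(A^\omega)$, with $g$ continuous non-decreasing. By Proposition~\ref{disc_char}, it then suffices to establish: \textbf{(i)} each $f_a$ is the restriction to $K$ of an affine map $x \mapsto \lambda(a) x + w(a)$ with $\lambda(a) \in [0,1)$; \textbf{(ii)} $K$ is a compact interval; and \textbf{(iii)} $g$ is affine (so that its composition with $\psi$ can be absorbed into a redefinition of $\lambda$ and $w$). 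Items \textbf{(ii)} and \textbf{(iii)} will follow by variations of the argument for \textbf{(i)}: once the $f_a$ are affine non-decreasing, the smallest closed $(f_a)$-invariant set of values is a compact interval, and a non-affinity of $g$ would be detectable by a further MDP construction that translates the obstruction on $g$ back into one on $\vphi = g\circ \psi$ via the same expectation-linearity idea described below.

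The crux is \textbf{(i)}. Suppose for contradiction some $f_a$ is non-affine: there exist $x, y \in K$ and $p \in (0,1)$ with $px + (1-p)y \in K$ yet $f_a(px + (1-p)y) \neq p f_a(x) + (1-p) f_a(y)$. Pick ultimately periodic words $\alpha, \beta, \delta \in A^\omega$ with $\psi(\alpha) = x$, $\psi(\beta) = y$, $\psi(\delta) = px + (1-p)y$. The plan is to construct an MDP centred on a ``hub'' state $s$ with two $a$-labelled actions: a stochastic action feeding with probabilities $p, 1-p$ into deterministic sub-MDPs that produce $\alpha$ and $\beta$, and a deterministic action into a sub-MDP that produces $\delta$. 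Their expected $\psi$-values are $p f_a(x) + (1-p) f_a(y)$ and $f_a(px + (1-p)y)$, respectively; by hypothesis these differ. A single side-by-side comparison is not yet a violation, since both are values of positional strategies and locally one of them is optimal. The construction must therefore embed the hub into a recursive stochastic structure---for example, under a randomised self-loop that causes the hub to be visited arbitrarily many times, while allowing a history-dependent strategy to alternate its choice at the hub across visits. Through the compounded non-affinity of $f_a$, such a history-dependent strategy is shown to strictly exceed the expected value of every positional strategy, contradicting MDP positional determinacy of $\vphi$.

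The principal obstacle is precisely this amplification step: the naive one-shot comparison between two hub actions is neutralised by linearity of expectation---any mixture lies between the pure values---so the non-affinity of $f_a$ does not locally break positional optimality. The technical work will be to design the surrounding MDP (iterated self-loops, or a careful combination of stochastic transitions coupling multiple copies of the hub) so that the small per-visit discrepancy introduced by $f_a$'s non-affinity integrates into a strict global gap between the best positional value and the optimal history-dependent value. Once this is done, \textbf{(i)} is established; then \textbf{(ii)} and \textbf{(iii)} follow by variants of the same argument, and Proposition~\ref{disc_char} concludes that $\vphi$ is multi-discounted.
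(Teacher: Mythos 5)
The easy direction and the overall strategy of forcing affineness of the shift maps are on the right track, but your argument for the converse has a genuine gap exactly where you flag it: the ``amplification step'' is never carried out, and in fact it is not the right move. The key idea you are missing is that positional determinacy is \emph{uniform} over starting states, so no amplification is needed. The paper formulates the obstruction as follows (Proposition \ref{dist_nec}): there must be no $a\in A$, words $\beta,\gamma,\delta$, and probability vectors $(p_1,p_2,p_3)$, $(q_1,q_2,q_3)$ such that the $p$-expectation of $\vphi$ over $\{\beta,\gamma,\delta\}$ beats the $q$-expectation, while after prepending $a$ the inequality reverses. This is witnessed by an MDP with just \emph{two} relevant states: a hub $v$ with two actions (distributions $p$ and $q$ over three deterministic lassos realizing $\beta,\gamma,\delta$), and a predecessor $u$ reaching $v$ by an $a$-labelled action. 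Of the two positional strategies, one is suboptimal starting from $u$ and the other starting from $v$ --- precisely the same mechanism as the deterministic counterexample in Section \ref{sec:results}. Your worry that ``a single side-by-side comparison is not yet a violation'' evaporates once you compare from two starting positions rather than one; no self-loops, no history-dependent strategies, and no integration of per-visit discrepancies are required.

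There is a second, more technical problem with your formulation of non-affinity. You ask for $x,y\in K$ and $p\in(0,1)$ with $px+(1-p)y\in K$, but $K=\psi(A^\omega)$ need not be convex, so the required word $\delta$ with $\psi(\delta)=px+(1-p)y$ may simply not exist. The paper sidesteps this (and also your items \textbf{(ii)} and \textbf{(iii)}, which you leave as handwaving) by working directly with $\vphi$ rather than with the decomposition $g\circ\psi$: Lemma \ref{next_lemma} shows that the absence of the probability-vector reversal forces
\[
\det \begin{pmatrix} 1 & 1 & 1 \\ \varphi(\beta) & \varphi(\gamma) & \varphi(\delta) \\ \varphi(a\beta) & \varphi(a\gamma) & \varphi(a\delta) \end{pmatrix} = 0
\]
for all triples $\beta,\gamma,\delta$ (otherwise one solves a linear system with right-hand side $(0,1,-1)$ and splits the solution into positive and negative parts to manufacture the forbidden $p$ and $q$), whence $\vphi(a\beta)=\lambda(a)\vphi(\beta)+w(a)$ for all $\beta$. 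The constraints $\lambda(a)\in[0,1)$ then follow from continuity and convergence of $\vphi(a^n\gamma)$, and the multi-discounted formula follows by iterating and passing to the limit. To repair your proof you should replace the hub-amplification plan with the two-state uniformity construction and replace the convex-combination notion of non-affinity with the determinant condition on triples.
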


This theorem disproves the following conjecture of Gimbert~\cite{gimbert2007pure}: ``Any payoff function which is positional for the class of non-stochastic one-player
games is positional for the class of Markov decision processes''. Indeed, by Proposition \ref{not_multi}, there exists a continuous positionally determined payoff which is not multi-discounted. By Theorem \ref{sec:mdp}, this payoff is not positionally determined in MDPs.

 A fact that multi-discounted payoffs are positionally determined in MDPs (and in two-player stochastic games as well) is classical~\cite{puterman2014markov}. In the rest of this section, we show that any continuous payoff which is positionally determined in MDPs is multi-discounted.
First, we establish the following two necessary conditions.
\begin{prop}
\label{dist_nec}
Let $A$ be a finite set and $\vphi\colon A^\omega\to\mathbb{R}$ be a continuous payoff which is positionally determined in MDPs. Then there are no $a\in A$, $\beta, \gamma, \delta\in A^\omega$, $(p_1, p_2, p_3), (q_1, q_2, q_3) \in [0, +\infty)^3$ such that $p_1 + p_2 + p_3 = q_1 + q_2 + q_3 = 1$ and 
\begin{align*}
p_1 \vphi(\beta) + p_2 \vphi(\gamma) + p_3 \vphi(\delta) &> q_1 \vphi(\beta) + q_2 \vphi(\gamma) + q_3 \vphi(\delta),\\
p_1 \vphi(a\beta) + p_2 \vphi(a\gamma) + p_3 \vphi(a\delta) &< q_1 \vphi(a\beta) + q_2 \vphi(a\gamma) + q_3 \vphi(a\delta).
\end{align*}
\end{prop}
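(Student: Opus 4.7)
The plan is to assume the negation and construct an $A$-labeled MDP in which no positional strategy can be optimal, contradicting the hypothesis that $\vphi$ is positionally determined in MDPs. Suppose $a\in A$, $\beta, \gamma, \delta \in A^\omega$ and distributions $(p_1, p_2, p_3), (q_1, q_2, q_3)$ witness the forbidden configuration. First I would use continuity (via Proposition \ref{cont_help}) to reduce to the case where $\beta, \gamma, \delta$ are ultimately periodic: approximating each by an ultimately periodic word sharing a sufficiently long prefix, continuity of $\vphi$ at $\beta, \gamma, \delta, a\beta, a\gamma, a\delta$ guarantees that both strict inequalities still hold after the replacement.

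Write the (now ultimately periodic) words as $\beta = b_1 b_2 b_3 \ldots$, $\gamma = c_1 c_2 c_3 \ldots$, $\delta = d_1 d_2 d_3 \ldots$. I would then construct the following MDP. Its state set contains a state $s$, a state $t$, and three disjoint ``generator'' chains starting at states $G_\beta, G_\gamma, G_\delta$. Each generator chain is a finite MDP (finite by ultimate periodicity) in which every state has a unique available action, arranged so that starting at $G_\beta$ (resp.~$G_\gamma$, $G_\delta$) the unique infinite history emits the label word $b_2 b_3 b_4 \ldots$ (resp.~$c_2 c_3 \ldots$, $d_2 d_3 \ldots$), i.e., the tail of the corresponding word. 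From $s$ there are exactly two available actions $P$ and $Q$, whose distributions assign probabilities $(p_1, p_2, p_3)$ and $(q_1, q_2, q_3)$, respectively, to $(G_\beta, G_\gamma, G_\delta)$. The labels on the six transitions out of these two actions are chosen as follows: both $(s, P) \to G_\beta$ and $(s, Q) \to G_\beta$ carry label $b_1$, the transitions into $G_\gamma$ carry $c_1$, and those into $G_\delta$ carry $d_1$. Finally, $t$ has a single available action whose distribution puts mass $1$ on $s$, and the corresponding transition into $s$ carries label $a$. Thus the only genuine strategic choice in the MDP is the action at $s$.

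The contradiction then falls out by a direct expected-payoff computation. From $s$, playing $P$ generates $\beta$, $\gamma$, $\delta$ with probabilities $p_1, p_2, p_3$, giving expected reward $p_1 \vphi(\beta) + p_2 \vphi(\gamma) + p_3 \vphi(\delta)$; playing $Q$ gives $q_1 \vphi(\beta) + q_2 \vphi(\gamma) + q_3 \vphi(\delta)$. The first hypothesized inequality makes $P$ strictly better at $s$. From $t$, the compulsory first transition prepends $a$ to every trajectory, so the expected payoffs of the two actions at $s$ become $p_1 \vphi(a\beta) + p_2 \vphi(a\gamma) + p_3 \vphi(a\delta)$ and $q_1 \vphi(a\beta) + q_2 \vphi(a\gamma) + q_3 \vphi(a\delta)$, respectively; the second hypothesized inequality makes $Q$ strictly better at $t$. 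Any positional strategy is determined by its action at $s$, so no positional strategy can simultaneously be optimal at $s$ and at $t$, contradicting positional determinacy in MDPs.

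The main thing to take care of is the bookkeeping for transition labels: since every MDP transition must carry a letter from $A$, one cannot simply ``append'' $\beta, \gamma, \delta$ to the choice at $s$ without that choice itself contributing a letter to the generated word. This is what forces me to split the first letter of each word off onto the transitions out of $(s, P)$ and $(s, Q)$ while having $G_\beta, G_\gamma, G_\delta$ emit only the tails. Aside from this design detail, the whole argument is a routine expected-reward calculation.
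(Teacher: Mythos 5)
Your proposal is correct and follows essentially the same route as the paper: reduce to ultimately periodic $\beta,\gamma,\delta$ by continuity, build an MDP with a forced $a$-labeled transition into a state that has exactly two actions realizing the distributions $(p_1,p_2,p_3)$ and $(q_1,q_2,q_3)$ over three lassos spelling $\beta,\gamma,\delta$, and observe that the two expected-payoff comparisons force the two positional strategies to disagree on which action is optimal. Your explicit handling of where the first letters $b_1,c_1,d_1$ live is exactly the label bookkeeping the paper performs implicitly when it labels the lasso edges starting at $v$.
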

\begin{prop}
\label{mdp_pm}
If a continuous payoff is positionally determined in MDPs, then this payoff is prefix-monotone.
\end{prop}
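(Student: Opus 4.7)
The plan is to reduce directly to the one-player, deterministic case that was already handled in the proof of the ``only if'' part of Theorem \ref{thm:char}. The key observation is that every Max-only deterministic game graph $G = \langle V, V_\Max, \emptyset, E\rangle$ embeds canonically into an MDP: take the state set $V$, and for each edge $e = (s, a, t) \in E$ introduce an action $(s, \delta_t)$, where $\delta_t$ is the Dirac distribution concentrated at $t$, with labeling $\lab((s, \delta_t), t) = a$. Under this embedding, positional strategies of the MDP are in obvious bijection with positional strategies of Max in $G$. Moreover, since all transition distributions are Dirac, the measure $\mathcal{P}_s^\sigma$ is supported on the unique infinite history corresponding to the unique $\sigma$-consistent path $\mathcal{P}^\sigma_s$ in $G$, so that
\[
\mathbb{E}\,\vphi\circ\lab\big(\mathcal{P}_s^\sigma\big) \;=\; \vphi\circ\lab\big(\mathcal{P}^\sigma_s\big) \;=\; \Val[\sigma](s),
\]
the value in the one-player game. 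Consequently, an optimal positional strategy in the MDP yields a uniformly optimal positional Max strategy in $G$.

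Granting this reduction, the proposition reduces to reproducing the argument of Section \ref{sec:results}. Assume for contradiction that a continuous $\vphi$ is positionally determined in MDPs but not prefix-monotone, so there are $u, v \in A^*$ and $\alpha, \beta \in A^\omega$ witnessing $\vphi(u\alpha) > \vphi(u\beta)$ and $\vphi(v\alpha) < \vphi(v\beta)$. Exactly as in the proof of the ``only if'' part of Theorem \ref{thm:char}, continuity of $\vphi$ lets us replace $\alpha$ and $\beta$ by ultimately periodic words $p q^\omega$ and $w r^\omega$ while preserving both strict inequalities. Consider the Max-only game graph of Figure \ref{gg}. Viewing it as an MDP via the embedding above, positional determinacy in MDPs supplies a positional Max strategy optimal for every starting state. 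But this graph admits only two positional Max strategies (choose $p$ or choose $w$ from $c$), and the two strict inequalities show that neither is simultaneously optimal from $a$ and from $b$, a contradiction.

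There is no real technical obstacle here; the proof is entirely a bookkeeping reduction to the deterministic setting, which is sound only because the witness game graph is one-player and all its transitions can be realized by Dirac distributions. The genuine difficulty of the section lies elsewhere, in Proposition \ref{dist_nec}, where randomization cannot be eliminated in this way and a properly stochastic MDP has to be constructed.
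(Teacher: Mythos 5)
Your proof is correct and is essentially the paper's own argument: the paper likewise observes that the witnessing game graph from the ``only if'' part of Theorem \ref{thm:char} is entirely Max-controlled, hence is a deterministic MDP, so failure of prefix-monotonicity already defeats positional determinacy in MDPs. Your explicit Dirac-distribution embedding just spells out the identification that the paper leaves implicit.
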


We also show that these two necessary conditions imply that $\vphi$ is multi-discounted. 

\begin{prop}
\label{mdp_technical}
Let $A$ be a finite set and $\vphi\colon A^\omega\to\mathbb{R}$ be a continuous prefix-monotone payoff. Assume that there are no $a\in A$, $\beta, \gamma, \delta\in A^\omega$, $(p_1, p_2, p_3), (q_1, q_2, q_3) \in [0, +\infty)^3$ such that $p_1 + p_2 + p_3 = q_1 + q_2 + q_3 = 1$ and 
\begin{align*}
p_1 \vphi(\beta) + p_2 \vphi(\gamma) + p_3 \vphi(\delta) &> q_1 \vphi(\beta) + q_2 \vphi(\gamma) + q_3 \vphi(\delta),\\
p_1 \vphi(a\beta) + p_2 \vphi(a\gamma) + p_3 \vphi(a\delta) &< q_1 \vphi(a\beta) + q_2 \vphi(a\gamma) + q_3 \vphi(a\delta).
\end{align*}
Then $\vphi$ is a multi-discounted payoff.
\end{prop}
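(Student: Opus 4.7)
The plan is to show that the hypothesis forces an affine single-letter shift: for each $a \in A$ there exist constants $\lambda(a) \in [0, 1)$ and $\mu(a) \in \mathbb{R}$ with $\vphi(a\beta) = \lambda(a)\vphi(\beta) + \mu(a)$ for every $\beta \in A^\omega$. Iterating this identity and passing to the limit by continuity of $\vphi$ will then reproduce the multi-discounted formula \eqref{multi_disc} with weight function $w = \mu$.

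The core of the argument is a geometric reformulation. Fix $a \in A$ and $\beta, \gamma, \delta \in A^\omega$, and write $\mathbf{v} = (\vphi(\beta), \vphi(\gamma), \vphi(\delta))$ and $\mathbf{u} = (\vphi(a\beta), \vphi(a\gamma), \vphi(a\delta))$. Every nonzero $r \in \mathbb{R}^3$ with $r_1 + r_2 + r_3 = 0$ is a positive multiple of some $p - q$ with $p, q \in \Delta^3$ (take $p, q$ proportional to the positive and negative parts of $r$), so the forbidden configuration in the hypothesis is equivalent to the existence of $r$ in the $2$-dimensional hyperplane $H = \{r \in \mathbb{R}^3 : r_1 + r_2 + r_3 = 0\}$ with $r^T \mathbf{v} > 0$ and $r^T \mathbf{u} < 0$. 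The absence of such $r$ means that the linear functionals $r \mapsto r^T \mathbf{v}$ and $r \mapsto r^T \mathbf{u}$ on the $2$-dimensional space $H$ never take strictly opposite signs, which forces either one of them to be identically zero, or them to be non-negative scalar multiples of each other. Translating back via $H^\perp = \mathrm{span}\{(1,1,1)\}$: for every triple, either $\vphi(\beta) = \vphi(\gamma) = \vphi(\delta)$, or $\mathbf{u} = c\mathbf{v} + s(1,1,1)$ for some $c \geq 0$ and $s \in \mathbb{R}$. If $\vphi$ is constant we are done (take $\lambda \equiv 0$ and $\mu$ equal to the constant value); otherwise fix $\beta_0, \gamma_0 \in A^\omega$ with $\vphi(\beta_0) \neq \vphi(\gamma_0)$. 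For each $a \in A$ and each $\delta \in A^\omega$, the dichotomy applied to $(\beta_0, \gamma_0, \delta)$ yields constants $c, s$, and the first two coordinates uniquely determine them as functions $\lambda(a) := c \geq 0$ and $\mu(a) := s$ of $a$ alone; the third coordinate then reads $\vphi(a\delta) = \lambda(a)\vphi(\delta) + \mu(a)$ for every $\delta \in A^\omega$.

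It remains to verify $\lambda(a) < 1$ and iterate. By continuity, $\vphi(a^n \beta_0)$ and $\vphi(a^n \gamma_0)$ both converge to $\vphi(a^\omega)$, so their difference tends to $0$; but iterating the affine relation gives $\vphi(a^n \beta_0) - \vphi(a^n \gamma_0) = \lambda(a)^n\bigl(\vphi(\beta_0) - \vphi(\gamma_0)\bigr)$, forcing $\lambda(a) < 1$. Setting $\Lambda := \max_{a \in A}\lambda(a) < 1$ and iterating the single-letter shift produces
\[
\vphi(a_1 \ldots a_n \beta) = \Bigl(\prod_{k=1}^{n}\lambda(a_k)\Bigr)\vphi(\beta) + \sum_{k=1}^{n}\Bigl(\prod_{j=1}^{k-1}\lambda(a_j)\Bigr)\mu(a_k);
\]
the first summand is bounded by $\Lambda^n \sup_{A^\omega}|\vphi|$ and vanishes, while continuity of $\vphi$ gives $\vphi(a_1 \ldots a_n \beta) \to \vphi(a_1 a_2 a_3 \ldots)$, so the right-hand series converges to $\vphi(a_1 a_2 a_3 \ldots)$, which is exactly the form of \eqref{multi_disc}. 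The genuinely non-routine step is the geometric dichotomy in the second paragraph; once the affine shift is in hand, the remainder is a mechanical limit computation.
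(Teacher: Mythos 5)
Your proof is correct, and its skeleton is the same as the paper's: first force the affine single-letter shift $\vphi(a\beta)=\lambda(a)\vphi(\beta)+\mu(a)$, then pin $\lambda(a)$ to $[0,1)$, then iterate and pass to the limit by continuity. The difference lies in how the central affine-shift lemma is obtained, and your route is genuinely distinct and arguably cleaner. The paper fixes $\gamma,\delta$ with $\vphi(\gamma)\neq\vphi(\delta)$, shows the $3\times 3$ determinant with rows $(1,1,1)$, $(\vphi(\beta),\vphi(\gamma),\vphi(\delta))$, $(\vphi(a\beta),\vphi(a\gamma),\vphi(a\delta))$ vanishes (otherwise a preimage of $(0,1,-1)$ is split into positive and negative parts to manufacture a violating pair of distributions), reads the affine relation off the resulting linear dependence, and then invokes prefix-monotonicity separately to get $\lambda(a)\ge 0$, followed by a two-case analysis ($\lambda=1$ versus $\lambda>1$) to rule out $\lambda(a)\ge 1$. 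Your reformulation --- that the hypothesis says the two functionals $r\mapsto r^{T}\mathbf{v}$ and $r\mapsto r^{T}\mathbf{u}$ on the $2$-dimensional hyperplane $\{r:\;r_1+r_2+r_3=0\}$ never take strictly opposite signs, hence are non-negatively proportional or one vanishes --- delivers the linear dependence \emph{and} the sign $c\ge 0$ in a single step. In particular your argument never uses prefix-monotonicity: the one-letter monotonicity the paper extracts from that hypothesis is already the special case of the main assumption where $p$ and $q$ are vertices of the simplex, so your proof shows the prefix-monotonicity hypothesis in the statement is redundant. Your $\lambda(a)<1$ step is also tighter, since taking the difference $\vphi(a^{n}\beta_0)-\vphi(a^{n}\gamma_0)=\lambda(a)^{n}(\vphi(\beta_0)-\vphi(\gamma_0))$ cancels the constant terms and avoids the paper's case split. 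The final limit computation is identical in both proofs.
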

Note that Proposition \ref{mdp_pm} is already proved. Indeed, in Section \ref{sec:results} we have shown that for any continuous payoff which is not prefix-monotone, there exists a game graph where $\vphi$ is not positional. This game graph had the following feature: all its nodes were controlled by Max. Thus, this game graph is a deterministic MDP, which means that any continuous payoff which is not prefix monotone is not positionally determined in MDPs.

To finish our proof of Theorem \ref{thm:mdp}, it remains to prove Propositions \ref{dist_nec} and \ref{mdp_technical}.
\subsection{Proof of Proposition \ref{dist_nec}}
Assume for contradiction that such $a, \beta, \gamma, \delta$, $(p_1, p_2, p_3)$ and $(q_1, q_2, q_3)$ exist. By the continuity of $\vphi$, we may assume that $\beta, \gamma$ and $\delta$ are ultimately periodic. We construct an $A$-labeled MDP $\mathcal{M}$ where $\vphi$ has no optimal positional strategy. To define $\mathcal{M}$, consider an $A$-labeled game graph from Figure \ref{fig:st_counter}.

\begin{figure}[h!]
\centering
\begin{tikzpicture}[gamegraph,x=7mm,y=7mm]
   \node[draw, circle] (u) {$u$};
   \node[draw, circle, right = 4 of u] (v) {$v$};
   \node[draw, circle, above right = 3 and 4 of v] (a) {};
   \node[draw, circle, right = 4 of v] (b) {};
   \node[draw, circle, below right = 3 and 4 of v] (c) {};

  \node[right = 2 of b] {};  

\draw[transition] (u) -- (v) node[midway, above] {$c$};
\draw[transition] (v) -- (a) node[near end, above left] {$\alpha$};
\draw[transition] (v) -- (b) node[near end, above] {$\beta$};
\draw[transition] (v) -- (c) node[near end, below left] {$\gamma$};

\path[transition]
  (a) edge [in=330,out=30,out distance=3cm,in distance=3cm] (a);
\path[transition]
  (b) edge [in=330,out=30,out distance=3cm,in distance=3cm]
    (b);
\path[transition]
  (c) edge [in=330,out=30,out distance=3cm,in distance=3cm] (c);

\end{tikzpicture}
  \caption{A graph for an MDP where $\varphi$ has no optimal positional strategy.}
\label{fig:st_counter}
\end{figure}
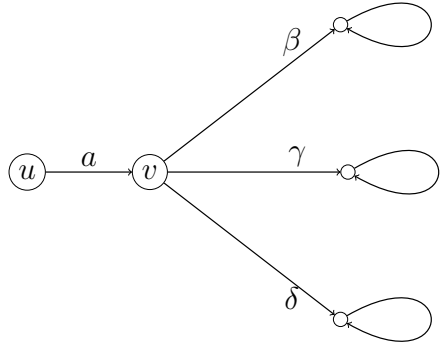

In this graph there are exactly 3 infinite paths (``lassos'') $\mathcal{P}_1$, $\mathcal{P}_2$, $\mathcal{P}_3$ that start at $v$. We label their edges in such a way that $\lab(\mathcal{P}_1) = \beta, \lab(\mathcal{P}_2) = \gamma, \lab(\mathcal{P}_3) = \delta$. This is possible because $\beta, \gamma$ and $\delta$ are ultimately periodic.

Next, we turn this graph into an MDP (formally, nodes of the graph will be states of the MDP).  There will be two actions available at the node $v$. Both will be distributed on the set of successors of $v$. One gives a probability $p_i$ to the successor which leads to the lasso $\mathcal{P}_i$, for $i = 1, 2, 3$. The other gives a probability $q_i$   to the successor which leads to the lasso $\mathcal{P}_i$, for $i = 1, 2, 3$.   For each node different from $v$ there will be only one action with the source in this node, leading with probability $1$ to its unique successor.

It remains to define the labeling function of $\mathcal{M}$. Fix a transition. It is over some edge of the graph from Figure \ref{fig:st_counter}. We define the label this transition as the label of this edge. This concludes the description of $\mathcal{M}$.

To show that $\vphi$ is not positional in $\mathcal{M}$, note that in $\mathcal{M}$ there are exactly $2$ positional strategies, $\sigma_p$ and $\sigma_q$, corresponding to two actions available at $v$. We show that none of these two strategies is optimal.

It is easy to see that:
\begin{align*}
\mathbb{E} \varphi \circ \lab\big(\mathcal{P}_u^{\sigma_p}\big) &= p_1 \cdot \varphi(a\beta) + p_2 \cdot \varphi(a\gamma) + p_3 \cdot \varphi(a\delta),\\
\mathbb{E} \varphi \circ \lab\big(\mathcal{P}_v^{\sigma_p}\big) &= p_1 \cdot \varphi(\beta) + p_2 \cdot \varphi(\gamma) + p_3 \cdot \varphi(\delta),\\
\mathbb{E} \varphi \circ \lab\big(\mathcal{P}_u^{\sigma_q}\big) &= q_1 \cdot \varphi(a\beta) + q_2 \cdot \varphi(a\gamma) + q_3 \cdot \varphi(a\delta),\\
\mathbb{E} \varphi \circ \lab\big(\mathcal{P}_v^{\sigma_q}\big) &= q_1 \cdot \varphi(\beta) + q_2 \cdot \varphi(\gamma) + q_3 \cdot \varphi(\delta).
\end{align*}
Due to our assumptions about $(p_1, p_2, p_3), (q_1, q_2, q_3)$, we obtain:
\[\mathbb{E} \varphi \circ \lab\big(\mathcal{P}_u^{\sigma_p}\big) < \mathbb{E} \varphi \circ \lab\big(\mathcal{P}_u^{\sigma_q}\big), \qquad\mathbb{E} \varphi \circ \lab\big(\mathcal{P}_v^{\sigma_p}\big) >  \mathbb{E} \varphi \circ \lab\big(\mathcal{P}_v^{\sigma_q}\big).\]
Therefore, neither $\sigma_p$ nor $\sigma_q$ is optimal.
\subsection{Proof of Proposition \ref{mdp_technical}}

If $\vphi(\gamma) = \vphi(\delta)$ for all $\gamma, \delta\in A^\omega$, then clearly $\vphi$ is multi-discounted (one can define $\lambda(a) = 0, w(a) = \vphi(\gamma)$ for all $a\in A$ and for an arbitrary $\gamma\in A^\omega$). In what follows, we fix any $\gamma, \delta\in A^\omega$ with $\vphi(\gamma) \neq\vphi(\delta)$.
First we derive from the conditions of Proposition \ref{mdp_technical} the following:
\begin{lem}
\label{next_lemma}
 For all $a\in A$ there exist $\lambda(a), w(a) \in \mathbb{R}$ such that for any $\beta \in A^\omega$ we have:
\[\varphi(a\beta) = \lambda(a) \varphi(\beta) + w(a).\]
\end{lem}
\begin{proof}
The following system in $(\lambda, w)$ has a unique solution:
\begin{equation}
\label{tag}
\begin{pmatrix} \vphi(a\gamma) \\ \vphi(a\delta) \end{pmatrix} = \begin{pmatrix} \vphi(\gamma) & 1 \\ \vphi(\delta) & 1 \end{pmatrix} \cdot \begin{pmatrix} \lambda \\ w \end{pmatrix},
\end{equation}
(because $\vphi(\gamma) \neq \vphi(\delta)$). Let its solution be $(\lambda(a), w(a))$. We show that $\vphi(a\beta) = \lambda(a) \vphi(\beta) + w(a)$ for all $\beta\in A^\omega$.
Let us first show that
\begin{equation}
\label{some_det}
\det \begin{pmatrix} 1 & 1 & 1 \\ \varphi(\beta) & \varphi(\gamma) & \varphi(\delta) \\ \varphi(a\beta) & \varphi(a\gamma) & \varphi(a\delta) \end{pmatrix} = 0.
\end{equation}
Indeed, otherwise there exists a vector $(x, y, z) \in \mathbb{R}^3$ such that 
\begin{equation}
\label{det_solution}
 \begin{pmatrix} 1 & 1 & 1 \\ \varphi(\beta) & \varphi(\gamma) & \varphi(\delta) \\ \varphi(a\beta) & \varphi(a\gamma) & \varphi(a\delta) \end{pmatrix} \cdot \begin{pmatrix} x \\ y \\ z \end{pmatrix} =  \begin{pmatrix} 0 \\ 1 \\ -1 \end{pmatrix}.
\end{equation}
 Let $P_1, P_2, P_3, Q_1, Q_2, Q_3$ be any positive real numbers such that $x = P_1 - Q_1, y = P_2 - Q_2, z = P_3 - Q_3$. From the first equality in \eqref{det_solution} it follows that  $P_1 + P_2 + P_3 = Q_1 + Q_2 + Q_3 = S > 0$.
Define $p_i = P_i/S, q_i = Q_i/S$ for  $i\in\{1, 2, 3\}$. Observe that $a, \beta, \gamma, \rho, (p_1, p_2, p_3), (q_1, q_2, q_3)$ violate the conditions of Proposition \ref{mdp_technical} (this can be seen from the second and the third equalities in \eqref{det_solution}), contradiction. Therefore \eqref{some_det} is proved.

The first two rows of the matrix from \eqref{some_det} are linearly independent because $\vphi(\gamma) \neq \vphi(\delta)$. Hence, the third one must be a linear combination of the first two. I.e., there must exist $\lambda, w\in \mathbb{R}$ such that 
\[(\varphi(a\beta), \varphi(a\gamma),  \varphi(a\delta)) = \lambda (\varphi(\beta), \varphi(\gamma), \varphi(\delta)) + w (1, 1, 1).\]
From the second and the third coordinate we conclude that $(\lambda, w)$ must be a solution to \eqref{tag}, so $\lambda = \lambda(a), w = w(a)$. Now, looking at the first coordinate, we obtain that $\vphi(a\beta) = \lambda(a)\vphi(\beta) + w(a)$, as required.
\end{proof}

From now on let $\lambda(a), w(a)$ for $a\in A$ be as in Lemma \ref{next_lemma}. Let us show that $\lambda(a)\in [0, 1)$ for all $a\in A$.

 Assume first that for some $a\in A$ we have $\lambda(a) < 0$. Without loss of generality, we may also assume that $\varphi(\gamma) < \varphi(\delta)$. Then $\varphi(a\gamma) = \lambda(a) \varphi(\gamma) + w(a)>\lambda(a) \varphi(\delta) + w(a)  = \varphi(a\delta)$. But $\vphi$ is prefix-monotone, so this is impossible.

Next, assume for contradiction that $\lambda(a) \ge 1$ for some $a\in A$. Consider the following two sequences $\{x_n\}_{n\in\mathbb{N}}$ and $\{y_n\}_{n\in\mathbb{N}}$ of real numbers: \[x_n = \varphi(\underbrace{aa\ldots a}_n \gamma), \qquad y_n = \varphi(\underbrace{aa\ldots a}_n \delta).\]
 Note that by our choice of $\gamma$ and $\delta$, we have $x_0 = \vphi(\gamma) \neq \vphi(\delta) = y_0$. Next, since $\varphi$ is continuous, we have:
\begin{equation}
\label{two_limits}
\lim\limits_{n \to\infty} x_n = \lim\limits_{n \to\infty} y_n = \varphi(aaa\ldots).
\end{equation}
On the other hand, we can compute $x_n$ and $y_n$ through Lemma \ref{next_lemma}:
\begin{align}
\label{directly_compute_1}
x_n &= \lambda(a)^n x_0 + w(a) (1 + \lambda(a) + \ldots + \lambda(a)^{n -1}),\\
\label{directly_compute}
 y_n &= \lambda(a)^n y_0 + w(a) (1 + \lambda(a) + \ldots + \lambda(a)^{n -1}).
\end{align}
First, consider the case $\lambda(a) = 1$. Then $x_n$ and $y_n$ look as follows:
\[x_n = x_0 + n w(a), \qquad y_n = y_0 + n w(a).\]
If $w(a) \neq 0$, then the sequences $\{x_n\}_{n\in\mathbb{N}}$ and $\{y_n\}_{n\in\mathbb{N}}$ are not convergent, contradiction with \eqref{two_limits}. If $w(a) = 0$, then one sequence converges to $x_0$, and the other to $y_0$. But $x_0 \neq y_0$, so this again gives a contradiction with \eqref{two_limits}.

Now, consider the case $\lambda(a) > 1$. Then we can rewrite (\ref{directly_compute_1}--\ref{directly_compute}) as follows:
\[x_n = \lambda(a)^n\left(x_0 + \frac{w(a)}{\lambda(a) - 1}\right) -  \frac{w(a)}{\lambda(a) - 1}, \qquad y_n = \lambda(a)^n\left(y_0 + \frac{w(a)}{\lambda(a) - 1}\right) -  \frac{w(a)}{\lambda(a) - 1}.\]
Since $x_0 \neq y_0$, the coefficient before $\lambda(a)^n$ is non-zero in at least one of these expressions. Hence either $\{x_n\}_{n\in\mathbb{N}}$ or $\{y_n\}_{n\in\mathbb{N}}$ diverge. This contradicts \eqref{two_limits}.

\bigskip

We have established that $\lambda(a) \in [0, 1)$ for every $a\in A$. All that remains to do is to show that $\varphi$ satisfies \eqref{multi_disc}. 
For that, we again employ the continuity of $\varphi$. Take any $\beta\in A^\omega$. Note that by Lemma \ref{next_lemma} we have:
\begin{equation}
\label{sorry}
\varphi(a_1 a_2 \ldots a_n \beta) = \lambda(a_1) \cdot \ldots \cdot  \lambda(a_{n}) \vphi(\beta) + \sum\limits_{i = 1}^n \lambda(a_1) \cdot \ldots \lambda(a_{i - 1}) \cdot w(a_i).
\end{equation}

We know that $\lambda(a_i)$ are all from $[0, 1)$. Since the set $A$ is finite, all $\lambda(a_i)$ are bounded from above by some number smaller than 1. Hence, the first term in the right-hand side of \eqref{sorry} converges to $0$ as $n\to\infty$. On the other hand, the second term in the right-hand side of \eqref{sorry} converges to the series from the right-hand side of \eqref{multi_disc}. Finally, due to the continuity of $\varphi$, the left-hand side of \eqref{sorry} converges to $\vphi(a_1 a_2 a_3\ldots)$. Thus, $\vphi$ is multi-discounted.

\section{Discussion}
\label{sec:disc}

As Gimbert and Zielonka show by their characterization of the class of positionally determined payoffs~\cite{gimbert2005games}, positional determinacy can always be proved by an inductive argument. Does the same hold for two other techniques that we have considered in the paper -- the fixed point technique and the strategy improvement technique?  Is it at least true for prefix-independent positionally determined payoffs? E.g., for the mean payoff, a major example of a prefix-independent positionally determined payoff, both the strategy improvement and the fixed point arguments are applicable~\cite{gurvich1988cyclic,kohlberg1980invariant}. 

These questions are specifically interesting for the strategy improvement argument. A fact that a non-optimal positional strategy can be improved by modifying it in a single node lies in the core of all subexponential-time algorithms for positionally determined payoffs~\cite{halman2007simple,bjorklund2005combinatorial}. So if we could always prove positional determinacy by strategy improvement, maybe we can also solve any positionally determined payoff in subexponential time?

Finally, it would be interesting to obtain an explicit description of other classes of positionally determined payoffs -- for example, of prefix-independent positionally determined payoffs.

\bibliographystyle{alphaurl}
\bibliography{ref}
\end{document}